\documentclass[11pt]{article}
\pdfoutput=1

\usepackage{./preamble}

\title{Quantum Sensing and Hamiltonian Learning under Stochastic Parameter Evolution}

\author[1]{Kelvin Koor\thanks{\tt kelvinkoor@u.nus.edu}}
\author[1,2]{Patrick Rebentrost\thanks{\tt cqtfpr@nus.edu.sg}}
\affil[1]{Centre for Quantum Technologies, NUS}
\affil[2]{School of Computing, NUS}
\date{\today}

\begin{document}
\maketitle

\begin{abstract} 
We introduce the notion of stochastically evolving Hamiltonians (SEHs), which are local Hamiltonians whose coefficients are governed by stochastic differential equations (SDEs). After developing some theoretical aspects of these objects, we consider the task of Hamiltonian learning for SEHs. We reanalyse and make necessary extensions to the Huang-Tong-Fang-Su protocol, which first achieved the Heisenberg limit for Hamiltonian coefficient learning, under these circumstances. We delineate theoretically the regimes where the Heisenberg limit can and cannot be retained. Along the way we develop a few tools in quantum sensing and Hamiltonian simulation that may be of independent interest under similar settings in quantum information processing. The broad aim is to encourage the exploration of various settings in quantum information processing in which the control parameters are governed by SDEs.
\end{abstract}

\newpage
\tableofcontents

\newpage
\section{Introduction}\label{section:introduction}

Stochastic differential equations (SDEs) \cite{baldi2017stochastic,bass2011stochastic,cohen2015stochastic,le2016brownian,evans2012introduction,oksendal2013stochastic} provide a mathematical means for describing dynamical systems subject to random perturbations. As such, they are ubiquitous in and beyond science and engineering. Example applications include finance \cite{shreve2004stochastic,jeanblanc2009mathematical,eberlein2019mathematical}, control and filtering \cite{bain2009fundamentals,xiong2008introduction,van2007stochastic,touzi2012optimal}, and more recently, generative AI \cite{song2020score,huang2021variational,dockhorn2021score,de2021diffusion,winther2026introduction}. It is natural then to consider the setting where such systems, which undergo \textit{stochastic evolution}, appear in quantum information processing. For instance, the Hamiltonian governing a quantum system depends on physical quantities such as 
coupling strengths and external fields, which may fluctuate during the execution of a protocol. How would this affect the various quantum information processing tasks involving such a quantum system?

Motivated by this question, in this work we investigate the task of Hamiltonian learning (from real-time dynamics), where the Hamiltonian involved is a local Hamiltonian whose parameters are governed by SDEs. Our choice of task is motivated by the central importance of Hamiltonian learning \cite{huangtong2023learning,hu2025ansatz,ma2024learning,bakshi2024structure,zhao2025learning,dutkiewicz2024advantage,flammia2026autonomous,shin2026heisenberg,stilck2024efficient,zubida2021optimal,bairey2019learning,francca2025learning,jimenez2026rigorous} in quantum technologies, such as quantum sensing/metrology and quantum computing. In quantum sensing \cite{degen2017quantum,giovannetti2011advances,toth2014quantum,szczykulska2016multi,gardner2025stochastic,gardner2026quantum,allen2025quantum,gong2026robust,wang2026entanglement,brady2026precision,ott2026rare,prabhu2026exponential}, the signals of interest are encoded in Hamiltonians. Learning these Hamiltonians efficiently enables us infer the signals faster, potentially improving the performance of a broad range of sensing applications. In quantum computing, accurately characterizing the unknown Hamiltonian is essential for device benchmarking, calibration and control \cite{carrasco2021theoretical,eisert2020quantum,innocenti2020supervised,shulman2014suppressing}, thereby informing the design and operation of quantum computers with lower error rates.

The central objects in this paper are Hamiltonians taking the form
\begin{equation}
\begin{aligned}
    H(t) = \sum_{i=1}^m \delta_i(t)H_i, \qquad d\bm{\delta}(t) = \bm{F}(\bm{\delta}(t),t)\,dt + \bm{G}(\bm{\delta}(t),t)\,d\bm{W}(t).
\end{aligned}
\end{equation}
Here $H(t)$ is a local Hamiltonian, i.e., the $H_i$'s have $\Theta(1)$-sized support. Informally, the $\bm{F}$ term characterizes the \textit{deterministic} evolution of $\bm{\delta}(t)$, and the $\bm{G}$ term characterizes the \textit{stochastic} evolution of $\bm{\delta}(t)$ driven by the Wiener process $\bm{W}(t)$. The time-ordered unitary propagator associated to $H(t)$ is $U(t) = \exp_\mathcal{T} \left( -i\int_0^t H(\tau)\,d\tau \right)$. Because of the randomness introduced by $d\bm{W}(t)$, $H(t)$ and $U(t)$ are random variables, or more precisely, (operator-valued) \textit{stochastic processes}. We shall call such Hamiltonians \textit{stochastically evolving Hamiltonians} (SEHs), and their propagators \textit{stochastically evolving unitaries} (SEUs). The (time-evolving) distribution of the parameters $\bm{\delta}(t)$ thus determines a distribution over $U(t)$ and through averaging, a quantum channel describing the dynamics of the quantum system. For non-commuting interaction terms, these dynamics generally depend on the entire parameter trajectory.

Having defined SEHs, we proceed to investigate a few quantum information processing tasks involving SEHs. We state the main results (on an informal level) in the next subsection.

\subsection{Setting and main results}

The Hamiltonians we consider in this paper are local Hamiltonians whose parameters are governed by stochastic differential equations. We formulate this notion precisely in \cref{definition:SEH/SEU}. For the convenience of the reader we also provide a concise but informative appendix on stochastic differential equations in \cref{appendix:stochastic_processes}. With SEHs defined, we first build up some theoretical aspects of these objects to aid in intuition and understanding. In \cref{proposition:channel_description_pdf_SEH/SEU} we describe analytically the quantum channel that describes the evolution of a quantum system governed by an SEH. The form is unwieldy in the general case where the interaction terms of the Hamiltonian are non-commuting, but simplifies considerably if they commute. We end the section with a full description of the baby case of a single-qubit SEH, given in \cref{example:channel_description_pdf_single_qubit_SEH/SEU}.

Having laid the groundwork, we proceed to investigate a few quantum information processing tasks involving SEHs. Our primary goal is SEH learning, i.e., how to learn the coefficients of an SEH. Our starting point is the paper of Huang et al.~\cite{huangtong2023learning}. We analyse how the presence of noise characterised by SDEs affects their protocol, and how to account for this type of stochasticity. The problem is stated formally in \cref{problem:SEH_Hamiltonian_learning}, which we reproduce here for the convenience of the reader.

\begin{tcolorbox}[
    colback=teal!5!white,
    colframe=teal!75!black,
]
\textbf{\cref{problem:SEH_Hamiltonian_learning}.} Given the low-intersection local SEH $H(t)= \sum_{i=1}^m \delta_i(t)H_i$, where the local parameters $\delta_i(t)$ are governed by the SDE
\begin{equation}\begin{aligned}
    d\bm{\delta}(t) &= \bm{G}\,d\bm{W}(t)\\
    \bm{\delta}(0) &= \bm{\delta}_0.
\end{aligned}\end{equation}
Equivalently, in component form this is
\begin{equation}\begin{aligned}
    d\delta_{i}(t) &= \sum_{j=1}^\chi G_{ij}\,dW_{j}(t) \quad \text{for } i=1,\dots,m\\
    \delta_{i}(0) &= \delta_{i0}.
\end{aligned}\end{equation}
Here, $\bm{\delta}_0$ and $\bm{G}$ are unknown, but it is given that $|\delta_{i0}| \leq \delta_{\text{max}}$ for all $i$, and $0 \leq G_{ij} \leq G_{\text{max}}$.
\\

\textbf{Goal.} Learn $\bm{\delta}_0$.
\end{tcolorbox}

For simplicity, we have considered the SDE where the drift matrix is $\bm{F}=\bm{0}$ and the diffusion matrix $\bm{G}$ is constant (does not depend on $\bm{\delta}(t)$ nor $t$). The protocol of \cite{huangtong2023learning} involves two key subroutines that are based on the tasks of single-qubit quantum sensing and Hamiltonian simulation. As such, they are investigated first. 

We begin with single-qubit quantum sensing for SEHs. The problem is described in \cref{problem:single_qubit_quantum_sensing} which we again reproduce here.

\begin{tcolorbox}[
    colback=teal!5!white,
    colframe=teal!75!black,
]
\textbf{\cref{problem:single_qubit_quantum_sensing}.} Given the single-qubit Hamiltonian $H(t)= \delta(t)\sigma_z$, where
\begin{equation}\begin{aligned}
    d\delta(t) &= G\,dW(t)\\
    \delta(0) &= \delta_0.
\end{aligned}\end{equation}
Here, $\delta_0$ and $G$ are unknown. It is known however that $|\delta_0| \leq \delta_{\text{max}}$ and $0 \leq G \leq G_{\text{max}}$.
\\

\textbf{Goal.} Learn $\delta_0$.
\end{tcolorbox}

Again for simplicity we let the drift term $F$ be zero and the diffusion term $G$ be constant. We have the following results:
\begin{enumerate}
    \item \cref{algorithm:extended_ramsey_protocol} extends the classic Ramsey protocol, allowing still the standard-quantum-limited estimation of $\delta_0$ when it is governed by SDEs.

    \item \cref{algorithm:extended_rfe} extends the Robust Frequency Estimation protocol (which is a key enabler of the Heisenberg-limited protocol in \cite{huangtong2023learning}), allowing for the estimation of $\delta_0$ when it is governed by SDEs. The Heisenberg limit is retained when the diffusion strength is sufficiently small, $G \lesssim \varepsilon^{3/2}$. 
    
    As a byproduct, we further note that in the case of \textit{static disorder} (with Gaussian pdf), \cref{algorithm:extended_rfe} gives us a \textit{Heisenberg-limited} estimation protocol even in the presence of noise of strength $G = O(1)$ (where $G^2$ is the variance of the Gaussian pdf, which characterises the strength of the disorder).
    
    \item \cref{algorithm:strong_extended_rfe} is a stronger variant of \cref{algorithm:extended_rfe}, which does not make the assumption of knowing $G_{\text{max}}$ a priori. As in \cref{algorithm:extended_rfe}, the Heisenberg limit is retained in \cref{algorithm:strong_extended_rfe} when the diffusion strength is sufficiently small, $G \lesssim \varepsilon^{3/2}$, and for $G=O(1)$ in the case of static disorder.
    
    \item \cref{table:quantum_sensing_algorithms} collects these three protocols and compares their resource complexities in detail, namely the maximum sample runtime, number of samples, and total runtime.
\end{enumerate}

Next we look at the task of Hamiltonian simulation. The relevant protocol to extend/analyse here is qDRIFT, which is used to implement `Hamiltonian reshaping', the second key enabler of the HTFS protocol. The problem is as follows:
\begin{tcolorbox}[
    colback=teal!5!white,
    colframe=teal!75!black,
]
\textbf{Problem.} Given the Hamiltonian $H(t)$ in \cref{problem:SEH_Hamiltonian_learning}, and the target Hamiltonian
\begin{equation}\begin{aligned}
    \widetilde{H}(t) = \sum_j w_j U_jH(t)U_j^\dag,
\end{aligned}\end{equation}
where the $U_j$ are unitaries. Assume the primitives are the ability to implement the time-dependent evolutions $\exp_\mathcal{T} \left( -i\int H(\tau)\,d\tau \right)$, and the unitaries $U_j$.
\\

\textbf{Goal.} Simulate $\widetilde{H}(t)$ based on the given primitives.
\end{tcolorbox}

We build things up one at a time, first incorporating time-dependence in $H(t)$, then stochasticity. The relevant results here are: 
\begin{enumerate}
    \setcounter{enumi}{4}
    \item \cref{algorithm:time-dependent_qDRIFT}, \cref{result:time-dependent_qDRIFT} extends qDRIFT to $H(t)$ which are time-dependent (but still deterministic). Note that this differs from existing protocols in the literature due to different allowed primitives. Operationally \cref{algorithm:time-dependent_qDRIFT} is similar to the original qDRIFT protocol, but now with a different primitive the analysis is more involved. 

    \item \cref{algorithm:SEH_qDRIFT}, \cref{result:SEH_qDRIFT} extends qDRIFT to SEHs. As with \cref{algorithm:time-dependent_qDRIFT}, it is operationally similar to the original qDRIFT but with SDEs in play the analysis is considerably more involved. It turns out that due to the diffusive term $\bm{G}$ driving the SDE, for a desired simulation precision $\varepsilon_{\text{qDRIFT}}>0$ the number of sampling steps required in qDRIFT (in this case, also the number of layers of interleaving unitaries $U_k$), $N_{\text{qDRIFT}}$ accrues an overhead factor of $O(t\chi^2)$.

    \item \cref{result:pointwise_SEH_qDRIFT}. \cref{result:SEH_qDRIFT} analyses the error bound for the simulation of SEHs, when \textit{viewed as a channel}, i.e. as an \textit{expectation} over many runs. Here we further provide a \textit{pointwise} error bound (with high probability), i.e. for a single run of the SEH. This result is not necessary for the SEH learning task however.
\end{enumerate}

Having made the necessary extensions to single-qubit sensing and Hamiltonian simulation (which may be of independent interest), we return to the main goal of SEH learning, \cref{problem:SEH_Hamiltonian_learning}. The high-level strategy of \cite{huangtong2023learning} remains unchanged, and our main task is to integrate the modified subroutines above to account for stochastic evolution.
\begin{enumerate}
    \setcounter{enumi}{7}
    \item \cref{result:SEH_Hamiltonian_learning} gives the complexity of this modified protocol, showing in particular that the Heisenberg limit is retained if $G_{\text{max}} \lesssim \left(\frac{\varepsilon^3}{2^{k}\chi}\right)^{1/2}$.
\end{enumerate}

\subsection{Discussion and outlook}

The purpose of this work is manifold. While we have focused much effort on the specific task of SEH learning (with SEH sensing and simulation as necessary byproducts), our broader aim is to encourage the exploration of other settings in quantum information processing in which the control parameters are governed by SDEs. That is the raison d'\^{e}tre of \cref{section:SEHs} and \cref{appendix:stochastic_processes}. We also hope that the insights and techniques developed in \cref{section:quantum_sensing,section:Hamiltonian_simulation,section:Hamiltonian_learning} would provide a foundation for the study of these other settings as well. To begin, we point out a few concrete directions for future work:
\begin{enumerate}
    \item \textbf{Fermionic/Bosonic SEHs.} Above our SEHs were defined for qubit Hamiltonians. An analogous definition could naturally be made for fermionic/bosonic Hamiltonians as well. In what settings do such fermionic/bosonic SEHs appear, and what are some natural tasks related to them?

    \item \textbf{Structure learning of SEHs.} The \textit{structure} learning of Hamiltonians \cite{hu2025ansatz,bakshi2024structure,zhao2025learning} has been an active area of research, with a few protocols successfully attaining the Heisenberg limit. How would the stochastic evolution of the Hamiltonian parameters affect these protocols?

    \item \textbf{Stochastically evolving Lindbladians.} Lindbladian learning \cite{arad2026near,chen2026learning,romanov2026learning,ivashkov2026ansatz,mobus2026robust,lewis2026learning} is currently also an extremely active area of research. It would be interesting to see how stochastic evolution in the dissipative part of the Lindbladian affects current protocols. The stochastic evolution could pertain to the Lindblad operators, dissipation rates, or both.

    \item \textbf{AC quantum sensing under stochastic evolution.} AC sensing, the task of detecting oscillating fields, is ubiquitous in science and engineering, with applications ranging from nuclear spectroscopy to gravitational wave detection to the searching for axionic dark matter. If the oscillating signal is governed by an SDE, how does this affect the quantum sensing protocols \cite{allen2025quantum,iemini2024floquet,mishra2022integrable,gribben2024boundary} attempting to learn the signal?

    \item \textbf{SEHs as a resource.} In this paper, we have largely adopted the view that stochasticity is an impediment in quantum information processing to be overcome. However, could this type of stochasticity be harnessed as a \textit{resource} in quantum information processing? In particular, \cite{onorati2017mixing} exploited the stochastic time-dependence (formulated in a different way from our SEHs) in their Hamiltonian to construct approximate unitary designs. For what other tasks can SEHs be utilised as a resource?
    
    \item \textbf{Other driving noise processes.} In this paper our SDEs are all driven by Wiener processes. There exists many other types of SDES, depending on the underlying driving processes. Examples of such processes include the L\'{e}vy process, Poisson process, and more adventurously, the \textit{fractional} Wiener process/Brownian motion (fBm). When do these show up in quantum information processing? In particular, the fBm is a (relatively analytically tractable) \textit{non-Markovian} process and may be relevant in certain settings.

\end{enumerate}

\section{Stochastically Evolving Hamiltonians and Unitaries}\label{section:SEHs}
In this section we first make precise the notion of stochastically evolving (local) Hamiltonians, and the corresponding unitaries they generate. Since these are the central objects in this article, we shall develop some aspects of the attendant framework. Not all the results stated (in their most general form) will be relevant to the Hamiltonian learning task below -- the primary purpose of the theory introduced here is to build up intuition and understanding in this paper, and to lay the groundwork for future investigations of quantum information processing tasks involving such objects.

First, recall that an $n$-qubit $k$-\textbf{local Hamiltonian} takes the form
\begin{equation}\label{equation:local_Hamiltonian}
\begin{aligned}
    H = \sum_{i\in E} \delta_iH_i
\end{aligned}
\end{equation}
where $E$ indexes the set of local interaction terms, each $H_i = H_{S_i} \otimes I_{\backslash {S_i}}$ acts nontrivially on at most $|S_i|=k$ qubits, and $m \coloneq |E| \leq \binom{n}{k} \leq n^k = O(\poly n)$ if $k=O(1)$. We can assume wlog that each $H_i$ is a ($k$-local) Pauli string since this incurs only an $O(1)$ local term overhead.

\begin{definition}\label{definition:SEH/SEU}
    A \textbf{stochastically evolving Hamiltonian (SEH)} takes the form
    \begin{equation}
    \begin{aligned}
        H(t) = \sum_{i=1}^m \delta_i(t)H_i
    \end{aligned}
    \end{equation}
    where the local parameters $\delta_i(t)$ are governed by SDEs (\cref{subsection:SDEs}). In the most general form, this is given by
    \begin{equation}\begin{aligned}
        d\bm{\delta}(t) &= \bm{F}(\bm{\delta}(t),t)\,dt + \bm{G}(\bm{\delta}(t),t)\,d\bm{W}(t)\\
        \bm{\delta}(0) &= \bm{\delta}_0.
    \end{aligned}\end{equation}
    Note that the parameters $\delta_i(t)$ are in general correlated, since they could be driven by overlapping Wiener processes $W_j(t)$'s.

    The \textbf{stochastically evolving unitary (SEU)} generated by $H(t)$ is the unitary satisfying the Schr\"{o}dinger equation $\frac{dU(t)}{dt} = -iH(t)U(t)$, i.e.
    \begin{equation}\label{equation:time-ordered_SEU}
    \begin{aligned}
        U(t) = \exp_\mathcal{T} \left( -i\int_0^t H(\tau)\,d\tau \right).
    \end{aligned}
    \end{equation}
    Because of the stochasticity of the $\delta_i(t)$'s, $H(t)$ and $U(t)$ are random variables. More precisely, they are \textit{operator-valued} stochastic processes.
\end{definition}

\begin{remark}\label{remark:potential_technicality}
Before discussing further let us first clear a potential technicality. We note that even with randomness the time-ordered unitary formalism still holds. That is, for a fixed $\omega \in \Omega$ we have that the solution to
\begin{equation}
\begin{aligned}
    \frac{dU(t,\omega)}{dt} = -iH(t,\omega)U(t,\omega).
\end{aligned}
\end{equation}
is still given by
\begin{equation}\label{equation:time-ordered_SEU_pointwise}
\begin{aligned}
    U(t,\omega) = \exp_\mathcal{T} \left( -i\int_0^t H(\tau,\omega)\,d\tau \right).
\end{aligned}
\end{equation}
Indeed, \cref{equation:time-ordered_SEU} is an equality of random variables\footnote{In practice we often write random variables without explicit dependence on $\omega$. When $\omega$ is explicitly written, it is usually during arguments where the pointwise properties of a random variable are being emphasized. We refer the reader to \cref{appendix:stochastic_processes} for the clarification of such subtleties.\label{footnote:on_rv}}, and its meaning is that \cref{equation:time-ordered_SEU_pointwise} holds for all $\omega$.

The reason this works without any further complications one might expect from stochastic calculus, such as It\^{o} correction terms (\cref{theorem:Ito_lemma}) if the chain rule for \textit{differentiation} is ever invoked, is because to arrive at this analytical form we apply Picard iteration, which ultimately entails repeated \textit{integrations} of $H(t,\omega)$ with respect to time. While $\delta_i(t,\omega)$ and $H(t,\omega)$ are not well-behaved in the differentiability sense -- indeed they are nowhere differentiable and have unbounded total variation (\cref{subsection:Wiener_process}) -- they are nonetheless continuous and have no singularities. Thus their time-integrals exist and are well-defined, the same is true of the time-integrals of the time-integrals, ad infinitum.
\end{remark}

\begin{remark}\label{remark:realizations_of_quantum_objects}
    As mentioned, for each $t$ the quantum objects $H(t), U(t)$ and $\mathcal{L}(t)$ (this is the Liouvillian superoperator we shall encounter very soon) are random variables. Let us now introduce the following notation for the \textit{realizations} of $H(t)/\mathcal{L}(t)/U(t)$: if $\bm{\delta}(t,\omega) =\bm{x} \in \mathbb{R}^m$, i.e. $\bm{x}$ is a realization for $\bm{\delta}(t)$ corresponding to the outcome $\omega$, then we write $H(t,\omega) = H(\bm{x})$ and similarly for the Liouvillians and unitaries. That is, $H(\bm{x}) = \sum_i x_iH_i$ is no longer a random variable. Note that for Hamiltonians and Liouvillians the realizations are $H(\bm{x}), \mathcal{L}(\bm{x})$, but for a general time-ordered unitary its realization is $U(\gamma)$, where $\gamma$ is an $\mathbb{R}^m$-valued \textit{path}. This is because while $H(t)$ and $\mathcal{L}(t)$ depend on $\omega$ through the single random variable $\bm{\delta}(t,\omega)$ (with fixed $t$), $U(t)$ generally depends on $\omega$ through the entire \textit{path} $\bm{\delta}(\omega) = (\delta(\tau,\omega))_{\tau \in [0,t]}$.
\end{remark}

For a quantum state $\rho$, its evolution under a Hamiltonian $H$ is given by $\rho(t) = U(t)\rho U(t)^\dag$, where $U(t)$ is the unitary generated by $H$. In the setting of this paper, the Hamiltonian of interest is an SEH, i.e., it is both time-dependent and random. Thus, for a realization $\omega$, we have $\rho(t,\omega) = U(t,\omega)\rho U(t,\omega)^\dag$, which is a random quantity and generally differs for different realizations. This motivates the description of the evolution of $\rho$ under SEHs by means of quantum channels. We denote this channel by $\mathcal{S}_t(\cdot)$ (omitting for brevity its dependence on the specific local Hamiltonian interaction structure and SDE structure $\bm{F}, \bm{G}$):
\begin{equation}\label{equation:channel_description_SEH/SEU}
\begin{aligned}
    \mathcal{S}_t(\rho) \coloneq \E[U(t)\rho U(t)^\dag].
\end{aligned}
\end{equation}
A priori, this expectation is to be taken over $\Omega$, the underlying source of randomness in our quantum system. We can recast the expectation over state space instead, in the spirit of \cref{equation:expectations_over_state_space}. Recall that the random variables of interest $(\delta_i(\tau))_{\tau \in [0,t]}$ are stochastic processes, i.e., path-valued. We have
\begin{equation}\label{equation:channel_description_state_space}
\begin{aligned}
    \mathcal{S}_t(\rho) &= \E_\Omega[U(t,\omega)\rho U(t,\omega)^\dag]\\
    &= \E_{C([0,t],\mathbb{R}^m)}[U(\gamma)\rho U(\gamma)^\dag]\\
    &= \int_{C([0,t],\mathbb{R}^m)} U(\gamma)\rho U(\gamma)^\dag\,\mathbb{P}(d\gamma)
\end{aligned}
\end{equation}
where $\mathbb{P}$ here is the distribution of the stochastic process $\bm{\delta} = (\delta(\tau))_{\tau \in [0,t]}$ on $C([0,t],\mathbb{R}^m)$, which depends on the SDE governing $\bm{\delta}$, see \cref{equation:path-valued_random_variable}. While \cref{equation:channel_description_state_space} sheds some light conceptually on $\mathcal{S}_t$, it is hardly useful both in theory and computation.

Instead of taking expectations with respect to path measures as in \cref{equation:channel_description_state_space}, we can recast it with respect to the multi-time joint pdf of $\bm{\delta}$ instead, which can be obtained from the Fokker-Planck equation, \cref{equation:Fokker-Planck}. Furthermore for commuting Hamiltonians only the single-time pdf is required. We have the following result.
\begin{proposition}\label{proposition:channel_description_pdf_SEH/SEU}
    Let $\mathcal{S}_t$ be the quantum channel corresponding to the SEH $H(t)=\sum_{i=1}^m \delta_i(t)H_i$. The action of $\mathcal{S}_t$ on a state $\rho$ can be written as
    \begin{equation}
    \begin{aligned}
        \mathcal{S}_t(\rho) = \sum_{n=0}^\infty \int_{0 \leq \tau_n \leq \dots \leq \tau_1 \leq t} d\tau_1 \dots d\tau_n \left[ \int_{\mathbb{R}^{m \times n}} p(\bm{x_1},\tau_1; \dots ; \bm{x_n},\tau_n) \mathcal{L}(\bm{x_1})\dots \mathcal{L}(\bm{x_n}) \;d\bm{x_1} \dots d\bm{x_n} \right](\rho),
    \end{aligned}
    \end{equation}
    where 
    \begin{equation}
    \begin{aligned}
        p(\bm{x_1},\tau_1; \dots ; \bm{x_n},\tau_n) = \prod_{i=1}^{n-1} p(\bm{x_i},\tau_i|\bm{x_{i+1}},\tau_{i+1})\cdot p(\bm{x_n},\tau_n)
    \end{aligned}
    \end{equation}
    is the multi-time joint pdf for the random variables $(\bm{\delta}(\tau_1),\dots,\bm{\delta}(\tau_n))$ and the
    \begin{equation}
    \begin{aligned}
        \mathcal{L}(\bm{x}) = -i\sum_{j=1}^m x_j[H_j,\,\,\cdot\,\,]
    \end{aligned}
    \end{equation}
    are realizations of the Liouvillian superoperator (\cref{remark:realizations_of_quantum_objects}).

    Furthermore, if $H(t)$ is commuting, i.e., $[H_i,H_j]=0$ for all $i,j$, we have
    \begin{equation}
    \begin{aligned}
        \mathcal{S}_t(\rho) = \int_{\mathbb{R}^m} p_{\int}(\bm{x},t) \exp\left(-i \sum_{i=1}^m x_iH_i \right) \rho \exp\left(i \sum_{i=1}^m x_iH_i \right) d\bm{x}
    \end{aligned}
    \end{equation}
    where $p_{\int}(\bm{x},t) = p_{\int}(x_1,\dots, x_m,t)$ is the pdf of the \textit{integrated} process $\int_0^t \bm{\delta}(\tau)\,d\tau$.
\end{proposition}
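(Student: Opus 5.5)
The plan is to work pathwise first and take expectations at the end. Fix $\omega$. By \cref{remark:potential_technicality}, the conjugated state $\rho(t,\omega)=U(t,\omega)\rho U(t,\omega)^\dag$ solves the von Neumann equation $\frac{d}{dt}\rho(t,\omega) = \mathcal{L}(\bm{\delta}(t,\omega))\rho(t,\omega)$. Here $\mathcal{L}(\bm{x})=-i\sum_j x_j[H_j,\cdot\,]$ is linear in $\bm{x}$, and the path $\tau\mapsto\bm{\delta}(\tau,\omega)$ is continuous, so only time integrals appear and no Itô corrections arise. Picard iteration on the integral form $\rho(t,\omega)=\rho+\int_0^t\mathcal{L}(\bm{\delta}(\tau,\omega))\rho(\tau,\omega)\,d\tau$ then gives the Dyson series
\begin{equation*}
\rho(t,\omega)=\sum_{n=0}^\infty\int_{0\le\tau_n\le\dots\le\tau_1\le t}\mathcal{L}(\bm{\delta}(\tau_1,\omega))\cdots\mathcal{L}(\bm{\delta}(\tau_n,\omega))(\rho)\,d\tau_1\cdots d\tau_n .
\end{equation*}
This series converges absolutely for each $\omega$. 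The superoperator norm satisfies $\|\mathcal{L}(\bm{x})\|\le 2\sum_j|x_j|\|H_j\|$, so the $n$-th term is bounded by $(2\Lambda_t(\omega))^n t^n/n!$, where $\Lambda_t(\omega)=\sup_{\tau\le t}\sum_j|\delta_j(\tau,\omega)|\|H_j\|$.

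Next I take expectations, which gives $\mathcal{S}_t(\rho)$ by \cref{equation:channel_description_SEH/SEU}. I then interchange $\E$ with the sum over $n$ and with the $\tau$-integrals. The interchange is justified by Fubini–Tonelli combined with dominated convergence, with dominating function $\sum_n (2t\Lambda_t)^n/n! = e^{2t\Lambda_t}$. So I need $\E[e^{c\Lambda_t}]<\infty$ for every $c$. This is the main technical obstacle. For the Brownian-type SDEs used later (constant $\bm{G}$, $\bm{F}=0$) it follows from Gaussian tails of $\sup_{\tau\le t}|W_j(\tau)|$, via the reflection principle. In the general case I would state it as a standing assumption, for example linear growth of $\bm F,\bm G$ with Gaussian-type moment bounds from \cref{appendix:stochastic_processes}. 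Alternatively, I could bypass it by noting that $\|\rho(t,\omega)\|_1=1$ and arguing term-by-term with polynomial moments $\E[\Lambda_t^n]$, which grow at most like $C^n n!^{1/2}$.

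After the interchange, each term becomes $\E[\mathcal{L}(\bm{\delta}(\tau_1))\cdots\mathcal{L}(\bm{\delta}(\tau_n))]$ at fixed times. The product is a polynomial, multilinear in the coordinates, of the finite-dimensional vector $(\bm{\delta}(\tau_1),\dots,\bm{\delta}(\tau_n))$. Its expectation is therefore the integral against the joint pdf $p(\bm{x_1},\tau_1;\dots;\bm{x_n},\tau_n)$, as in \cref{equation:expectations_over_state_space}. Because the SDE solution is Markov and $\tau_n\le\dots\le\tau_1$, this joint pdf factorizes into transition densities times $p(\bm{x_n},\tau_n)$, which gives the stated product form. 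Existence of densities is taken from the Fokker–Planck setup.

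For the commuting case, all $H_i$ commute, so the time ordering is irrelevant. Pathwise we then have $U(t,\omega)=\exp(-i\sum_i X_i(\omega)H_i)$ with $\bm{X}=\int_0^t\bm{\delta}(\tau)\,d\tau$. Hence $\mathcal{S}_t(\rho)=\E[g(\bm{X})]$ for the bounded continuous function $g(\bm{x})=e^{-i\sum x_iH_i}\rho\, e^{i\sum x_iH_i}$. Writing this expectation as an integral against the law of $\bm X$, with density $p_{\int}(\cdot,t)$, gives the formula. No integrability issue arises here because $g$ is bounded.
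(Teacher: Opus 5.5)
Your proposal is correct and follows essentially the same route as the paper's derivation: Dyson-expand the conjugation by $U(t,\omega)$, exchange $\E$ with the series and the simplex integrals, evaluate each fixed-time expectation against the Markov-factorized multi-time pdf, and in the commuting case integrate $e^{-i\sum_i x_iH_i}\rho\, e^{i\sum_i x_iH_i}$ against the law of $\int_0^t\bm{\delta}(\tau)\,d\tau$. The one substantive addition is that you justify the exchange, which the paper performs silently, and your integrability caveat is genuinely needed: for geometric Brownian motion $d\delta=G\delta\,dW$ with $\delta_0=1$, the $n$-th Dyson term on an off-diagonal entry is at least $t^n e^{G^2t(n^2-2n)/4}/n!$, so the first formula's series diverges, whereas, as you note, the commuting formula needs no such assumption (your moment-growth ``bypass'' is not really an alternative, since $\E[\Lambda_t^n]\lesssim C^n\sqrt{n!}$ is itself a Gaussian-tail condition that already implies $\E[e^{c\Lambda_t}]<\infty$).
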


\begin{proof}[Derivation.]
Consider the Dyson series expansion \cref{equation:dyson_series_unitary_channel} for the unitary channel $\mathcal{U}(t)$ defined by $\mathcal{U}(t)(\rho) \coloneq U(t)\rho U(t)^\dag$:
\begin{equation}
\begin{aligned}
    \mathcal{U}(t) &= \exp_\mathcal{T}\left( \int_0^t \mathcal{L}(\tau)\,d\tau \right)\\
    &= \sum_{n=0}^\infty \int_{0 \leq \tau_n \leq \dots \leq \tau_1 \leq t} d\tau_1 \dots d\tau_n\,\mathcal{L}(\tau_1)\dots \mathcal{L}(\tau_n)\\
    &= \sum_{n=0}^\infty (-i)^n \int_{0 \leq \tau_n \leq \dots \leq \tau_1 \leq t} d\tau_1 \dots d\tau_n\,
    [H(\tau_1),[H(\tau_2),\dots[H(\tau_n),\,\cdot\,]\dots]]
\end{aligned}
\end{equation}
where recall the Liouvillian superoperator $\mathcal{L}(t)$ at time $t$ is
\begin{equation}\begin{aligned}
    \mathcal{L}(t)(\rho) \coloneq -i[H(t),\rho] = -i\sum_{i=1}^m \delta_i(t)[H_i,\rho].
\end{aligned}\end{equation}
Substituting this into \cref{equation:channel_description_SEH/SEU}, we have
\begin{equation}
\begin{aligned}
    \mathcal{S}_t(\rho) &= \E[\mathcal{U}(t)\rho]\\
    &= \E \exp_\mathcal{T}\left( \int_0^t \mathcal{L}(\tau)\,d\tau \right)(\rho) \\
    &= \sum_{n=0}^\infty \int_{0 \leq \tau_n \leq \dots \leq \tau_1 \leq t} d\tau_1 \dots d\tau_n\,\E[\mathcal{L}(\tau_1)\dots \mathcal{L}(\tau_n)](\rho)\\
    &= \sum_{n=0}^\infty \int_{0 \leq \tau_n \leq \dots \leq \tau_1 \leq t} d\tau_1 \dots d\tau_n \left[ \int_{\mathbb{R}^{m \times n}} p(\bm{x_1},\tau_1; \dots ; \bm{x_n},\tau_n) \mathcal{L}(\bm{x_1})\dots \mathcal{L}(\bm{x_n}) \;d\bm{x_1} \dots d\bm{x_n} \right](\rho).
\end{aligned}
\end{equation}
In the last equality we have expanded
\begin{equation}
\begin{aligned}
    \E[\mathcal{L}(\tau_1)\dots \mathcal{L}(\tau_n)] = \int_{\mathbb{R}^{m \times n}} p(\bm{x_1},\tau_1; \dots ; \bm{x_n},\tau_n) \mathcal{L}(\bm{x_1})\dots \mathcal{L}(\bm{x_n}) \,d\bm{x_1} \dots d\bm{x_n}.
\end{aligned}
\end{equation}
This somewhat lengthy expression is simply a manifestation of the equation $\E \!f(X) = \int p(x)f(x)\,dx$ we use all the time. Since stochastic processes governed by the given SDEs are Markovian (\cref{fact:properties_SDE_processes}), the joint pdf 
\begin{equation}
\begin{aligned}
    p(\bm{x_1},\tau_1; \dots ; \bm{x_n},\tau_n) = \prod_{i=1}^{n-1} p(\bm{x_i},\tau_i|\bm{x_{i+1}},\tau_{i+1})\cdot p(\bm{x_n},\tau_n)
\end{aligned}
\end{equation}
can be obtained (in theory at least) from the Fokker-Planck equation for $\bm{\delta}$, see \cref{equation:Fokker-Planck}. Note that the chronological ordering here is $\tau_n \leq \dots \leq \tau_1$ instead of the usual $\tau_1 \leq \dots \leq \tau_n$ in Markov process theory; this is a consequence of the convention used in time-ordered unitaries. 

For commuting Hamiltonians this expression simplifies considerably since time-ordering no longer matters. We have
\begin{equation}
\begin{aligned}
    \mathcal{S}_t(\rho) &= \E \exp\left( \int_0^t \mathcal{L}(\tau)\,d\tau \right)(\rho)\\
    &= \E \exp\left(-i\int_0^t H(\tau)\,d\tau \right) \rho \exp\left(i\int_0^t H(\tau)\,d\tau \right)\\
    &= \E \exp\left(-i \sum_{i=1}^m \left(\int_0^t \delta_i(\tau)\,d\tau\right) H_i \right) \rho \exp\left(i \sum_{i=1}^m \left(\int_0^t \delta_i(\tau)\,d\tau\right) H_i \right)\\
    &= \int_{\mathbb{R}^m} p_{\int}(\bm{x},t) \exp\left(-i \sum_{i=1}^m x_iH_i \right) \rho \exp\left(i \sum_{i=1}^m x_iH_i \right) d\bm{x}.
\end{aligned}
\end{equation}
Here the $\bm{x}$'s are realizations of the random variable $\int_0^t \bm{\delta}(\tau)\,d\tau$, and $p_{\int}(\bm{x},t)$ is the pdf of $\int_0^t \bm{\delta}(\tau)\,d\tau$.
\end{proof}

We close off this section with a simple example.
\begin{example}\label{example:channel_description_pdf_single_qubit_SEH/SEU}
    Consider the single-qubit SEH $H(t)= \delta(t)\sigma_z$, where
    \begin{equation}\begin{aligned}
        d\delta(t) &= G\,dW(t)\\
        \delta(0) &= \delta_0 \in \mathbb{R}.
    \end{aligned}\end{equation}
    That is, $\delta(t)$ is a pure Brownian motion, with a scaled diffusion factor $G$, and no drift. In this case, \cref{proposition:channel_description_pdf_SEH/SEU} says the evolution of a single qubit state can be described as
    \begin{equation}
    \begin{aligned}
        \rho(t) = \mathcal{S}_t(\rho(0)) = \int_{-\infty}^{\infty} p(x,t)e^{-ix\sigma_z}\rho(0)e^{ix\sigma_z}\,dx.
    \end{aligned}
    \end{equation}
    Here,
\begin{equation}\begin{aligned}
        p(x,t) = \frac{1}{\sqrt{2\pi\sigma^2(t)}}e^{-\frac{(x-\mu(t))^2}{2\sigma(t)^2}}
    \end{aligned}\end{equation}
    is the pdf of $\int_0^t \delta(\tau)\, d\tau$ where
    \begin{equation}\begin{aligned}
        \mu(t) = \delta_0t, \qquad \sigma^2(t) = \frac{G^2t^3}{3}.
    \end{aligned}\end{equation}
    In this almost trivial example we do not need the heavy machinery of stochastic calculus: since $\delta(t)=\delta_0 + GW(t)$, we simply integrate to obtain
    \begin{equation}\begin{aligned}
        \int_0^t \delta(\tau)\, d\tau = \delta_0t + G \int_0^t W(\tau)\, d\tau \sim \mathcal{N}(\delta_0t,\frac{G^2 t^3}{3}).
    \end{aligned}\end{equation}
    We can further evaluate the integrand $e^{-ix\sigma_z}\rho(0)e^{ix\sigma_z}$:
    \begin{equation}
    \begin{aligned}
        e^{-ix\sigma_z}\rho(0)e^{ix\sigma_z} = 
        \begin{pmatrix}
            \rho(0)_{00} & e^{-2ix}\rho(0)_{01}\\
            e^{2ix}\rho(0)_{10} & \rho(0)_{11}
        \end{pmatrix}.
    \end{aligned}
    \end{equation}
    Therefore we see that while the diagonal elements of $\rho(t)$ remain unchanged, the off-diagonal elements are 
    \begin{equation}
    \begin{aligned}
        \rho(t)_{01} &= \int p(x,t)e^{-2ix}\,dx \cdot \rho(t)_{01}\\
        &= e^{-i2\delta_0t}e^{-2G^2t^3/3}\rho(t)_{01}.
    \end{aligned}
    \end{equation}
It is well-known that $\sigma_z$ noise induces dephasing in the computational basis. Here, the specific dephasing factor $e^{-2G^2t^3/3}$ arises from the random walk behaviour of the noise.
    
\end{example}

\begin{remark} Above, we have defined SEHs only for qubit Hamiltonians, and with respect to SDEs driven by Wiener processes. There are many ways to generalize this notion. For instance, one could very well define fermionic/bosonic SEHs, or use a different driving process in the SDE.

Furthermore, we note that the notion of Hamiltonians with stochastic time-dependence is not new. However, existing works incorporate this aspect in different ways:
\begin{enumerate}
    \item In \cite{yu2025average}, the authors incorporated stochastic time-dependence by writing $H(t) = H(x(t))$, where $x(t)$ is a `classical stochastic quantity'. They also considered special cases in which $x(t)$ are Gaussian processes, in particular a single-qubit Hamiltonian similar to ours in \cref{example:channel_description_pdf_single_qubit_SEH/SEU}. \cite{yu2025average} also developed a master-equation framework for $H(x(t))$. In our case, we explicitly incorporate stochastic time-dependence through the local interaction strengths of local Hamiltonians, and explicitly describe this stochasticity by means of SDEs. The $e^{-2G^2t^3/3}$ dephasing factor in \cref{example:channel_description_pdf_single_qubit_SEH/SEU} is a consequence of this SDE description, and is absent from \cite{yu2025average}. 

    \item In \cite{lashkari2013towards}, the authors incorporated stochastic time-dependence through the notion of a `Brownian quantum circuit'. This is a sequence of unitaries 
    \begin{equation}
    \begin{aligned}
        \exp(-iH_r\Delta t)\exp(-iH_{r-1}\Delta t)\dots \exp(-iH_1\Delta t),
    \end{aligned}
    \end{equation}
    where each $H_i$ is a \textit{time-independent} two-local Hamiltonian given by
    \begin{equation}
    \begin{aligned}
        H_i = \sum_{j<k} \sum_{\alpha_j,\alpha_k=0}^3 \sigma_j^{\alpha_j} \otimes \sigma_k^{\alpha_k} \Delta B_{i,j,k,\alpha_j,\alpha_k}. 
    \end{aligned}
    \end{equation}
    Here the $\Delta B_{i,j,k,\alpha_j,\alpha_k}$'s are i.i.d. drawn from $\mathcal{N}(0,\epsilon^2)$, with $\epsilon^2 \propto 1/\Delta t$. This immediately brings to mind the quadratic variation of the Wiener process. Thus, while this model shares characteristics similar to our SEHs, they are distinct.

    \item In \cite{onorati2017mixing}, the authors incorporated stochastic time-dependence in a similar way to \cite{lashkari2013towards}. They also went further and described their unitary $U(t)$ as a Brownian motion/Wiener process \textit{on the unitary group itself}. While the authors stopped short of using any further SDE formalism, it can be deduced that the generators of $U(t)$ bear a very similar form to our SEHs in \cref{definition:SEH/SEU}, in the \textit{special case} where $\bm{F}=\bm{0}$ and $\bm{G}$ is constant and diagonal, i.e. there are no drift terms and no stochastic correlations among the local interaction coefficients. 

    \item In \cite{de2026noisy}, the stochastic time-dependence $Z(t)$ in their Hamiltonian takes the form of \textit{white noise}, i.e. $Z(t)=\frac{dW(t)}{dt}$. Thus, their Hamiltonians are `on the level' of $\frac{dW(t)}{dt}$. In contrast, the $\delta(t)$'s in our SEHs evolve stochastically due to $\int G\, dW(t) \sim \int G \frac{dW(t)}{dt}\,dt \sim GW(t)$, i.e. the integral over $\frac{dW(t)}{dt}$. Thus, the SEHs themselves are `on the level' of $W(t)$. Here we have abused notation and used $\sim$ to simply mean `on the level of' -- it is not to be interpreted in any strict mathematical sense. 
    
    The authors then developed SDEs for the \textit{quantum states} $\psi_t$, i.e. equations of the form $d\psi_t = (\dots)\,dt + (\dots)\,dW_t$, and further established connections to the stochastic Schr\"{o}dinger equation and stochastic Liouville equation, making heavy use of SDE machinery along the way. To conclude, the stochastic time-dependence in \cite{de2026noisy} is distinct from ours, although we do recommend it for a complementary perspective.
    
\end{enumerate}
\end{remark}

\section{Single-qubit Quantum Sensing under Stochastic Evolution}\label{section:quantum_sensing}

The first quantum information processing task we consider is the parameter estimation of a single-qubit SEH. The setting we consider is that in \cref{example:channel_description_pdf_single_qubit_SEH/SEU}. Concretely, we consider
\begin{problem}\label{problem:single_qubit_quantum_sensing}
    Given the single-qubit Hamiltonian $H(t)= \delta(t)\sigma_z$, where
    \begin{equation}\begin{aligned}
        d\delta(t) &= G\,dW(t)\\
        \delta(0) &= \delta_0.
    \end{aligned}\end{equation}
    Here $\delta_0$ and $G$ are unknown. It is known however that $|\delta_0| \leq \delta_{\text{max}}$ and $0 \leq G \leq G_{\text{max}}$.
    
    \textbf{Goal.} Learn $\delta_0$.
\end{problem}
Before proceeding we first review the well-understood case when $G=0$. We discuss two estimation protocols, namely the classic Ramsey protocol \cite{ramsey1950molecular,degen2017quantum} and the more recent Robust Frequency Estimation (RFE) protocol developed in \cite{kimmel2015robust,ma2024learning}, which achieves the Heisenberg limit. Then we discuss the extension of these protocols in the setting of \cref{problem:single_qubit_quantum_sensing} and their respective performances and limits. The extensions/modifications are highlighted in \textcolor{DeepPink3}{pink}.

We note that there are broadly two approaches to Heisenberg-limited estimation for the simple case of single-qubit quantum sensing. The first approach utilises massive entanglement (namely, in the NOON/GHZ states) as a resource \cite{lee2002quantum,leibfried2004toward}, and the second approach is based on long-time evolution \cite{kimmel2015robust,de2005quantum,higgins2007entanglement}. Following \cite{huangtong2023learning}, we pursue the (extension of the) second approach.

\subsection{Protocols for noiseless quantum sensing}
When $G=0$ we have the simple procedure \cref{algorithm:ramsey_protocol}, taught in introductory quantum mechanics classes. We call it the Ramsey protocol in this paper.

\begin{algorithm}
\caption{Ramsey Protocol}
\label{algorithm:ramsey_protocol}
\begin{algorithmic}[1]

\Require 
\Statex - Black-box access to the unitary $U(t)= e^{-iHt}$, where $H=\delta_0 \sigma_z$. $\delta_0$ is unknown, but satisfies $|\delta_0| \leq \delta_{\text{max}}$.
\Statex - Desired precision $\varepsilon>0$ for the estimate $\hat{\delta}_0$.
\Statex - Success probability $1-\eta$, $\eta>0$.

\Ensure 
\Statex - Estimate $\hat{\delta}_0$ satisfying $|\hat{\delta}_0-\delta_0| < \varepsilon$, with success probability $1-\eta$.

\Statex
\Pseudocode
\State Fix a time $t\leq \pi/4\delta_{\text{max}}$.
\For{$i=1,\dots,N=O\left(\frac{\log(1/\eta)}{\varepsilon^2}\right)$}
    \State Initialize the qubit state $\ket{+} = \frac{1}{\sqrt{2}}(\ket{0}+\ket{1})$.
    \State Evolve $\ket{+}$ under $U(t)$.
    \State Measure the observable $\sigma_y$ and record the measurement outcome $Y_i \in \{\pm1\}$.
\EndFor
\State Compute $\hat{Y} \coloneq \frac{1}{N}\sum_{i=1}^N Y_i$.
\State Compute $\hat{\delta}_0 \coloneq \frac{\sin^{-1} \hat{Y}}{2t}$.
\State \Return $\hat{\delta}_0$

\end{algorithmic}
\end{algorithm}

\paragraph{Analysis of \cref{algorithm:ramsey_protocol}.} 
For a single run, evolving under $U(t)$ gives the state $\ket{\psi_t} = e^{-i\delta_0 \sigma_z t}\ket{+} = \frac{1}{\sqrt{2}}(e^{-i\delta_0 t}\ket{0}+e^{i\delta_0 t}\ket{1})$. Measurements of $\sigma_y$ yield outcomes $Y_i = \pm 1$, corresponding to the eigenstates $\ket{y,\pm1}$ (check out the notation in \cref{appendix:quantum_info}). Note that $\Pr(Y_i = \pm 1) = \frac{1}{2}(1\pm \sin(2\delta_0 t))$, so $\E \!Y_i = \sin(2\delta_0 t)$. By Hoeffding's inequality (\cref{theorem:Hoeffding}), to ensure $|\hat{Y}-\E \!Y| < \varepsilon$ with success probability $1-\eta$, we need 
\begin{equation}
\begin{aligned}
    N=O\left( \frac{\log(1/\eta)}{\varepsilon^2} \right)
\end{aligned}
\end{equation}
runs. Finally, the Lipschitz continuity of $\sin^{-1} x$ (away from the endpoints $x=\pm 1$) ensures that the final error for $\delta_0$ is $|\hat{\delta}_0-\delta_0| < O(\varepsilon)$.

We need the sample runtime $t\leq \pi/4\delta_{\text{max}}$ because the trigonometric function sine is not uniquely invertible. We choose $t$ such that $2\delta_0t \leq \pi/2$: since sine is one-to-one on $[-\pi/2,\pi/2]$, we then have a unique candidate for the estimate $\hat{\delta}_0$. Since we need $N=O(1/\varepsilon^2)$ runs with $t=O(1)$ per run, the total runtime scales as $T=Nt=O(1/\varepsilon^2)$, attaining the standard quantum limit. 

Lastly, note that we could have measured the observable $\sigma_x$ instead, with the ensuing analysis being completely similar, but for cosine instead of sine. \hfill $\square$

Next, we introduce the Robust Frequency Estimation (RFE) protocol in \cite{ma2024learning,hu2025ansatz}, which has its origins in \cite{kimmel2015robust}. This protocol achieves Heisenberg scaling in the estimation of $\delta_0$. Notably, unlike previous approaches which utilises massive entanglement as an additional resource, RFE works with just a single qubit, at the expense of requiring long sample runtime.

\begin{algorithm}
\caption{Robust Frequency Estimation (RFE)}
\label{algorithm:rfe}
\begin{algorithmic}[1]

\Require 
\Statex - Black-box access to the unitary $U(t)= e^{-iHt}$, where $H=\delta_0 \sigma_z$. $\delta_0$ is unknown, but satisfies $|\delta_0| \leq \delta_{\text{max}}$.
\Statex - Desired precision $\varepsilon>0$ for the estimate $\hat{\delta}_0$.
\Statex - Success probability $1-\eta$, $\eta>0$.

\Ensure 
\Statex - Estimate $\hat{\delta}_0$ satisfying $|\hat{\delta}_0-\delta_0| < \varepsilon$, with success probability $1-\eta$.

\Statex
\Pseudocode
\State Initialise $a=-\delta_{\text{max}}, b=\delta_{\text{max}}$.
\For{$i=1,\dots,N=O\left(\log \frac{\delta_{\text{max}}}{\varepsilon}\right)$}
    \State Set $t_i = O\left((\frac{1}{\delta_{\text{max}}})(\frac{3}{2})^{i-1}\right)$.
    \For{$j=1,\dots,m=O\left(\log \frac{N}{\eta}\right)$} \Comment{Ramsey subroutine}
        \State Initialize the qubit state $\ket{+} = \frac{1}{\sqrt{2}}(\ket{0}+\ket{1})$.
        \State Evolve $\ket{+}$ under $U(t_i)$.
        \State Measure the observable $\sigma_y$ and record the measurement outcome $Y_j(t_i) \in \{\pm1\}$.
    \EndFor
    \For{$j=1,\dots,m=O\left(\log \frac{N}{\eta}\right)$} \Comment{Ramsey subroutine}
        \State Initialize the qubit state $\ket{+} = \frac{1}{\sqrt{2}}(\ket{0}+\ket{1})$.
        \State Evolve $\ket{+}$ under $U(t_i)$.
        \State Measure the observable $\sigma_x$ and record the measurement outcome $X_j(t_i) \in \{\pm1\}$.
    \EndFor
    \State Compute $\hat{X}(t_i) = \frac{1}{m}\sum_{j=1}^m X_j(t_i)$, $\hat{Y}(t_i) = \frac{1}{m}\sum_{j=1}^m Y_j(t_i)$. Let $S(t_i)\coloneq \hat{X}(t_i) + i\hat{Y}(t_i)$.
    \State Compute $\im\left( \exp\left(-i\frac{(a+b)\pi}{2(b-a)}\right)S(t_i) \right)$.
    \If{$\im \leq 0$} \Comment{Invoke \cref{lemma:lemma8_hu2025}}
        \State $(a,b) \gets (a,\frac{a+2b}{3})$
    \ElsIf{$\im > 0$}
        \State $(a,b) \gets (\frac{2a+b}{3},b)$
    \EndIf
\EndFor
\State \textbf{print} $\hat{\delta}_0 \in [a,b]$

\end{algorithmic}
\end{algorithm}

\paragraph{Analysis of \cref{algorithm:rfe} (Theorem 7 in \cite{hu2025ansatz}).} 
A complete analysis of \cref{algorithm:rfe} can be found in \cite{hu2025ansatz}; for convenience of the reader we include a reasonably detailed sketch here.

The essence of RFE is as follows: instead of naively relying on Hoeffding to achieve a precise estimate for $\delta_0$ via independent experimental runs, of which $N=\Omega(1/\varepsilon^2)$ is required for precision $\varepsilon$, RFE iteratively whittles down the size of the interval containing $\delta_0$ by a constant factor for each measure-and-compute run. The key enabler of this is the computation of the quantity $\im(\dots)$ and \cref{lemma:lemma8_hu2025}, which tells us how to shave off the interval based on the value of $\im(\dots)$. Since the interval size decreases by a constant factor, only a logarithmic number of experimental runs is required, at the expense of an exponentially growing sample runtime. Nonetheless, the total runtime attains the Heisenberg limit $T=O(1/\varepsilon)$.

To invoke the bisection \cref{lemma:lemma8_hu2025}, some preprocessing is required to prepare the $Z(\frac{\pi}{b-a})$ in \cref{lemma:lemma8_hu2025}. For the $i$th run we first perform the core subroutine in the Ramsey protocol \cref{algorithm:ramsey_protocol}, which gives us the raw, $\pm1$-valued data $X_j(t_i), Y_j(t_i)$. From these measurement data, we construct the empirical means $\hat{X}(t_i), \hat{Y}(t_i)$, and put them together as $S(t_i)$. This quantity $S(t_i)$ is to take the role of $Z(\frac{\pi}{b-a})$ in \cref{lemma:lemma8_hu2025}, for the $i$th iteration. 

We note that this is actually a simplified version of the actual RFE protocol in \cite{hu2025ansatz}. There, a median-of-means (MoM) approach was used: they first partitioned the $\{X_j(t_i)\}_{j=1}^m$, $\{Y_j(t_i)\}_{j=1}^m$ into $k=\Theta(1)$-sized blocks. For each block they computed $\hat{X}_l(t_i) = \frac{1}{k}\sum_{j \in \text{block }l} X_j(t_i)$ and $\hat{Y}_l(t_i) = \frac{1}{k}\sum_{j \in \text{block }l} Y_j(t_i)$, then defined $X_{\text{MoM}}(t_i) \coloneq \text{Median}\{\hat{X}_l(t_i)\}_{l=1}^{m/k}, Y_{\text{MoM}}(t_i) \coloneq \text{Median}\{\hat{Y}_l(t_i)\}_{l=1}^{m/k}$. Their $S(t_i)$ was then taken to be $S(t_i)\coloneq X_{\text{MoM}}(t_i) + iY_{\text{MoM}}(t_i)$. The reason \cite{hu2025ansatz} used MoM was because they wanted to make their Hamiltonian learning protocol robust to $O(1)$-sized SPAM errors -- the `robust' in RFE refers to this aspect. For simplicity we shall not address this concern here, so while the original MoM subroutine indeed works in this protocol, it is somewhat overkill for our purposes. We simply construct the empirical mean from the full $m$ samples and apply Hoeffding directly.

The bisection \cref{lemma:lemma8_hu2025} then tells us whether to retain the first two-thirds of the original interval $[a,b]$ or the last two-thirds, depending on the sign of $\im(\dots)$. Either way, the size of the interval containing $\delta_0$ becomes $b-a \to \frac{2}{3}(b-a)$. Thus after $N$ runs we have $b-a \to (\frac{2}{3})^N(b-a) \approx \varepsilon$, therefore we need $N=O(\log_{3/2} (1/\varepsilon))$ runs. Note that for the $i$th run \cref{lemma:lemma8_hu2025} necessitates that $t_i \sim \frac{1}{b_i-a_i} = (\frac{3}{2})^i \frac{1}{b-a}$, this is why the sample runtime increases exponentially.

Of course, each run involves a small failure probability which has to be accounted for. If the overall failure probability of RFE is set to $\eta$, we have from Hoeffding and the union bound that $m=O(\log \frac{N}{\eta})$, where recall $m$ is the number of runs in the preprocessing part (i.e. the Ramsey subroutine). All in all, we have
\begin{equation}\begin{aligned}
    \text{max sample runtime, } &\max_i t_i = O(1/\varepsilon)\\
    \text{number of samples, } &K = \sum_{i=1}^N m =  O\left(\left(\log \frac{\delta_{\text{max}}}{\varepsilon}\right)\left(\log \frac{1}{\eta}+\log \log \frac{\delta_{\text{max}}}{\varepsilon}\right)\right)\\
    \text{total runtime, } &T = \sum_{i=1}^N mt_i = O\left(\frac{1}{\varepsilon}\left(\log \frac{1}{\eta}+\log \log \frac{\delta_{\text{max}}}{\varepsilon}\right)\right).
\end{aligned}\end{equation} \hfill $\square$

\begin{lemma}[Lemma 8 in \cite{hu2025ansatz}]\label{lemma:lemma8_hu2025}
    Let $\theta \in [a,b]$. Let $Z(t)$ be a random variable such that
    \begin{equation}\begin{aligned}
        |Z(t)-e^{i\theta t}| < 1/2
    \end{aligned}\end{equation}
    for all $t$. Then with a single sample of $Z(\frac{\pi}{b-a})$, we can distinguish between the two (overlapping) cases:
    \begin{enumerate}[i.]
        \item $\im\left( \exp\left(-i\frac{(a+b)\pi}{2(b-a)}\right)Z(\frac{\pi}{b-a}) \right) \leq 0 \implies \theta \in [a,\frac{a+2b}{3}]$.
        \item $\im\left( \exp\left(-i\frac{(a+b)\pi}{2(b-a)}\right)Z(\frac{\pi}{b-a}) \right) > 0 \implies \theta \in [\frac{2a+b}{3},b]$.
    \end{enumerate}
\end{lemma}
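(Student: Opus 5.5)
The plan is to reduce to a statement about the phase of a single complex number. Set $t^\ast = \frac{\pi}{b-a}$ and $c = \frac{a+b}{2}$ (the midpoint of the interval), and define $\phi \coloneq (\theta - c)t^\ast$. Since $\theta\in[a,b]$, we have $\theta - c \in [-\frac{b-a}{2},\frac{b-a}{2}]$, so $\phi \in [-\pi/2,\pi/2]$. Now multiply through by the unit-modulus factor $e^{-ict^\ast}$ (note $ct^\ast = \frac{(a+b)\pi}{2(b-a)}$ is exactly the phase in the statement). The hypothesis $|Z(t^\ast)-e^{i\theta t^\ast}|<1/2$ is invariant under this rotation, so $W \coloneq e^{-ict^\ast}Z(t^\ast)$ satisfies $|W - e^{i\phi}| < 1/2$. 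In particular,
\[
|\im W - \sin\phi| \le |W-e^{i\phi}| < \tfrac12 .
\]

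Both implications are then proved by contrapositive.

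For case i, suppose $\theta > \frac{a+2b}{3}$. Then $\theta - c > \frac{a+2b}{3}-\frac{a+b}{2} = \frac{b-a}{6}$, so $\phi > \pi/6$. Combined with $\phi\le\pi/2$, this gives $\sin\phi > \sin(\pi/6) = 1/2$, using monotonicity of sine on $[-\pi/2,\pi/2]$. Hence $\im W > \sin\phi - 1/2 > 0$. So $\im W\le 0$ forces $\theta\le \frac{a+2b}{3}$, i.e. $\theta\in[a,\frac{a+2b}{3}]$.

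Case ii is symmetric. If $\theta < \frac{2a+b}{3}$, then $\phi < -\pi/6$, so $\sin\phi < -1/2$ and $\im W < 0$. Therefore $\im W>0$ forces $\theta\in[\frac{2a+b}{3},b]$.

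The only point needing care is the boundary behaviour in case i. At $\phi = \pi/6$ exactly, $\sin\phi = 1/2$ and $\im W$ could be just above $0$. But that situation falls under case ii, and $\theta=\frac{a+2b}{3}$ lies in both intervals, so there is no conflict. The strict inequality in the hypothesis makes every other boundary strict.

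Finally, I would note that only the single value $t=t^\ast$ of the hypothesis is used, so "for all $t$" is stronger than needed. The argument also relies on $\phi$ lying in $[-\pi/2,\pi/2]$ so that sine is monotone there; this is exactly why the sample time is chosen as $\pi/(b-a)$ and not something larger. None of these steps is a genuine obstacle: the whole proof is the rotation to the midpoint followed by an elementary sine bound.
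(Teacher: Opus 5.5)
Your proof is correct and complete. Rotating by the midpoint phase places $\phi=(\theta-\frac{a+b}{2})t^\ast$ in $[-\pi/2,\pi/2]$. The bound $|\im W-\sin\phi|<1/2$, together with $\sin\phi>1/2$ on $(\pi/6,\pi/2]$ (resp.\ $\sin\phi<-1/2$ on $[-\pi/2,-\pi/6)$), then yields both contrapositives.

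The paper gives no proof of its own here; it imports the lemma from \cite{hu2025ansatz}. Your rotate-then-bound-the-sine argument is the standard one.

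One minor point: your boundary paragraph is unnecessary, since the contrapositive never involves $\theta=\frac{a+2b}{3}$. Moreover, at $\phi=\pi/6$ one in fact has $\im W\in(0,1)$ strictly, rather than ``possibly just above $0$''.
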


\subsection{Extension of protocols to accommodate SDE-governed parameters}
When $G>0$ we have the following \cref{algorithm:extended_ramsey_protocol}.

\begin{algorithm}
\caption{Extended Ramsey Protocol}
\label{algorithm:extended_ramsey_protocol}
\begin{algorithmic}[1]

\Require 
\Statex - Black-box access to the channel $\mathcal{S}_t$ (see \cref{equation:channel_description_SEH/SEU}), where $H(t)$ is given in \cref{problem:single_qubit_quantum_sensing}.
\Statex - Desired precision $\varepsilon>0$ for the estimate $\hat{\delta}_0$.
\Statex - Success probability $1-\eta$, $\eta>0$.

\Ensure 
\Statex - Estimate $\hat{\delta}_0$ satisfying $|\hat{\delta}_0-\delta_0| < \varepsilon$, with success probability $1-\eta$.

\Statex
\Pseudocode
\State Fix a time $t\leq \pi/4\delta_{\text{max}}$.
\For{$i=1,\dots,N=O\left(\frac{\log(1/\eta)}{\varepsilon^2}\right)$}
    \State Initialize the qubit state $\ket{+} = \frac{1}{\sqrt{2}}(\ket{0}+\ket{1})$.
    \State Evolve $\ket{+}$ under $\mathcal{S}_t$.
    \State Measure the observable $\sigma_y$ and record the measurement outcome $Y_i \in \{\pm1\}$.
\EndFor
\textcolor{DeepPink3}{
\For{$i=1,\dots,N=O\left(\frac{\log(1/\eta)}{\varepsilon^2}\right)$}
    \State Initialize the qubit state $\ket{+} = \frac{1}{\sqrt{2}}(\ket{0}+\ket{1})$.
    \State Evolve $\ket{+}$ under $\mathcal{S}_t$.
    \State Measure the observable $\sigma_x$ and record the measurement outcome $X_i \in \{\pm1\}$.
\EndFor
}
\State Compute $\hat{Y} \coloneq \frac{1}{N}\sum_{i=1}^N Y_i$, \textcolor{DeepPink3}{$\hat{X} \coloneq \frac{1}{N}\sum_{i=1}^N X_i$}.
\State \textcolor{DeepPink3}{Compute $\hat{\delta}_0 \coloneq \frac{\tan^{-1} (\hat{Y}/\hat{X})}{2t}$}.
\State \Return $\hat{\delta}_0$

\end{algorithmic}
\end{algorithm}

\paragraph{Analysis of \cref{algorithm:extended_ramsey_protocol}.} 
With the randomness introduced by SDEs, the evolution of quantum states is now described by quantum channels instead of mere unitaries. For the simple SDE setting in \cref{problem:single_qubit_quantum_sensing}, the channel $\mathcal{S}_t \coloneq \E[U(t)\rho U(t)^\dag]$ has an especially tractable description, which we gave in \cref{example:channel_description_pdf_single_qubit_SEH/SEU}. Namely,
\begin{equation}
\begin{aligned}
    \mathcal{S}_t(\rho(0)) = \int_{-\infty}^{\infty} p(x,t)e^{-ix\sigma_z}\rho(0)e^{ix\sigma_z}\,dx
\end{aligned}
\end{equation}
with
\begin{equation}\begin{aligned}
    p(x,t) = \frac{1}{\sqrt{2\pi\sigma^2(t)}}e^{-\frac{(x-\mu(t))^2}{2\sigma(t)^2}},
\end{aligned}\end{equation}
where
\begin{equation}\begin{aligned}
    \mu(t) = \delta_0t, \qquad \sigma^2(t) = \frac{G^2t^3}{3}.
\end{aligned}\end{equation}
Initializing with $\rho(0) = \ketbra{+}{+}$, evolving under $\mathcal{S}_t$, then measuring $\sigma_y$ as before yields 
\begin{equation}\label{equation:evaluating_EY}
\begin{aligned}
    \E\hat{Y} =\E \!Y_i &= \tr(\rho(t)(\ketbra{y,1}{y,1}-\ketbra{y,-1}{y,-1}))\\
    &= \int_{-\infty}^{\infty} p(x)|\bra{y,1}e^{-i\sigma_zx}\ket{+}|^2\,dx - \int_{-\infty}^{\infty} p(x)|\bra{y,-1}e^{-i\sigma_zx}\ket{+}|^2\,dx\\
    &= \int_{-\infty}^{\infty} p(x) \cdot \frac{1}{2}(1+\sin 2x)\,dx - \int_{-\infty}^{\infty} p(x) \cdot \frac{1}{2}(1-\sin 2x)\,dx\\
    &= \int_{-\infty}^{\infty} p(x) \sin 2x\,dx\\
    &= e^{-2\sigma^2(t)}\sin 2\mu(t)\\
    &= e^{-2G^2t^3/3}\sin 2\delta_0t.
\end{aligned}
\end{equation}
Here the integral was evaluated using the well-known result for computing characteristic functions of Gaussians: $\E[e^{ikX}] = e^{ik\mu-k^2\sigma^2/2}$. At this point, if we try to proceed with \cref{algorithm:ramsey_protocol}, i.e. attempt to extract $\hat{\delta}_0$ from $\hat{Y}$, this would fail due to the damping factor $e^{-2G^2t^3/3}$. The idea is to observe that by also measuring with respect to the observable $\sigma_x$, we obtain the estimate $\hat{X}$, where $\E \hat{X} = e^{-2G^2t^3/3}\cos 2\delta_0t$ by a similar calculation as in \cref{equation:evaluating_EY}. Taking the quotient $\hat{Y}/\hat{X}$ then cancels off the noise factor, thus allowing the extraction of $\hat{\delta}_0$.

The error analysis here is slightly more involved. We essentially have the following task: let $Z=(X,Y)=(C\cos \theta, C\sin \theta)$ and its estimator be $\hat{Z} = (\hat{X},\hat{Y}) = (\hat{C}\cos \hat{\theta}, \hat{C}\sin \hat{\theta})$. It is clear that if $|\hat{X}-X|<\varepsilon'$ and $|\hat{Y}-Y|<\varepsilon'$, then $|\hat{Z}-Z|< \sqrt{2}\varepsilon'$. Furthermore, by the geometric \cref{lemma:geometric_lemma} below, if $|\hat{Z}-Z|< \sqrt{2}\varepsilon'$, then $|\hat{\theta}-\theta| < \sin^{-1}(\sqrt{2}\varepsilon'/C)$. Reverse-engineering from the target $\hat{\theta}$-precision $\varepsilon$ gives $\varepsilon'= \frac{C}{\sqrt{2}}\sin \varepsilon$. Thus, running the experiments producing $\hat{Y}, \hat{X}$ for $N=O(1/\varepsilon'^2) = O(1/C^2\sin^2 \varepsilon)$ times each, gives an $\varepsilon$-precise estimate of $\hat{\theta}$ with high probability.

In our case, $C=e^{-2G^2t^3/3}$ and $\theta=2\delta_0t$, so 
\begin{equation}
\begin{aligned}
    N = O\left( \frac{e^{4G^2t^3/3}}{\sin^2 2t\varepsilon} \right) = O\left( \frac{1}{\varepsilon^2} \right)
\end{aligned}
\end{equation}
if $t=O(1)$ and $t\varepsilon \ll 1$, still attaining the standard quantum limit, although we do note the considerable dependence on $G$.

Finally, we should mention that the $\tan^{-1} (y/x)$ function used to extract $\hat{\delta}_0$ is technically the $\text{atan2}(y,x)$ function. Strictly speaking, $\tan^{-1} (y/x)$ is insufficient because it does not distinguish between $(x,y)$ and $(-x,-y)$. The \href{https://en.wikipedia.org/wiki/Atan2}{atan2} function resolves this unambiguously, simply giving the polar coordinate $\theta \in (-\pi,\pi]$ of the vector $(x,y)$. \hfill $\square$

\begin{lemma}\label{lemma:geometric_lemma}
    Let $z=(r \cos \theta, r \sin \theta)$ and $z'=(r' \cos \theta', r' \sin \theta')$ be such that $|z'-z|<\varepsilon$, where $0<\varepsilon<r$. Then $|\theta'-\theta|<\sin^{-1}(\varepsilon/r)$.
\end{lemma}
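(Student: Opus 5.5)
The plan is a purely planar geometric argument. The angular difference is understood modulo $2\pi$, taken in $[0,\pi]$. We may rotate both points by $-\theta$, since rotation preserves distances and angle differences. This lets us assume $z=(r,0)$ and $\theta=0$, so the goal becomes $|\theta'|<\sin^{-1}(\varepsilon/r)$ with $\theta'\in(-\pi,\pi]$. We may also assume $r'\ge 0$, since $r'$ is the modulus of $z'$ in polar form.

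The key observation is that $z'$ lies in the open disk $D$ of radius $\varepsilon$ centred at $z=(r,0)$. Because $\varepsilon<r$, this disk does not contain the origin. The set of directions from the origin meeting $D$ is therefore an open cone bounded by the two tangent lines from the origin to $\partial D$.

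Next compute the half-opening angle $\alpha$ of that cone. Take the tangent point $P$. The triangle $O,z,P$ has a right angle at $P$, hypotenuse $|Oz|=r$, and opposite leg $|zP|=\varepsilon$. Hence $\sin\alpha=\varepsilon/r$, so $\alpha=\sin^{-1}(\varepsilon/r)<\pi/2$.

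Since $D$ is open and $z'\in D$, the direction $\theta'$ lies strictly inside the cone, giving $|\theta'|<\alpha$. This is the claim.

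To avoid the picture, the same fact can be obtained analytically. Minimise the distance from $z$ to the ray at angle $\phi$. For $|\phi|\le\pi/2$ that distance is $r|\sin\phi|$. For $|\phi|>\pi/2$ it is $r$, since the closest point is the origin. So $|z'-z|\ge r|\sin\theta'|$ when $|\theta'|\le\pi/2$, and $|z'-z|\ge r>\varepsilon$ otherwise.

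The second case is impossible, so $|\theta'|\le\pi/2$. Then $r|\sin\theta'|<\varepsilon$ gives $|\sin\theta'|<\varepsilon/r$. Monotonicity of $\sin$ on $[0,\pi/2]$ then yields $|\theta'|<\sin^{-1}(\varepsilon/r)$.

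The only real obstacle is this case split excluding obtuse angles. It is exactly where the hypothesis $\varepsilon<r$ enters, and it also guarantees $z'\neq 0$, so $\theta'$ is well-defined.
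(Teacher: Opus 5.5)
Your proof is correct and follows the same route as the paper: rotate so that $\theta=0$, note that $z'$ lies in the open $\varepsilon$-disk about $z$, and bound the angle by the tangent line from the origin, which gives $\sin\theta'=\varepsilon/r$ at the extreme. Your point-to-ray distance computation ($r|\sin\phi|$ for $|\phi|\le\pi/2$, and $r$ otherwise) is a rigorous version of the paper's ``clearly'' step. Your explicit conventions ($r'\ge 0$, angle difference taken modulo $2\pi$) are genuinely needed, since the statement fails if, e.g., $z'=z$ is written with $r'=-r$ and $\theta'=\theta+\pi$.
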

\begin{proof}
    Proof by diagram: see \cref{figure:visualization_geometric_lemma}. Wlog we set $z$ along the $x$-axis, so $\theta=0$. Any candidate $z'$ must lie within the $\varepsilon$-sized circle. Clearly the largest $\theta'$ is attained when the corresponding $z'$ is tangent to the circle. For this $\theta'$ it is easy to see that $\sin \theta' = \varepsilon/r$.
\begin{figure}[ht]
\centering

\begin{tikzpicture}[scale=1.2, >=stealth]
\def\r{4}         
\def\eps{0.7}     
\pgfmathsetmacro{\ang}{asin(\eps/\r)}           
\pgfmathsetmacro{\tlen}{sqrt(\r*\r-\eps*\eps)}  
\coordinate (O) at (0,0);
\coordinate (Z) at (\r,0);
\coordinate (T) at ({\tlen*cos(\ang)},{\tlen*sin(\ang)});
\draw[->] (-0.5,0) -- (5.3,0) node[right] {$x$};
\draw[->] (0,-0.5) -- (0,2.2) node[above] {$y$};
\draw[->, thick] (O) -- (Z);
\node[above] at ($(O)!0.72!(Z)$) {$z$};
\node[below] at ($(O)!0.5!(Z)$) {$r$};
\draw (Z) circle (\eps);
\fill (Z) circle (1.2pt);
\draw[->, thick] (O) -- (T);
\node[above left] at ($(O)!0.68!(T)$) {$z'$};
\draw[dashed] (Z) -- (T);
\node[right] at ($(Z)!0.5!(T)$) {$\varepsilon$};
\draw (1.5,0) arc[start angle=0,end angle=\ang,radius=1.5];
\node at ({1.95*cos(\ang/2)},{1.95*sin(\ang/2)}) {$\theta'$};
\end{tikzpicture}

\caption{A diagrammatic proof of the geometric \cref{lemma:geometric_lemma}.}
\label{figure:visualization_geometric_lemma}
\end{figure}
\end{proof}

\begin{algorithm}
\caption{Extended RFE}
\label{algorithm:extended_rfe}
\begin{algorithmic}[1]

\Require 
\Statex - Black-box access to the channel $\mathcal{S}_t$ (see \cref{equation:channel_description_SEH/SEU}), where $H(t)$ is given in \cref{problem:single_qubit_quantum_sensing}.
\Statex - Desired precision $\varepsilon>0$ for the estimate $\hat{\delta}_0$.
\Statex - Success probability $1-\eta$, $\eta>0$.

\Ensure 
\Statex - Estimate $\hat{\delta}_0$ satisfying $|\hat{\delta}_0-\delta_0| < \varepsilon$, with success probability $1-\eta$.

\Statex
\Pseudocode
\State Initialise $a=-\delta_{\text{max}}, b=\delta_{\text{max}}$.
\State \textcolor{DeepPink3}{Define $C(x)\coloneq e^{-2G_{\text{max}}^2x^3/3}$.}
\For{$i=1,\dots,N=O\left(\log \frac{\delta_{\text{max}}}{\varepsilon}\right)$}
    \State Set $t_i = O\left((\frac{1}{\delta_{\text{max}}})(\frac{3}{2})^{i-1}\right)$.
    \For{$j=1,\dots,\textcolor{DeepPink3}{m_i=O\left(\frac{1}{C(t_i)^2}\log \frac{N}{\eta}\right)}$} \Comment{Ramsey subroutine}
        \State Initialize the qubit state $\ket{+} = \frac{1}{\sqrt{2}}(\ket{0}+\ket{1})$.
        \State Evolve $\ket{+}$ under $\mathcal{S}_{t_i}$.
        \State Measure the observable $\sigma_y$ and record the measurement outcome $Y_j(t_i) \in \{\pm1\}$.
    \EndFor
    \For{$j=1,\dots,\textcolor{DeepPink3}{m_i=O\left(\frac{1}{C(t_i)^2}\log \frac{N}{\eta}\right)}$} \Comment{Ramsey subroutine}
        \State Initialize the qubit state $\ket{+} = \frac{1}{\sqrt{2}}(\ket{0}+\ket{1})$.
        \State Evolve $\ket{+}$ under $\mathcal{S}_{t_i}$.
        \State Measure the observable $\sigma_x$ and record the measurement outcome $X_j(t_i) \in \{\pm1\}$.
    \EndFor
    \State Compute $\hat{X}(t_i) = \frac{1}{\textcolor{DeepPink3}{m_i}}\sum_{j=1}^{\textcolor{DeepPink3}{m_i}} X_j(t_i)$, $\hat{Y}(t_i) = \frac{1}{\textcolor{DeepPink3}{m_i}}\sum_{j=1}^{\textcolor{DeepPink3}{m_i}} Y_j(t_i)$. Let $S(t_i)\coloneq \hat{X}(t_i) + i\hat{Y}(t_i)$.
    \State Compute $\im\left( \exp\left(-i\frac{(a+b)\pi}{2(b-a)}\right)S(t_i) \right)$.
    \If{$\im \leq 0$} \Comment{Invoke \cref{lemma:lemma8_hu2025}}
        \State $(a,b) \gets (a,\frac{a+2b}{3})$
    \ElsIf{$\im > 0$}
        \State $(a,b) \gets (\frac{2a+b}{3},b)$
    \EndIf
\EndFor
\State \textbf{print} $\hat{\delta}_0 \in [a,b]$

\end{algorithmic}
\end{algorithm}

\paragraph{Analysis of \cref{algorithm:extended_rfe}.}
The extension of RFE to the $G>0$ case is simple if we know $G_{\text{max}}$, the upper bound of the noise strength, a priori. In this case, we only need to tweak slightly the preprocessing protocol required to apply \cref{lemma:lemma8_hu2025}. The key is the (trivial) observation that
\begin{equation}
\begin{aligned}
    |Z(t)-Ce^{i\theta t}| < C/2 \iff |Z(t)/C-e^{i\theta t}| < 1/2
\end{aligned}
\end{equation}
for any $C>0$. Thus define $C(x)\coloneq e^{-2G_{\text{max}}^2x^3/3}$. Since $\E \!X_j(t_i)=e^{-2G^2t_i^3/3} \cos \theta t_i$ and $\E \!Y_j(t_i)=e^{-2G^2t_i^3/3} \sin \theta t_i$ now, see \cref{equation:evaluating_EY}, we have $\E \!S(t_i)=e^{-2G^2t_i^3/3}e^{i\theta t_i}$ ($\theta=2\delta_0$). By demanding precision $C(t_i)/4$, we get
\begin{equation}
\begin{aligned}
    |S(t_i)-e^{-2G^2t_i^3/3}e^{i\theta t}| < \frac{C(t_i)}{4} \leq \frac{e^{-2G^2t_i^3/3}}{4} < \frac{e^{-2G^2t_i^3/3}}{2}.
\end{aligned}
\end{equation}
We can now invoke \cref{lemma:lemma8_hu2025} with $S(t_i)/e^{-2G^2t_i^3/3}$ taking the role of $Z(\frac{\pi}{b-a})$. The ensuing analysis is similar to that of \cref{algorithm:rfe} above. We have $m_i=O\left(\frac{\log (N/\eta)}{C(t_i)^2}\right)$, which now depends on the iteration $i$.

The total number of samples is
\begin{equation}
\begin{aligned}
    K = \sum_{i=1}^N m_i = \log \frac{N}{\eta}  \sum_{i=1}^N e^{4G_{\text{max}}^2t_i^3/3}.
\end{aligned}
\end{equation}
Since $t_N \sim 1/\varepsilon$ and $t_i = t_N(2/3)^{N-i}$, applying \cref{lemma:convexity_lemma} below with $\lambda = 4G_{\text{max}}^2t_N^3$ and $r=(2/3)^3 = 8/27$, we get 
\begin{equation}
\begin{aligned}
    \sum_{i=1}^N e^{4G_{\text{max}}^2t_i^3/3} \sim N + e^{4G_{\text{max}}^2/3\varepsilon^3}.
\end{aligned}
\end{equation}

The total runtime is 
\begin{equation}
\begin{aligned}
    T = \sum_{i=1}^N m_it_i &= \log \frac{N}{\eta}  \sum_{i=1}^N e^{4G_{\text{max}}^2t_i^3/3} t_i\\
    &= \log \frac{N}{\eta}\cdot e^{4G_{\text{max}}^2/3\varepsilon^3} \cdot \frac{1}{\varepsilon}
\end{aligned}
\end{equation}
because the runtime at the last stage $t_N \sim 1/\varepsilon$ dominates.

All in all, we have
\begin{equation}\begin{aligned}
    \text{max sample runtime, } &\max_i t_i = O(1/\varepsilon)\\
    \text{number of samples, } &K = \sum_{i=1}^N m_i =  O\left(\left(\log \frac{\delta_{\text{max}}}{\varepsilon} + \exp(4G_{\text{max}}^2/3\varepsilon^3) \right)\left(\log \frac{1}{\eta}+\log \log \frac{\delta_{\text{max}}}{\varepsilon}\right)\right)\\
    \text{total runtime, } &T = \sum_{i=1}^N m_it_i = O\left(\left(\frac{\exp(4G_{\text{max}}^2/3\varepsilon^3)}{\varepsilon}\right)\left(\log \frac{1}{\eta}+\log \log \frac{\delta_{\text{max}}}{\varepsilon}\right)\right).
\end{aligned}\end{equation} 
Due to the $\exp(4G_{\text{max}}^2/3\varepsilon^3)$ factor, we see that \cref{algorithm:extended_ramsey_protocol} is largely useful in the regime where $G_{\text{max}}^2/\varepsilon^3 \sim \log \frac{1}{\varepsilon}$, or equivalently $G_{\text{max}} \sim \left(\varepsilon^3 \log \frac{1}{\varepsilon}\right)^{1/2}$ -- beyond that the complexity grows extremely rapidly. However, we also observe that this is ultimately an artefact of the presence of $t^3$ in the exponent. In the case of \textit{static} disorder (with Gaussian pdf), we retain the qualitative behaviour of \cref{example:channel_description_pdf_single_qubit_SEH/SEU}, but there is no longer any $t$-dependence in $\mu,\sigma^2$. In this situation \cref{algorithm:extended_ramsey_protocol} gives us a \textit{Heisenberg-limited} estimation protocol in the presence of noise of strength $\sigma = O(1)$ (recall that $\sigma^2$ is the variance of the Gaussian pdf, which characterises the strength of the disorder). This may be somewhat surprising, since the presence of noise generally degrade the performance of estimation protocols which attain the Heisenberg limit under noiseless settings \cite{huelga1997improvement,escher2011general,demkowicz2012elusive}. \hfill $\square$

\begin{lemma}\label{lemma:convexity_lemma}
    Let $\lambda>0$ and $0 < r \leq 1$. Then
    \begin{equation}
    \begin{aligned}
        n + (e^\lambda -1) \leq \sum_{i=0}^{n-1} e^{\lambda r^i} \leq n + \frac{1}{1-r}(e^\lambda -1).
    \end{aligned}
    \end{equation}
    In other words, fixing $r = \Theta(1)$ we have
    \begin{equation}
    \begin{aligned}
        \sum_{i=0}^{n-1} e^{\lambda r^i} \sim n + e^\lambda.
    \end{aligned}
    \end{equation}
\end{lemma}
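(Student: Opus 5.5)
Subtracting $n$ from each side, the claim becomes
\begin{equation*}
    e^\lambda - 1 \leq \sum_{i=0}^{n-1}\left(e^{\lambda r^i}-1\right) \leq \frac{1}{1-r}\left(e^\lambda - 1\right),
\end{equation*}
and we bound each summand $e^{\lambda r^i}-1 \geq 0$ from below and above separately.

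\textbf{Lower bound.} Every summand is nonnegative because $\lambda r^i \geq 0$. Keeping only the $i=0$ term gives
\begin{equation*}
    \sum_{i=0}^{n-1}\left(e^{\lambda r^i}-1\right) \geq e^\lambda - 1.
\end{equation*}

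\textbf{Upper bound.} This is where convexity enters, as the lemma's name suggests. The function $f(x) = e^{\lambda x}-1$ is convex on $[0,1]$ with $f(0)=0$. Hence for $s \in [0,1]$,
\begin{equation*}
    f(s) = f\big(s\cdot 1 + (1-s)\cdot 0\big) \leq s f(1) = s\left(e^\lambda - 1\right).
\end{equation*}
Applying this with $s = r^i \in (0,1]$ gives $e^{\lambda r^i} - 1 \leq r^i\left(e^\lambda - 1\right)$. Summing over $i$,
\begin{equation*}
    \sum_{i=0}^{n-1}\left(e^{\lambda r^i}-1\right) \leq \left(e^\lambda-1\right)\sum_{i=0}^{n-1} r^i \leq \frac{1}{1-r}\left(e^\lambda - 1\right),
\end{equation*}
which holds for $r<1$. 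When $r=1$, the stated right-hand side is infinite (read as $+\infty$), so the bound holds trivially; alternatively, the finite sum gives the sharper bound $n + n(e^\lambda-1)$.

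\textbf{Asymptotic form.} With $r = \Theta(1)$ bounded away from $1$, the factor $1/(1-r)$ is $O(1)$. The two bounds then match up to constants, which gives $\sum_{i=0}^{n-1} e^{\lambda r^i} \sim n + e^\lambda$.

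No step is a real obstacle. The only point needing care is to use the chord bound $f(s)\leq s f(1)$, which comes from convexity together with $f(0)=0$. The tangent-line inequality $e^x \geq 1+x$ points the wrong way for an upper bound.
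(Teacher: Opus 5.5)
Your proposal is correct and follows essentially the same route as the paper: subtract $n$, use the convexity chord bound $e^{\lambda x} \le 1 + x(e^\lambda - 1)$ for $x = r^i \in (0,1]$, sum the geometric series, and get the lower bound by keeping only the nonnegative $i=0$ term. Your explicit handling of the $r=1$ endpoint, where $1/(1-r)$ is undefined, is a small clarification the paper leaves unstated. Both arguments also implicitly need $n \ge 1$ for the lower bound.
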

\begin{proof}
    By the convexity of $e^x$ on $[0,1]$, Jensen gives $e^{\lambda x} = e^{(1-x)\cdot 0 + x\lambda} \leq (1-x)e^0 + xe^\lambda = 1+x(e^\lambda-1)$. Rewriting $\sum_{i=0}^{n-1} e^{\lambda r^i} = n + \sum_{i=0}^{n-1} (e^{\lambda r^i}-1)$, we have
    \begin{equation}
    \begin{aligned}
        n + (e^\lambda -1) &\leq n + \sum_{i=0}^{n-1} (e^{\lambda r^i}-1)\\
        &\leq n + \sum_{i=0}^{n-1} r^i (e^\lambda-1)\\
        &\leq n + \frac{1-r^n}{1-r}(e^\lambda-1)\\
        &\leq n + \frac{1}{1-r}(e^\lambda -1).
    \end{aligned}
    \end{equation}
\end{proof}

In the setting of \cref{problem:single_qubit_quantum_sensing} we assumed the upper bound $G_{\text{max}}$ was known. We now give a stronger extension of RFE, \cref{algorithm:strong_extended_rfe} in which $G_{\text{max}}$ is not assumed to be known.

\begin{algorithm}
\caption{Strong Extended RFE}
\label{algorithm:strong_extended_rfe}
\begin{algorithmic}[1]

\Require 
\Statex - Black-box access to the channel $\mathcal{S}_t$ (see \cref{equation:channel_description_SEH/SEU}), where $H(t)$ is given in \cref{problem:single_qubit_quantum_sensing}, but now $G_{\text{max}}$ is also unknown.
\Statex - Desired precision $\varepsilon>0$ for the estimate $\hat{\delta}_0$.
\Statex - Success probability $1-\eta$, $\eta>0$.

\Ensure 
\Statex - Estimate $\hat{\delta}_0$ satisfying $|\hat{\delta}_0-\delta_0| < \varepsilon$, with success probability $1-\eta$.

\Statex
\Pseudocode
\State Initialise $a=-\delta_{\text{max}}, b=\delta_{\text{max}}$.
\For{$i=1,\dots,N=O\left(\log \frac{\delta_{\text{max}}}{\varepsilon}\right)$}
    \State Set $t_i = O\left((\frac{1}{\delta_{\text{max}}})(\frac{3}{2})^{i-1}\right)$.
    \textcolor{DeepPink3}{\State Set $Z(t_i)=0$, $R=\frac{1}{2}$.}
    \algrenewtext{While}[1]{%
    \textcolor{DeepPink3}{\algorithmicwhile\ #1\ \algorithmicdo}%
    }
    \algrenewtext{EndWhile}{%
    \textcolor{DeepPink3}{\algorithmicend\ \algorithmicwhile}%
    }
    \While{$|Z(t_i)| \leq 3R$} \textcolor{DeepPink3}{\Comment{Ensure \cref{lemma:lemma8_hu2025}} can be used}
        \For{$j=1,\dots,\textcolor{DeepPink3}{m_{i,R}=O\left(\frac{1}{R^2}\left(\log \frac{N}{\eta} + \log \log \frac{1}{R} \right)\right)}$} \Comment{Ramsey subroutine}
            \State Initialize the qubit state $\ket{+} = \frac{1}{\sqrt{2}}(\ket{0}+\ket{1})$.
            \State Evolve $\ket{+}$ under $\mathcal{S}_{t_i}$.
            \State Measure the observable $\sigma_y$ and record the measurement outcome $Y_j(t_i) \in \{\pm1\}$.
        \EndFor
        \For{$j=1,\dots,\textcolor{DeepPink3}{m_{i,R}=O\left(\frac{1}{R^2}\left(\log \frac{N}{\eta} + \log \log \frac{1}{R} \right)\right)}$} \Comment{Ramsey subroutine}
            \State Initialize the qubit state $\ket{+} = \frac{1}{\sqrt{2}}(\ket{0}+\ket{1})$.
            \State Evolve $\ket{+}$ under $\mathcal{S}_{t_i}$.
            \State Measure the observable $\sigma_x$ and record the measurement outcome $X_j(t_i) \in \{\pm1\}$.
        \EndFor
        \State Compute $\hat{X}(t_i) = \frac{1}{\textcolor{DeepPink3}{m_{i,R}}}\sum_{j=1}^{\textcolor{DeepPink3}{m_{i,R}}} X_j(t_i)$, $\hat{Y}(t_i) = \frac{1}{\textcolor{DeepPink3}{m_{i,R}}}\sum_{j=1}^{\textcolor{DeepPink3}{m_{i,R}}} Y_j(t_i)$. Let $S(t_i)\coloneq \hat{X}(t_i) +$ 
        \Statex \hspace{\algorithmicindent}\hspace{\algorithmicindent}%
        $i\hat{Y}(t_i)$.
        \textcolor{DeepPink3}{\State $Z(t_i) \gets S(t_i)$.}
        \textcolor{DeepPink3}{\State $R \gets R/2$.}
    \EndWhile
    \State Compute $\im\left( \exp\left(-i\frac{(a+b)\pi}{2(b-a)}\right)\textcolor{DeepPink3}{Z(t_i)} \right)$.
    \If{$\im \leq 0$} \Comment{Invoke \cref{lemma:lemma8_hu2025}}
        \State $(a,b) \gets (a,\frac{a+2b}{3})$
    \ElsIf{$\im > 0$}
        \State $(a,b) \gets (\frac{2a+b}{3},b)$
    \EndIf
\EndFor
\State \textbf{print} $\hat{\delta}_0 \in [a,b]$

\end{algorithmic}
\end{algorithm}

\paragraph{Analysis of \cref{algorithm:strong_extended_rfe}.}
When $G_{\text{max}}$ was known, it was relatively simple to modify the RFE in \cref{algorithm:extended_rfe}. Let us recap what was done. \cref{lemma:lemma8_hu2025} involved a random variable $Z(t)$ satisfying $|Z(t)-e^{i\theta t}|< 1/2$. But in the setting of \cref{problem:single_qubit_quantum_sensing}, the quantity of interest is $|Z(t)-C(t)e^{i\theta t}|$, with $0<C(t)\leq 1$ in general. One can check easily that it is now insufficient to have $|Z(t)-C(t)e^{i\theta t}|<1/2$ -- \cref{lemma:lemma8_hu2025} would not work anymore. If however, we demand a precision of $C(t)/2$ instead of merely $1/2$, then $|Z(t)-C(t)e^{i\theta t}|<C(t)/2 \iff |Z(t)/C(t)-e^{i\theta t}|<1/2$, so \cref{lemma:lemma8_hu2025} can be used with $Z(t)/C(t)$ playing the role of $Z(t)$.

The problem now is we do not know the value of $C(t)$ a priori, and is thus unable to ascertain the number of raw samples required to construct $Z(t)$ satisfying $C(t)/2$-precision. However, if we know a lower bound to $C(t)$, say $C_{\text{min}}(t)$, then there is an easy workaround. By stipulating a precision of $C_{\text{min}}(t)/2$ we get
\begin{equation}
\begin{aligned}
    |Z(t)-C(t)e^{i\theta t}| < C_{\text{min}}(t)/2 &\implies |Z(t)-C(t)e^{i\theta t}| < C(t)/2\\
    &\iff |Z(t)/C(t)-e^{i\theta t}| < 1/2
\end{aligned}
\end{equation}
which is exactly what was done in \cref{algorithm:extended_rfe}, with $C_{\text{min}}(t) = e^{-2G_{\text{max}}^2t^3}/3$ (note that all along there is a time-dependence in the $R$-factor).

What happens now that we do not assume to know $G_{\text{max}}$ a priori? In this case no $C_{\text{min}}(t)$ exists, and the above workaround fails. However, here we show that this problem can be overcome. The idea is to start with an initial precision, say $R=1/2$, and construct $Z(t)$ such that $|Z(t)-C(t)e^{i\theta t}|<R$. A priori it need not hold that $|Z(t)-C(t)e^{i\theta t}|<C(t)/2$, which is the true objective if our intention is to apply \cref{lemma:lemma8_hu2025} (remember, not knowing $C(t)$ is the ultimate source of our troubles). It does hold however, if $R<C(t)/2$. So we need a way to check when this condition holds.

It turns out that there is a simple way to do so! Namely, we simply check if $|Z(t)| > 3R$. The reason is as follows: assuming $|Z(t)-C(t)e^{i\theta t}|<R$, then by the reverse triangle inequality we also have
\begin{equation}
\begin{aligned}
    &|Z(t)|-C(t) \leq ||Z(t)|-C(t)| < |Z(t)-C(t)e^{i\theta t}| < R\\
    \implies &|Z(t)|-R < C(t).
\end{aligned}
\end{equation}
If we want $R<C(t)/2$, then it is sufficient to ask that $R < \frac{|Z(t)|-R}{2}$. Rearranging, we get $|Z(t)| > 3R$.

The strategy now is as follows. First note that in \cref{problem:single_qubit_quantum_sensing}, $C(t)=e^{-2G^2t^3/3}$. For the $i$th run in the outermost for-loop (where we decide how to cut down the interval containing $\theta$) we start with an initial precision, say $R=1/2$. We collect a number of samples, given by Hoeffding, to ensure that $|Z(t_i)-C(t_i)e^{i\theta t_i}|<R$ with high probability. To ensure that $|Z(t_i)-C(t_i)e^{i\theta t_i}|<C(t_i)/2$, we check whether the condition $|Z(t_i)| > 3R$ holds or not. If yes then we are done; if not then we let $R \gets R/2$ and repeat. This is precisely what the while-loop in \cref{algorithm:strong_extended_rfe} is doing. Eventually this while-loop must terminate, because for a fixed $C(t_i)$, $\log \frac{1}{C(t_i)}$ iterations is sufficient for $R$ to become smaller than $C(t_i)/2$. Once this loop terminates, we would have the $Z(t_i)$ satisfying $|Z(t_i)-C(t_i)e^{i\theta t_i}|<C(t_i)/2$. Then proceed as before -- apply \cref{lemma:lemma8_hu2025} and cut down the interval containing $\theta$.
 
Now we analyse the resources required to ensure the entire protocol works below failure probability $\eta$. As before, since there are $N$ iterations of the shaving off process, within the $i$th iteration we need to guarantee a $\eta/N$ failure probability -- in particular, this means the while-loop must produce the $Z(t_i)$ with this failure probability. 

Now consider just this $i$th iteration. The while-loop within this iteration takes 
\begin{equation}
\begin{aligned}
    \log \frac{1}{C(t_i)}
\end{aligned}
\end{equation}
runs to terminate. For the $k$th iteration in the while-loop, where $k=1,\dots,\log \frac{1}{C(t_i)}$ , it suffices to let its failure probability be $\alpha_k = \frac{\eta}{N}\cdot \frac{6}{\pi^2k^2}$, since $\sum_{k=1}^\infty \alpha_k = \frac{\eta}{N} \cdot \frac{6}{\pi^2} \sum_{k=1}^\infty \frac{1}{k^2} = \frac{\eta}{N}$. The precision at the $k$th iteration is $R_k=1/2^k$, so by Hoeffding the number of samples $m_{i,R_k}$ required for the Ramsey subroutines is 
\begin{equation}
\begin{aligned}
    m_{i,R_k} &= O\left(\frac{1}{R_k^2} \log \frac{1}{\alpha_k}\right)\\
    &= O\left(\frac{1}{R_k^2} \log \frac{Nk^2}{\eta}\right)\\
    &= O\left(\frac{1}{R_k^2} \log \left(\frac{N}{\eta}\cdot \log^2 \frac{1}{R_k} \right)\right)\\
    &= O\left(\frac{1}{R_k^2} \left(\log \frac{N}{\eta} + \log \log \frac{1}{R_k} \right)\right)
\end{aligned}
\end{equation}
This is just for the $k$th iteration in the while-loop. For the whole while-loop, the number of samples is
\begin{equation}
\begin{aligned}
    m_i = \sum_{k=1}^{\log 1/C(t_i)} m_{i,R_k} = O\left(\frac{1}{C(t_i)^2} \left(\log \frac{N}{\eta} + \log \log \frac{1}{C(t_i)} \right)\right).
\end{aligned}
\end{equation}
To get the equality we have used $R_k=1/2^k$, and that the sum of a geometric series (which is what we have here modulo $\log \log$ terms) is a constant times the last term in the series, which in this case is $1/C(t_i)^2$.

Finally, summing over the outermost if-loop gives the total number of samples
\begin{equation}
\begin{aligned}
    K = \sum_{i=1}^N m_i = O\left(
    \left(\log \frac{\delta_{\text{max}}}{\varepsilon} + \exp(4G^2/3\varepsilon^3) \right)
    \left(\log \frac{1}{\eta}+\log \log \frac{\delta_{\text{max}}}{\varepsilon}\right) + \exp(4G^2/3\varepsilon^3) \log \frac{G^2}{\varepsilon^3}
    \right)
\end{aligned}
\end{equation}
and total runtime
\begin{equation}
\begin{aligned}
    T = \sum_{i=1}^N m_it_i = O\left(
    \left(\frac{\exp(4G^2/3\varepsilon^3)}{\varepsilon}\right)\left(\log \frac{1}{\eta}+\log \log \frac{\delta_{\text{max}}}{\varepsilon} + \log \frac{G^2}{\varepsilon^3} \right)
    \right)
\end{aligned}
\end{equation}
The analysis of $K$ and $T$ here is very similar to that for \cref{algorithm:extended_rfe} above, where we invoked \cref{lemma:convexity_lemma} for $K$ and domination by the last sample runtime $t_N$ for $T$. 

We emphasize again that all $1/\varepsilon^3$ factors here have origins in the presence of $t_N^3$ factors and the fact that $t_N \sim 1/\varepsilon$. Also again, as in \cref{algorithm:extended_rfe}, \cref{algorithm:strong_extended_rfe} is largely useful only in the regime $G \sim \varepsilon^{3/2} \log^{1/2} \frac{1}{\varepsilon}$, and in the case of static disorder (with Gaussian pdf) it gives us a \textit{Heisenberg-limited} estimation protocol in the presence of noise of strength $\sigma = O(1)$. \hfill $\square$

We summarize our results above in the following table:
\begin{table}[H]
\centering
\begin{tabular}{l|ccc} 

& 
\makecell[c]{Max sample runtime,\\$\max_i t_i$} & 
\makecell[c]{Number of samples,\\$K$} & 
\makecell[c]{Total runtime,\\$T$} \\
\hline

\makecell[c]{Ramsey\\ \cref{algorithm:ramsey_protocol}} & 
$O(1)$ & 
$O(1/\varepsilon^2)$ & 
$O(1/\varepsilon^2)$ \\

&&& \\

\makecell[c]{RFE\\ \cref{algorithm:rfe}} & 
$O(1/\varepsilon)$ &
$O\left(\log \frac{\delta_{\text{max}}}{\varepsilon}\right)$ &
$O(1/\varepsilon)$ \\

&&& \\

\makecell[c]{Extended Ramsey\\ \cref{algorithm:extended_ramsey_protocol}} &
$O(1)$ & 
$O(1/\varepsilon^2)$ & 
$O(1/\varepsilon^2)$ \\

&&& \\

\makecell[c]{Extended RFE\\ \cref{algorithm:extended_rfe}} &
$O(1/\varepsilon)$ &
$O\left(\log \frac{\delta_{\text{max}}}{\varepsilon} + \exp\left(\frac{4G_{\text{max}}^2}{3\varepsilon^3} \right)\right)$ &
$O\left(\frac{\exp(4G_{\text{max}}^2/3\varepsilon^3)}{\varepsilon}\right)$ \\

&&& \\

\makecell[c]{Strong Extended RFE\\ \cref{algorithm:strong_extended_rfe}} &
$O(1/\varepsilon)$ &
$O\left(\log \frac{\delta_{\text{max}}}{\varepsilon}\right) + \widetilde{O}\left(\exp\left(\frac{4G_{\text{max}}^2}{3\varepsilon^3} \right) \right)$ &
$O(1/\varepsilon) \cdot \widetilde{O}\left(\exp\left(\frac{4G_{\text{max}}^2}{3\varepsilon^3} \right) \right)$ \\

\end{tabular}

\par\bigskip

\begin{tabular}{l|cc} 

& 
Useful in the regime... & 
Comments \\
\hline

\makecell[c]{Ramsey\\ \cref{algorithm:ramsey_protocol}} & 
$G = 0$ & 
Standard quantum limit \\

&& \\

\makecell[c]{RFE\\ \cref{algorithm:rfe}} & 
$G = 0$ &
Heisenberg limit \\

&& \\

\makecell[c]{Extended Ramsey\\ \cref{algorithm:extended_ramsey_protocol}} &
$G \lesssim 1$ & 
Standard quantum limit \\

&& \\

\makecell[c]{Extended RFE\\ \cref{algorithm:extended_rfe}} &
$G \lesssim \varepsilon^{3/2}\log^{1/2}\frac{1}{\varepsilon}$ &
\makecell[c]{Heisenberg limit when $G \lesssim \varepsilon^{3/2}$;\\also Heisenberg limit for static disorder of noise strength $\sigma \lesssim 1$.} \\

&& \\

\makecell[c]{Strong Extended RFE\\ \cref{algorithm:strong_extended_rfe}} &
$G \lesssim \varepsilon^{3/2}\log^{1/2}\frac{1}{\varepsilon}$ &
\makecell[c]{Does not assume $G_{\text{max}}$ is known a priori;\\Heisenberg limit when $G \lesssim \varepsilon^{3/2}$;\\also Heisenberg limit for static disorder of noise strength $\sigma \lesssim 1$.} \\

\end{tabular}

\caption{
A comparison of all five protocols in this section. \cref{algorithm:ramsey_protocol,algorithm:rfe} are existing works in the literature; \cref{algorithm:extended_ramsey_protocol,algorithm:extended_rfe,algorithm:strong_extended_rfe} are designed to handle the stochastic evolution in the parameter to be estimated. The sample complexity and total runtime complexity in \cref{algorithm:rfe,algorithm:extended_rfe,algorithm:strong_extended_rfe} are stated modulo $\log \log$ factors and for simplicity, the $\log \frac{1}{\eta}$ success probability factor is also omitted throughout.
}

\label{table:quantum_sensing_algorithms}
\end{table}

\begin{remark}
    To end this section, we remark that we could consider different SDEs governing $\delta(t)$. For instance, under the Ornstein-Uhlenbeck process (see \cref{example:examples_of_SDEs}) we would have
    \begin{equation}\begin{aligned}
        d\delta(t) &= \gamma(\delta_0-\delta(t))\,dt + G\,dW(t)\\
        \delta(0) &= \delta_0.
    \end{aligned}\end{equation}
    The physical significance of Ornstein-Uhlenbeck is that this model incorporates `damage control', i.e. the mean-reverting term $\gamma(\delta_0-\delta(t))$ counterbalances the diffusive term $G$. A preliminary analysis shows that while the distributions of $\delta(t)$ and $\int_0^t \delta(\tau)\, d\tau$ becomes more complicated, the same techniques from \cref{algorithm:extended_ramsey_protocol,algorithm:extended_rfe,algorithm:strong_extended_rfe} still apply. We do not include it here in this paper, and leave that for future work.
    
\end{remark}

\section{Hamiltonian Simulation under Stochastic Evolution}\label{section:Hamiltonian_simulation}

In this section, we consider the task of Hamiltonian simulation. Specifically, we analyse the qDRIFT protocol when the quantum system is evolved by an SEH. Our choice of task is motivated by qDRIFT being used to implement a subroutine (Hamiltonian reshaping) of the Hamiltonian learning protocol we shall see in \cref{section:Hamiltonian_learning} below. 

Given a Hamiltonian $\widetilde{H} = \sum_j w_j H_j$, its associated unitary channel is $\mathcal{U}_t(\cdot) \coloneq e^{-i\widetilde{H}t}(\cdot)e^{i\widetilde{H}t}$. The idea of qDRIFT \cite{campbell2018random} is to produce an approximating channel $\mathcal{E}_t(\cdot)$ through the following simple procedure. Assuming we have access to all the unitaries $e^{-iH_jt}$, simply sample from the set $\{e^{-iH_jt}\}_j$ according to the distribution $\{p_j\}$ where $p_i \coloneq w_i/\sum_j w_j$. This produces the channel
\begin{equation}\begin{aligned}
    \mathcal{E}_t(\cdot) = \sum_j p_j e^{-iH_jt}(\cdot)e^{iH_jt}.
\end{aligned}\end{equation}
Taylor-expanding both sides one sees that $\mathcal{E}_t$ is equal to $\mathcal{U}_t$ up to first order in $t$, and that
\begin{equation}\begin{aligned}
    d_\diamond(\mathcal{E}_t,\mathcal{U}_t) = O(t^2).
\end{aligned}\end{equation}
Thus, in the spirit of Trotterization, if the desired total evolution time is $t$, then constructing the approximating channels $\mathcal{E}_{t/N}$ for evolution time $t/N$ and composing these $N$ channels yields an approximation to $\mathcal{U}_t$ with error $O(t^2/N)$ in the diamond norm. The outline here will be stated in full detail below when we discuss its analysis for \textit{time-dependent} Hamiltonians in \cref{result:time-dependent_qDRIFT}.

We state the qDRIFT protocol formally in \cref{algorithm:qDRIFT}.
\begin{algorithm}
\caption{qDRIFT}
\label{algorithm:qDRIFT}
\begin{algorithmic}[1]

\Require 
\Statex - The classical description of the Hamiltonian $\widetilde{H} = \sum_j w_j H_j$, where for simplicity we set $\sum_j w_j = 1$.
\Statex - Black-box access to the unitaries $\{e^{-iH_jt}\}_j$.
\Statex - Evolution time $t$.
\Statex - Classical sampler $\textsc{sample}()$ that returns a value $j$ from the probability distribution $\{w_j\}_j$.
\Statex - Desired channel precision $\varepsilon>0$.

\Ensure 
\Statex - A channel $\mathcal{E}_t$ satisfying $d_\diamond(\mathcal{E}_t,\mathcal{U}_t) < \varepsilon$.

\Statex
\Pseudocode
\State Set $\mathcal{U}_t^{\text{app}} = \mathcal{I}$.
\For{$i=1,\dots,N=O(t^2/\varepsilon)$}
    \State $i \gets \textsc{sample}()$
    \State $\mathcal{U}_t^{\text{app}} \gets e^{-iH_it/N} \circ \mathcal{U}_t^{\text{app}}$
\EndFor
\State \Return $\mathcal{U}_t^{\text{app}}$ \Comment{$\mathcal{E}_t = \E_{\text{qDRIFT}}[\mathcal{U}_t^{\text{app}}]$, $\E_{\text{qDRIFT}} = \E_{\{w_j\}_j^{\otimes N}}$}

\end{algorithmic}
\end{algorithm}

When the Hamiltonian of interest is an SEH, we have to take into account both time-dependence and stochasticity in the qDRIFT protocol. We shall build things up one at a time, starting with time-dependence.

\subsection{Incorporating time-dependence}\label{subsection:ham_sim_incorporating_td}
When $\widetilde{H}(t)$ is time-dependent, the unitary channel of interest is
\begin{equation}\begin{aligned}
    \mathcal{U}_{(t,0)}(\rho) = \exp_\mathcal{T}\left(-i\int_0^t \widetilde{H}(\tau)\,d\tau \right) \rho \exp_\mathcal{T}^\dag \left(-i\int_0^t \widetilde{H}(\tau)\,d\tau \right).
\end{aligned}\end{equation}
Here the we use the subscript $(t,0)$ to indicate the time evolution interval.

At first glance one would consider existing time-dependent extensions of qDRIFT \cite{berry2020time} but it turns out that is not suitable, due to differences in the allowed primitives. The reason is this: in the usual Hamiltonian simulation context, whether via Trotterization methods \cite{huyghebaert1990product,poulin2011quantum,childs2021theory,bosse2025efficient} or qDRIFT \cite{campbell2018random,berry2020time}, the goal is to approximate the unitary generated by some complicated, possibly time-dependent Hamiltonian $\widetilde{H}$ using simple building blocks, most often taking the form $e^{-iH_jt}$ for the Hamiltonian components $H_j$ (which could be further broken down into simpler operations via Solovay-Kitaev). That is, the primitives are the $e^{-iH_jt}$'s. For instance in \cite{berry2020time} the simulating channel for $U(t) \coloneq \exp_\mathcal{T}\left(-i\int_0^t \widetilde{H}(\tau)\,d\tau \right)$, $\widetilde{H}(\tau) = \sum_j H_j(\tau)$ is 
\begin{equation}\begin{aligned}
    \mathcal{E}_{(t,0)}(\rho) = \sum_j \int_0^t p_j(\tau) e^{-i\frac{H_j(\tau)}{p_j(\tau)}} \rho e^{i\frac{H_j(\tau)}{p_j(\tau)}}
\end{aligned}\end{equation}
for some suitably chosen pdfs (over $j$ and $\tau$) $p_j(\tau)$. In \cite{berry2020time}, the authors made the assumption that they had the ability to exponentiate the components $H_j(\tau)$ for any time $\tau$, i.e. implement the unitary $U(\tau)=e^{-iH_j(\tau)/p_j(\tau)}$. On the other hand, for our purposes (motivated by Hamiltonian reshaping, see \cref{section:Hamiltonian_learning}) we assume that the primitives we have are the $\exp_\mathcal{T}\left( -i\int H_j(\tau)\,d\tau \right)$'s, i.e. the (time-ordered) unitaries generated by the constituent $H_j(t)$'s. This primitive in fact makes our task easier. As we show now, the original qDRIFT protocol already suffices, although the error analysis requires more care. For clarity, we restate the qDRIFT protocol for time-dependent Hamiltonians in \cref{algorithm:time-dependent_qDRIFT} and \cref{result:time-dependent_qDRIFT}, highlighting the changes in primitives and complexities in \textcolor{DeepPink3}{pink}.

\begin{algorithm}
\caption{qDRIFT for time-dependent Hamiltonians}
\label{algorithm:time-dependent_qDRIFT}
\begin{algorithmic}[1]

\Require 
\Statex - The classical description of the Hamiltonian $\widetilde{H}(t) = \sum_j w_j H_j(t)$, where $\sum_j w_j = 1$.
    \Statex - Black-box access to the unitaries \textcolor{DeepPink3}{$\left\{\exp_\mathcal{T} \left( -i\int_{t_1}^{t_2} H_j(\tau)\,d\tau \right)\right\}_j$, for arbitrary evolution times $t_1,t_2$}.
\Statex - Evolution time $t$.
\Statex - Classical sampler $\textsc{sample}()$ that returns a value $j$ from the probability distribution $\{w_j\}_j$.
\Statex - Desired channel precision $\varepsilon>0$.

\Ensure 
\Statex - A channel $\mathcal{E}_t$ satisfying $d_\diamond(\mathcal{E}_t,\mathcal{U}_t) < \varepsilon$.

\Statex
\Pseudocode
\State Set $\mathcal{U}_t^{\text{app}} = \mathcal{I}$.
\For{$i=1,\dots,\textcolor{DeepPink3}{N=O\left( \frac{t^2 \sup_{\tau \in [0,t]} \max_j \|H_j(\tau)\|^2_\infty}{\varepsilon} \right)}$}
    \State $i \gets \textsc{sample}()$
    \State $\mathcal{U}_t^{\text{app}} \gets \textcolor{DeepPink3}{\exp_\mathcal{T}\left(-i\int_{\frac{(i-1)t}{N}}^{\frac{it}{N}} H_j(\tau)\,d\tau \right)} \circ \mathcal{U}_t^{\text{app}}$
\EndFor
\State \Return $\mathcal{U}_t^{\text{app}}$ \Comment{$\mathcal{E}_t = \E_{\text{qDRIFT}}[\mathcal{U}_t^{\text{app}}]$, $\E_{\text{qDRIFT}} = \E_{\{w_j\}_j^{\otimes N}}$}

\end{algorithmic}
\end{algorithm}

\begin{result}[qDRIFT error bound for time-dependent Hamiltonians]\label{result:time-dependent_qDRIFT}
    Given the Hamiltonian $\widetilde{H}(\tau)$ written as a convex sum
    \begin{equation}\begin{aligned}
        \widetilde{H}(t) = \sum_j w_jH_j(t),
    \end{aligned}\end{equation}
    and assume we have the ability to implement the time-dependent evolutions $\exp_\mathcal{T} \left( -i\int_{t_1}^{t_2} H_j(\tau)\,d\tau \right)$ for any $j$ and evolution start and endtimes $t_1,t_2$. 
    
    Then for an $N$-step qDRIFT protocol \cref{algorithm:time-dependent_qDRIFT} we have
    \begin{equation}\begin{aligned}
        d_\diamond(\mathcal{E}^N_{(t,0)},\mathcal{U}_{(t,0)}) &\leq \frac{2t^2}{N^2} \sum_i \sup_{\tau \in [\frac{it}{N},\frac{(i-1)t}{N}]} \max_j \|H_j(\tau)\|^2_\infty\\
        &\leq \frac{2t^2}{N} \sup_{\tau \in [0,t]} \max_j \|H_j(\tau)\|^2_\infty.
    \end{aligned}\end{equation}
    Here $\mathcal{E}^N_{(t,0)}$ is the channel produced by qDRIFT (the superscript $N$ is to emphasize the dependence on the step number $N$), and $\mathcal{U}_{(t,0)}$ is the unitary channel generated by $\widetilde{H}(t)$. Therefore, with a desired precision $\varepsilon>0$ the number of sampling steps required is
    \begin{equation}\begin{aligned}
        N \gtrsim \frac{t^2 \sup_{\tau \in [0,t]} \max_j \|H_j(\tau)\|^2_\infty}{\varepsilon}.
    \end{aligned}\end{equation}

    For the Hamiltonian reshaping task in \cref{section:Hamiltonian_learning} it further holds that $H_j(t)=U_jH(t)U_j^\dag$ for the reshaping subject $H(t)$. In this case
    \begin{equation}\begin{aligned}
        d_\diamond(\mathcal{E}^N_{(t,0)},\mathcal{U}_{(t,0)}) \leq \frac{2t^2}{N} \sup_{\tau \in [0,t]} \|H(\tau)\|^2_\infty.
    \end{aligned}\end{equation}
\end{result}

\begin{remark}
    Before proving \cref{result:time-dependent_qDRIFT} we make a remark on the assumed primitives, which are
    \begin{equation}\begin{aligned}
        \mathcal{V}_{j}(t_2,t_1) \coloneq \exp_\mathcal{T} \left( -i\int_{t_1}^{t_2} H_j(\tau)\,d\tau \right)
    \end{aligned}\end{equation}
    for all $j$'s. This can be implemented in particular if we have access to $\mathcal{V}_{j}^\dag$, since $\mathcal{V}_{j}(t_2,t_1) = \mathcal{V}_{j}(t_2,0)\mathcal{V}_{j}^\dag(t_1,0)$.

    In the case of Hamiltonian reshaping, $H_{j}(t)$ takes the form $H_{j}(t) = U_jH(t)U_j^\dag$ for some `base' Hamiltonian $H(t)$, so
    \begin{equation}\begin{aligned}
        \mathcal{V}_{j}(t_2,t_1) &= \exp_\mathcal{T} \left( -i\int_{t_1}^{t_2} U_jH(\tau)U_j^\dag\,d\tau \right)\\
        &= U_j \exp_\mathcal{T} \left( -i\int_{t_1}^{t_2} H(\tau)\,d\tau \right) U_j^\dag.
    \end{aligned}\end{equation}
    Here we do not need to have access to $\mathcal{V}_{j}^\dag$. It suffices to have access to $\exp_\mathcal{T} \left( -i\int H(\tau)\,d\tau \right)$ and the ability to implement the unitaries $U_j, U_j^\dag$ instantaneously. This is the called the `discrete quantum control' model, one of the three control modalities considered in \cite{dutkiewicz2024advantage}.
\end{remark}

\begin{proof} 
The proof structure here is similar to the one in \cite{campbell2018random}. We want to simulate the time evolution
\begin{equation}\begin{aligned}
    \exp_\mathcal{T} \left( -i\int_0^t \widetilde{H}(\tau)\,d\tau \right)
\end{aligned}\end{equation}
using only queries to
\begin{equation}\begin{aligned}
    \exp_\mathcal{T} \left( -i\int H_j(\tau)\,d\tau \right).
\end{aligned}\end{equation}
First, consider arbitrary start and end times for evolution. Given the target channel $\mathcal{U}_{(t,t_0)}$ and simulating channel $\mathcal{E}_{(t,t_0)}$ both implementing evolution from initial time $t_0$ to final time $t$, we first write down their Dyson series in terms of the Liouvillian superoperators, see \cref{equation:dyson_series_unitary_channel}. For the target channel we have
\begin{equation}\begin{aligned}
    \mathcal{U}_{(t,t_0)} &= \exp_\mathcal{T}\left( \int_{t_0}^{t} \mathcal{L}(\tau)\,d\tau \right)\\
    &= \sum_{n=0}^\infty \frac{1}{n!} \int_{t_0}^t d\tau_1 \int_{t_0}^t d\tau_2 \;\dots \int_{t_0}^t d\tau_n\,\mathcal{T}\{\mathcal{L}(\tau_1)\mathcal{L}(\tau_2)\dots \mathcal{L}(\tau_n)\}
\end{aligned}\end{equation}
and similarly for the simulating channel we have
\begin{equation}\begin{aligned}
    \mathcal{E}_{(t,t_0)} &= \sum_j w_j \exp_\mathcal{T}\left( \int_{t_0}^{t} \mathcal{L}_j(\tau)\,d\tau \right)\\
    &= \sum_j w_j \sum_{n=0}^\infty \frac{1}{n!} \int_{t_0}^t d\tau_1 \int_{t_0}^t d\tau_2 \;\dots \int_{t_0}^t d\tau_n\, \mathcal{T}\{\mathcal{L}_j(\tau_1)\mathcal{L}_j(\tau_2)\dots \mathcal{L}_j(\tau_n)\}.
\end{aligned}\end{equation}
Here the Liouvillians are $\mathcal{L}(\tau)(X) = -i[\widetilde{H}(\tau),X]$ and $\mathcal{L}_j(\tau)(X) = -i[H_j(\tau),X]$.

We seek to upper bound $\|\mathcal{E}_{(t,t_0)} - \mathcal{U}_{(t,t_0)}\|_\diamond$. To do so we shall expand the Dyson series for the channels above, note the cancellation of terms up to first order in $t-t_0$, and bound the remainder. We have
\begin{equation}
\begin{aligned}
&\|\mathcal{U}_{(t,t_0)}-\mathcal{E}_{(t,t_0)}\|_\diamond\\
&=
\left\|
\textcolor{DeepPink3}{
\cancel{I + \int_{t_0}^t d\tau_1\,\mathcal{L}(\tau_1)} + \sum_{n=2}^{\infty}\frac{1}{n!} \int_{t_0}^t d\tau_1 \int_{t_0}^t d\tau_2 \;\dots \int_{t_0}^t d\tau_n\, \mathcal{T}\{\mathcal{L}(\tau_1)\dots\mathcal{L}(\tau_n)\}
}
\right.\\
&\quad
\left.
-\textcolor{Blue3}{\left(
\cancel{I + \sum_j w_j \int_{t_0}^t d\tau_1\,\mathcal{L}_j(\tau_1)} + \sum_j w_j \sum_{n=2}^\infty \frac{1}{n!} \int_{t_0}^t d\tau_1 \int_{t_0}^t d\tau_2 \;\dots \int_{t_0}^t d\tau_n\, \mathcal{T}\{\mathcal{L}_j(\tau_1)\mathcal{L}_j(\tau_2)\dots \mathcal{L}_j(\tau_n)\}
\right)}
\right\|_\diamond\\
&\leq \sum_{n=2}^{\infty}\frac{1}{n!} \int_{t_0}^t d\tau_1 \int_{t_0}^t d\tau_2 \;\dots \int_{t_0}^t d\tau_n\, \left\|\mathcal{T}\{\mathcal{L}(\tau_1)\dots\mathcal{L}(\tau_n)\} \right\|_\diamond\\
&\qquad+\, \sum_j w_j \sum_{n=2}^\infty \frac{1}{n!} \int_{t_0}^t d\tau_1 \int_{t_0}^t d\tau_2 \;\dots \int_{t_0}^t d\tau_n\, \left\|\mathcal{T}\{\mathcal{L}_j(\tau_1)\mathcal{L}_j(\tau_2)\dots \mathcal{L}_j(\tau_n)\}
\right\|_\diamond\\
&\leq \sum_{n=2}^{\infty}\frac{1}{n!} \int_{t_0}^t d\tau_1 \int_{t_0}^t d\tau_2 \;\dots \int_{t_0}^t d\tau_n\; 2^n \sup_{\tau \in [t_0,t]} \|\widetilde{H}(\tau)\|^n_\infty\\
&\qquad+\, \sum_j w_j \sum_{n=2}^\infty \frac{1}{n!} \int_{t_0}^t d\tau_1 \int_{t_0}^t d\tau_2 \;\dots \int_{t_0}^t d\tau_n\; 2^n \sup_{\tau \in [t_0,t]} \|H_j(\tau)\|^n_\infty\\
&= \sum_{n=2}^{\infty}\frac{2^n(t-t_0)^n}{n!} \sup_{\tau \in [t_0,t]} \|\widetilde{H}(\tau)\|^n_\infty +
\sum_j w_j \sum_{n=2}^\infty \frac{2^n(t-t_0)^n}{n!} \sup_{\tau \in [t_0,t]} \|H_j(\tau)\|^n_\infty\\
&= \sum_{n=2}^{\infty}\frac{2^n(t-t_0)^n}{n!} \left( \sup_{\tau \in [t_0,t]} \|\widetilde{H}(\tau)\|^n_\infty + \sum_j w_j \sup_{\tau \in [t_0,t]} \|H_j(\tau)\|^n_\infty \right).
\end{aligned}
\label{equation:channel_diff_first_order}
\end{equation}
In the second inequality we have invoked \cref{lemma:bound_for_L_tau} below and further bounded each $\|H_j(\tau_i)\|_\infty$ by $\sup_{\tau \in [t_0,t]} \|H_j(\tau)\|_\infty$ (similarly for the $\widetilde{H}(\tau_i)'s$), and in the second equality we have evaluated the simple volume integral. The sum within the parentheses in the last expression can be further simplified. For any $\tau$, $\|H_j(\tau)\|_\infty \leq \max_j \|H_j(\tau)\|_\infty$, from which we also get $\|\widetilde{H}(\tau)\|_\infty \leq \sum_j w_j \|H_j(\tau)\|_\infty \leq \max_j \|H_j(\tau)\|_\infty$. Thus 
\begin{equation}\begin{aligned}
    \sup_{\tau \in [t_0,t]} \|\widetilde{H}(\tau)\|^n_\infty + \sum_j w_j \sup_{\tau \in [t_0,t]} \|H_j(\tau)\|^n_\infty \leq 2 \sup_{\tau \in [t_0,t]} \max_j \|H_j(\tau)\|^n_\infty
\end{aligned}\end{equation}
and we get
\begin{equation}\begin{aligned}
    \|\mathcal{E}_{(t,t_0)}-\mathcal{U}_{(t,t_0)}\|_\diamond \leq 2\sum_{n=2}^{\infty}\frac{2^n(t-t_0)^n}{n!} \sup_{\tau \in [t_0,t]} \max_j \|H_j(\tau)\|^n_\infty
\end{aligned}\end{equation}
or equivalently
\begin{equation}\begin{aligned}
    d_\diamond(\mathcal{E}_{(t,t_0)},\mathcal{U}_{(t,t_0)}) \leq \sum_{n=2}^{\infty}\frac{2^n(t-t_0)^n}{n!} \sup_{\tau \in [t_0,t]} \max_j \|H_j(\tau)\|^n_\infty.
\end{aligned}\end{equation}
Next we make use of an exponential tail bound which states that $\sum_{n=2}^\infty \frac{x^n}{n!} \leq \frac{x^2}{2}e^x$ for all $x>0$. This result is a simple consequence of Taylor's remainder theorem (in the Lagrange form): for the exponential function $e^x$ we have that $e^x = 1+x+\frac{x^2}{2}e^\xi$ for some $0<\xi<x$, thus $\sum_{n=2}^\infty \frac{x^n}{n!} = e^x-1-x = \frac{x^2}{2}e^\xi < \frac{x^2}{2}e^x$. Plugging in $x = 2(t-t_0)\sup_{\tau \in [t_0,t]} \max_j \|H_j(\tau)\|_\infty$ yields our final bound
\begin{equation}\begin{aligned}
    d_\diamond(\mathcal{E}_{(t,t_0)},\mathcal{U}_{(t,t_0)}) \leq 2(t-t_0)^2\sup_{\tau \in [t_0,t]} \max_j \|H_j(\tau)\|^2_\infty \exp\left( 2(t-t_0)\sup_{\tau \in [t_0,t]} \max_j \|H_j(\tau)\|_\infty \right).
\end{aligned}\end{equation}
While we have let the evolution time be $t-t_0$ for generality, ultimately $t-t_0$ is intended to be small in the qDRIFT protocol. If we sample $N$ times from $\{H_j\}$, the evolution time for each sample has length $t/N \ll 1$. Thus letting $t-t_0 \gets t/N$ we arrive at 
\begin{equation}\begin{aligned}
    d_\diamond(\mathcal{E}_{(\frac{t}{N},0)},\mathcal{U}_{(\frac{t}{N},0)}) &\leq \frac{2t^2}{N^2}\sup_{\tau \in [0,\frac{t}{N}]} \max_j \|H_j(\tau)\|^2_\infty \exp\left( \frac{2t^2}{N^2}\sup_{\tau \in [0,\frac{t}{N}]} \max_j \|H_j(\tau)\|_\infty \right)\\
    &\approx \frac{2t^2}{N^2}\sup_{\tau \in [0,\frac{t}{N}]} \max_j \|H_j(\tau)\|^2_\infty.
\end{aligned}\end{equation}
Achieving a total simulation period $[0,t]$ amounts to simulating 
\begin{equation}\begin{aligned}
    \mathcal{U}_{(t,0)} = \prod_{1 \leq i \leq N}^\leftarrow \mathcal{U}_{(\frac{it}{N},\frac{(i-1)t}{N})} = \mathcal{U}_{(t,\frac{(N-1)t}{N})}\mathcal{U}_{(\frac{(N-1)t}{N},\frac{(N-2)t}{N})} \dots \mathcal{U}_{(\frac{2t}{N},\frac{t}{N})}\mathcal{U}_{(\frac{t}{N},0)}
\end{aligned}\end{equation}
with
\begin{equation}\begin{aligned}\label{equation:simulating_channel_decomposition}
    \mathcal{E}^N_{(t,0)} = \prod_{1 \leq i \leq N}^\leftarrow \mathcal{E}_{(\frac{it}{N},\frac{(i-1)t}{N})}.
\end{aligned}\end{equation}
Note that the $i$th constituent in the product in \cref{equation:simulating_channel_decomposition} is obtained by sampling $j$, then implementing
\begin{equation}\begin{aligned}
    \exp_\mathcal{T}\left(-i\int_{\frac{(i-1)t}{N}}^{\frac{it}{N}} H_j(\tau)\,d\tau \right).
\end{aligned}\end{equation}
Each constituent has an evolution time length $t/N$. From the compositional subadditivity of the diamond norm for quantum channels (see \cref{appendix:quantum_info} below) we get
\begin{equation}\begin{aligned}
    d_\diamond(\mathcal{E}^N_{(t,0)},\mathcal{U}_{(t,0)}) &\leq \frac{2t^2}{N^2} \sum_i \sup_{\tau \in [\frac{it}{N},\frac{(i-1)t}{N}]} \max_j \|H_j(\tau)\|^2_\infty\\
    &\leq \frac{2t^2}{N} \sup_{\tau \in [0,t]} \max_j \|H_j(\tau)\|^2_\infty.
\end{aligned}\end{equation}
Note that while the final bound is neater, the tighter penultimate bound may be useful if $H_j(\tau)$ increases rapidly throughout $[0,t]$.

Finally, in our case of interest it further holds that $H_j=U_jHU_j^\dag$, so $\|H_j\|_\infty$ is the same for all $j$ by the unitary invariance of the operator norm. Therefore in this case
\begin{equation}\begin{aligned}
    d_\diamond(\mathcal{E}^N_{(t,0)},\mathcal{U}_{(t,0)}) &\leq \frac{2t^2}{N^2} \sum_i \sup_{\tau \in [\frac{it}{N},\frac{(i-1)t}{N}]} \|H(\tau)\|^2_\infty\\
    &\leq \frac{2t^2}{N} \sup_{\tau \in [0,t]} \|H(\tau)\|^2_\infty.
\end{aligned}\end{equation}
This completes the proof.
\end{proof}

\begin{lemma}\label{lemma:bound_for_L_tau}
    For a Liouvillian superoperator $\mathcal{L}(\tau)(X) = -i[H(\tau),X]$, we have 
    \begin{equation}\begin{aligned}
        \|\mathcal{L}(\tau)\|_\diamond \leq 2\|H(\tau)\|_\infty.
    \end{aligned}\end{equation}
    As an immediate consequence we obtain
    \begin{equation}\begin{aligned}
        \left\|\prod_{i=1}^n \mathcal{L}(\tau_i)\right\|_\diamond \leq 2^n \prod_{i=1}^n \|H(\tau_i)\|_\infty.
    \end{aligned}\end{equation}
\end{lemma}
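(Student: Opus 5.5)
The plan is to split $\mathcal{L}(\tau)$ into left and right multiplication and control each piece by the stabilized trace norm. Write $\mathcal{L}(\tau)(X) = -iH(\tau)X + iXH(\tau)$. The diamond norm is
\begin{equation*}
\|\Phi\|_\diamond = \sup_{\|Y\|_1\leq 1}\|(\Phi\otimes\mathcal{I})(Y)\|_1,
\end{equation*}
where $Y$ ranges over operators on the system tensored with an ancilla of equal dimension. For any such $Y$ we have
\begin{equation*}
(\mathcal{L}(\tau)\otimes\mathcal{I})(Y) = -i(H(\tau)\otimes I)Y + iY(H(\tau)\otimes I).
\end{equation*}

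Apply the triangle inequality for $\|\cdot\|_1$ together with Hölder's inequality in the form $\|AY\|_1\leq\|A\|_\infty\|Y\|_1$ and $\|YA\|_1\leq\|Y\|_1\|A\|_\infty$. Using $\|H(\tau)\otimes I\|_\infty=\|H(\tau)\|_\infty$, this gives
\begin{equation*}
\|(\mathcal{L}(\tau)\otimes\mathcal{I})(Y)\|_1\leq 2\|H(\tau)\|_\infty\|Y\|_1 .
\end{equation*}
Taking the supremum over $\|Y\|_1\leq 1$ yields $\|\mathcal{L}(\tau)\|_\diamond\leq 2\|H(\tau)\|_\infty$.

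For the product bound, use submultiplicativity of the diamond norm under composition, $\|\Phi\Psi\|_\diamond\leq\|\Phi\|_\diamond\|\Psi\|_\diamond$. This follows because
\begin{equation*}
(\Phi\Psi)\otimes\mathcal{I}=(\Phi\otimes\mathcal{I})(\Psi\otimes\mathcal{I})
\end{equation*}
and the induced $1\to1$ norm of a composition is submultiplicative. Induction on $n$ then gives
\begin{equation*}
\left\|\prod_{i=1}^n\mathcal{L}(\tau_i)\right\|_\diamond\leq\prod_{i=1}^n\|\mathcal{L}(\tau_i)\|_\diamond\leq 2^n\prod_{i=1}^n\|H(\tau_i)\|_\infty .
\end{equation*}
The same estimate covers time-ordered products $\mathcal{T}\{\cdots\}$, since reordering the factors does not change the product of the norms. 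The time-ordered case is how the lemma is used in \cref{result:time-dependent_qDRIFT}.

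The only point needing care is that $\mathcal{L}(\tau)$ is not a channel, so the diamond norm must be the general stabilized norm above and not $d_\diamond$ between channels. With that definition fixed, the argument consists only of the two standard Hölder-type inequalities, and I expect no real obstacle.
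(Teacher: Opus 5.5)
Your proposal is correct and takes essentially the same route as the paper: write $(\mathcal{L}(\tau)\otimes I)(Y)$ as left and right multiplication by $H(\tau)\otimes I$, then apply the triangle inequality, the matrix H\"older bound $\|AY\|_1\leq\|A\|_\infty\|Y\|_1$ (and its right-multiplication analogue), and take the supremum over $Y$. The product bound, which the paper calls an immediate consequence, follows from the submultiplicativity of the diamond norm that you invoke (listed in the paper's appendix), and your remark that it also covers the time-ordered products used in \cref{result:time-dependent_qDRIFT} is apt.
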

\begin{proof}
    For brevity we omit the time-dependence $\tau$. By definition $\|\Phi\|_\diamond = \sup_{X \neq 0} \frac{\|(\Phi \otimes I)(X)\|_1}{\|X\|_1}$, so it suffices to show that $\frac{\|(\mathcal{L} \otimes I)(X)\|_1}{\|X\|_1} \leq 2\|H\|_\infty$ for all $X \neq 0$. We have 
    \begin{equation}\begin{aligned}
        \|(\mathcal{L} \otimes I)(X)\|_1 &= \|-i[H \otimes I,X]\|_1\\
        &\leq \|(H \otimes I)X\|_1 + \|X(H \otimes I)\|_1\\
        &\leq \|H \otimes I\|_\infty \cdot \|X\|_1 + \|X\|_1 \cdot \|H \otimes I\|_\infty\\
        &= 2\|H\|_\infty\|X\|_1.
    \end{aligned}\end{equation}
    In the third line we have made use of the matrix H\"{o}lder inequality $\|AB\|_1 \leq \|A\|_1\|B\|_\infty$ (with $p,q=1,\infty$).
\end{proof}

\subsection{Incorporating randomness}

With an eye toward Hamiltonian reshaping, in this subsection we shall assume $H_j(t) = U_jH(t)U_j^\dag$ for some base $H(t)$ and unitaries $U_j$, so
\begin{equation}\begin{aligned}
    \widetilde{H}(t) = \sum_j w_j U_jH(t)U_j^\dag.
\end{aligned}\end{equation}
In our SEH setting we have not just time-dependence, but also randomness (that evolves with time). As discussed in \cref{section:SEHs}, this makes all the objects discussed above -- Hamiltonians, unitaries, channels, Liouvillians etc. random variables (also see \cref{remark:potential_technicality,remark:realizations_of_quantum_objects}). Now fix an arbitrary sample path $\omega$. Rerunning through the entire argument in \cref{subsection:ham_sim_incorporating_td} above with this fixed $\omega$, we have the pointwise bound
\begin{equation}\begin{aligned}
\label{equation:diamond_norm_rv_upperbound_pointwise}
    d_\diamond(\mathcal{E}^N_{(t,0)}(\omega),\mathcal{U}_{(t,0)}(\omega)) &\leq \frac{2t^2}{N} \sup_{\tau \in [0,t]} \|H(\tau,\omega)\|^2_\infty\\
    &\leq \frac{2t^2}{N} \left(\sup_{\tau \in [0,t]} \sum_i |\delta_i(\tau,\omega)| \right)^2.
\end{aligned}\end{equation}
Here the second inequality comes from $H(\tau) = \sum_i \delta_i(\tau)H_i$ where the $H_i$ are Pauli strings which have operator norm 1.

Recall now that we describe the evolution under SEHs by quantum channels, see \cref{section:SEHs}, \cref{equation:channel_description_SEH/SEU}. Thus the target channel and simulating channel are respectively $\E_\omega[\mathcal{U}_{(t,0)}(\omega)],\E_\omega[\mathcal{E}^N_{(t,0)}(\omega)]$ or simply $\E \mathcal{U}_{(t,0)},\E \mathcal{E}^N_{(t,0)}$ for brevity. By the triangle inequality and monotonicity of expectations,
\begin{equation}\begin{aligned}
\label{equation:diamond_norm_rv_upperbound_expectation}
    d_\diamond(\E \mathcal{E}^N_{(t,0)},\E \mathcal{U}_{(t,0)}) &\leq \E d_\diamond(\mathcal{E}^N_{(t,0)},\mathcal{U}_{(t,0)})\\
    &\leq \frac{2t^2}{N} \E \sup_{\tau \in [0,t]} \|H(\tau)\|^2_\infty\\
    &\leq \frac{2t^2}{N} \E \left(\sup_{\tau \in [0,t]} \sum_{i=1}^m |\delta_i(\tau)| \right)^2.
\end{aligned}\end{equation}

It remains to evaluate $\E \left(\sup_{\tau \in [0,t]} \sum_{i=1}^m |\delta_i(\tau)| \right)^2$, which we do in the proof of \cref{result:SEH_qDRIFT} below. We also re-present the qDRIFT protocol in \cref{algorithm:SEH_qDRIFT}. We emphasize that \cref{algorithm:SEH_qDRIFT} (and \cref{algorithm:time-dependent_qDRIFT} as mentioned) are \textit{operationally identical} to the original qDRIFT \cref{algorithm:qDRIFT}, but differ in the primitives used and thus the ensuing analysis. In \cref{algorithm:time-dependent_qDRIFT} the primitives are the unitaries generated by the constituent time-dependent Hamiltonians, while here in \cref{algorithm:SEH_qDRIFT} the primitives are the channels generated by the constituent Hamiltonians which are SEHs. We state all these separately for clarity for the reader. The notations involved below will be somewhat bloated due to multiple layers of randomness (one from SDEs and one from qDRIFT), please refer to \cref{remark:two_layers_of_randomness} for clarification.

\begin{algorithm}
\caption{qDRIFT for SEHs}
\label{algorithm:SEH_qDRIFT}
\begin{algorithmic}[1]

\Require 
\Statex - The classical description of the Hamiltonian $\widetilde{H}(t) = \sum_j w_j U_jH(t)U_j^\dag$, where $\sum_j w_j = 1$.
\Statex - The classical description of the SEH $H(t)= \sum_{i=1}^m \delta_i(t)H_i$, see description in \cref{result:SEH_qDRIFT}.
\Statex - Black-box access to the \textcolor{DeepPink3}{channels $\E_\omega \exp_\mathcal{T} \left( -i\int H(\tau,\omega)\,d\tau \right)$ and unitaries $\{U_j\}_j$}.
\Statex - Evolution time $t$.
\Statex - Classical sampler $\textsc{sample}()$ that returns a value $j$ from the probability distribution $\{w_j\}_j$.
\Statex - Desired channel precision $\varepsilon>0$.

\Ensure 
\Statex - A channel $\mathcal{E}_t$ satisfying $d_\diamond \left(\mathcal{E}_t, \textcolor{DeepPink3}{\E_\omega \exp_\mathcal{T} \left( -i\int_0^t \widetilde{H}(\tau,\omega)\,d\tau \right)}\right) < \varepsilon$.

\Statex
\Pseudocode
\State Set $\mathcal{U}_t^{\text{app}} = \mathcal{I}$.
\For{$i=1,\dots,\textcolor{DeepPink3}{N=O_{\delta_{\text{max}},G_{\text{max}}}\left(\frac{t^3m^2\chi^2}{\varepsilon}\right)}$}
    \State $i \gets \textsc{sample}()$
    \State $\mathcal{U}_t^{\text{app}} \gets \textcolor{DeepPink3}{U_j \circ \E_\omega \exp_\mathcal{T}\left(-i\int_{\frac{(i-1)t}{N}}^{\frac{it}{N}} H(\tau,\omega)\,d\tau \right) \circ U_j^\dag} \circ \mathcal{U}_t^{\text{app}}$
\EndFor
\State \Return $\mathcal{U}_t^{\text{app}}$ \Comment{$\mathcal{E}_t = \E_{\text{qDRIFT}}[\mathcal{U}_t^{\text{app}}]$, $\E_{\text{qDRIFT}} = \E_{\{w_j\}_j^{\otimes N}}$}

\end{algorithmic}
\end{algorithm}

\begin{result}[qDRIFT error bound for SEHs]\label{result:SEH_qDRIFT}
    Given the Hamiltonian $\widetilde{H}(\tau)$ written as a convex sum
    \begin{equation}\begin{aligned}
        \widetilde{H}(t) = \sum_j w_j U_jH(t)U_j^\dag,
    \end{aligned}\end{equation}
    where the $U_j$ are unitaries and $H(t)$ is an SEH given in \cref{problem:SEH_Hamiltonian_learning} but does not need to be low-intersection, i.e. $H(t)= \sum_{i=1}^m \delta_i(t)H_i$, where the local parameters $\delta_i(t)$ are governed by the SDEs 
    \begin{equation}\begin{aligned}
        d\delta_{i}(t) &= \sum_{j=1}^\chi G_{ij}\,dW_{j}(t) \quad \text{for } i=1,\dots,m\\
        \delta_{i}(0) &= \delta_{i0}.
    \end{aligned}\end{equation}
    It is also given that $|\delta_{i0}| \leq \delta_{\text{max}}$ for all $i$, and $0 \leq G_{ij} \leq G_{\text{max}}$.
    
    Assume we have the ability to implement the time-dependent evolutions $\E \exp_\mathcal{T} \left( -i\int H(\tau)\,d\tau \right)$, and the unitaries $U_j$ instantaneously. Then for an $N$-step qDRIFT protocol \cref{algorithm:SEH_qDRIFT} we have
    \begin{equation}\begin{aligned}
        d_\diamond(\E \mathcal{E}^N_{(t,0)},\E \mathcal{U}_{(t,0)}) &\leq \frac{2t^2m^2}{N} \left( \delta_{\text{max}} + 2G_{\text{max}}\chi\sqrt{t} \right)^2\\
        &= O_{\delta_{\text{max}},G_{\text{max}}}\left(\frac{t^3m^2\chi^2}{N}\right).
    \end{aligned}\end{equation}
    Here $\E\mathcal{E}^N_{(t,0)}$ is the channel produced by qDRIFT (which we simply denoted by $\mathcal{E}_t$ in \cref{algorithm:SEH_qDRIFT} for brevity) and $\E\mathcal{U}_{(t,0)}$ is the channel generated by $\widetilde{H}(t,\omega)$, where $\E = \E_\omega$, see \cref{remark:two_layers_of_randomness} below.
    
    Therefore, with a desired precision $\varepsilon>0$ the number of sampling steps required is
    \begin{equation}\begin{aligned}
        N \gtrsim \frac{t^2m^2}{\varepsilon} \left( \delta_{\text{max}} + 2G_{\text{max}}\chi\sqrt{t} \right)^2.
    \end{aligned}\end{equation}
\end{result}

\begin{remark}\label{remark:two_layers_of_randomness}
    Note that in the context of \cref{algorithm:SEH_qDRIFT} and \cref{result:SEH_qDRIFT}, there are \textit{two} layers of randomness involved. First, in qDRIFT each sampling step draws $j$ over the probability distribution $\{w_j\}_j$, thus over $N$ steps the sample string $(j_N,\dots,j_1)$ is drawn over the tensor product distribution $\{w_j\}_j^{\otimes N}$. We denote the expectation over $\{w_j\}_j^{\otimes N}$ by $\E_{\text{qDRIFT}} = \E_{\{w_j\}_j^{\otimes N}}$. Second, when SDEs enter the picture we denote the expectation over the underlying source of randomness by $\E_\omega$ -- this is exactly the same thing as $\E$ in \cref{equation:channel_description_SEH/SEU}, and $\E_\Omega$ in \cref{equation:channel_description_state_space}.

    For brevity sometimes the subscripts $\omega$/qDRIFT will be omitted from the expectations, but we have tried our best to make the $\E$ clear from the surrounding context, i.e. whether
    \begin{enumerate}[i.]
        \item $\E=\E_{\text{qDRIFT}}$: context is qDRIFT, see beginning of \cref{section:Hamiltonian_simulation};
        \item $\E=\E_\omega$: context is SDEs, SEHs and channels describing the evolution under SEHs, see \cref{section:SEHs};
        \item $\E=\E_{\text{qDRIFT},\omega} = \E_{\text{qDRIFT}}\E_\omega$: context is qDRIFT for SEHs, see \cref{algorithm:SEH_qDRIFT}, \cref{result:SEH_qDRIFT} and the proofs therein.
    \end{enumerate}
    
\end{remark}

\begin{proof}
    We continue where we left off from \cref{equation:diamond_norm_rv_upperbound_expectation} above. We remind the reader that the $\delta_i(t)$'s are governed by SDEs and in this relatively simple setting look like
    \begin{equation}\begin{aligned}
        \delta_i(\tau) = \delta_i(0) + \sum_{j=1}^\chi G_{ij}W_j(\tau) \sim \mathcal{N}\left(\delta_i(0),\tau\sum_{j=1}^\chi G_{ij}^2\right).
    \end{aligned}\end{equation}
    Vectorially,
    \begin{equation}\begin{aligned}
        \bm{\delta}(\tau) = \bm{\delta}(0) + \bm{G}\bm{W}(\tau) \sim \mathcal{N}(\bm{\delta}(0),\tau\bm{G}\bm{G}^T)
    \end{aligned}\end{equation}
    where
    \begin{equation}\begin{aligned}
        \bm{\delta}(\tau) \in \mathbb{R}^m, \quad \bm{G} \in \mathbb{R}^{m \times \chi}, \quad \bm{W}(\tau) \in \mathbb{R}^\chi.
    \end{aligned}\end{equation}
    
    For notational simplicity let us first define
    \begin{equation}\begin{aligned}
        Z_t \coloneq \sup_{\tau \in [0,t]} \sum_{i=1}^m |\delta_i(\tau)| = \sup_{\tau \in [0,t]} \|\bm{\delta}(\tau)\|_1.
    \end{aligned}\end{equation}
    By triangle we have
    \begin{equation}\begin{aligned}
        Z_t &\leq \|\bm{\delta}(0)\|_1 + \sup_{\tau \in [0,t]} \|\bm{G}\bm{W}(\tau)\|_1\\
        &\leq \|\bm{\delta}(0)\|_1 + \|\bm{G}\|_{2 \to 1}\sup_{\tau \in [0,t]} \|\bm{W}(\tau)\|_2
    \end{aligned}\end{equation}
    where $\|\bm{G}\|_{2 \to 1}$ is the $(2 \to 1)$-operator norm (see \cref{appendix:miscellanea})
    \begin{equation}\begin{aligned}
        \|\bm{G}\|_{2 \to 1} \coloneq \sup_{x \neq 0} \frac{\|\bm{G}x\|_1}{\|x\|_2}.
    \end{aligned}\end{equation}

    Since our goal is to upper-bound $\E \!Z_t^2$, let us recall the $L^2$-norm of random variables, i.e. $\|X\|_{L^2} \coloneq \sqrt{\E \!X^2}$. Being a norm, $\|\cdot\|_{L^2}$ satisfies the triangle inequality $\|X+Y\|_{L^2} \leq \|X\|_{L^2} + \|Y\|_{L^2}$. With the intention to apply this triangle inequality with $X \gets \|\bm{\delta}(0)\|_1$ and $Y \gets \|\bm{G}\|_{2 \to 1}\sup_{\tau \in [0,t]} \|\bm{W}(\tau)\|_2$, the remaining hurdle is then to evaluate $\E \sup_{\tau \in [0,t]} \|\bm{W}(\tau)\|_2^2$. We have
    \begin{equation}\begin{aligned}
        \E \sup_{\tau \in [0,t]} \|\bm{W}(\tau)\|_2^2 &= \E \left( \sup_{\tau \in [0,t]} \sum_{j=1}^\chi |W_j(\tau)|^2 \right)\\
        &\leq \E \left( \sum_{j=1}^\chi \sup_{\tau \in [0,t]} |W_j(\tau)|^2 \right)\\
        &= \sum_{j=1}^\chi \E \sup_{\tau \in [0,t]} |W_j(\tau)|^2\\
        &= \chi \cdot \E \sup_{\tau \in [0,t]} |W(\tau)|^2\\
        &\leq \chi \cdot 4t.
    \end{aligned}\end{equation}
    In the last equality we have invoked Doob's martingale inequality (\cref{theorem:Doob_martingale_ineq}) with $p=2$, applicable because Wiener processes are martingales. Putting things together gives us
    \begin{equation}\begin{aligned}
        \sqrt{\E \!Z_t^2} &\leq \sqrt{\E \left(\|\bm{\delta}(0)\|_1 + \|\bm{G}\|_{2 \to 1}\sup_{\tau \in [0,t]} \|\bm{W}(\tau)\|_2\right)^2}\\
        &\leq \sqrt{\E \left(\|\bm{\delta}(0)\|_1\right)^2} + \sqrt{\E \left(\|\bm{G}\|_{2 \to 1}\sup_{\tau \in [0,t]} \|\bm{W}(\tau)\|_2\right)^2}\\
        &= \|\bm{\delta}(0)\|_1 + \|\bm{G}\|_{2 \to 1} \cdot \sqrt{\E \sup_{\tau \in [0,t]} \|\bm{W}(\tau)\|_2^2}\\
        &= \|\bm{\delta}(0)\|_1 + \|\bm{G}\|_{2 \to 1} \cdot 2\sqrt{\chi t}\\
        & \leq m\delta_{\text{max}} + 2G_{\text{max}}m\chi\sqrt{t}
    \end{aligned}\end{equation}
    Here in the second inequality we have applied the triangle inequality for $L^2$-norms; in the second equality we have evaluated $\E \sup_{\tau \in [0,t]} \|\bm{W}(\tau)\|_2^2$; and in the last equality we have further evaluated $\|\bm{\delta}(0)\|_1 \leq m\delta_{\text{max}}$ and $\|\bm{G}\|_{2 \to 1} \leq \|G_{\text{max}}J_{m \times \chi}\|_{2 \to 1} = G_{\text{max}}m\sqrt{\chi}$, see \cref{lemma:monotonicity_2-1_norm} and \cref{example:examples_of_normA} for the latter derivation.
    
    Therefore we have
    \begin{equation}\begin{aligned}
        d_\diamond(\E \mathcal{E}^N_{(t,0)},\E \mathcal{U}_{(t,0)}) &\leq \frac{2t^2}{N} \E \left(\sup_{\tau \in [0,t]} \sum_{i=1}^m |\delta_i(\tau)| \right)^2\\
        &\leq \frac{2t^2m^2}{N} \left( \delta_{\text{max}} + 2G_{\text{max}}\chi\sqrt{t} \right)^2\\
        &= O_{\delta_{\text{max}},G_{\text{max}}}\left(\frac{t^3m^2\chi^2}{N}\right).
    \end{aligned}\end{equation}
\end{proof}

We end this section with one more result, which is closely related to \cref{result:SEH_qDRIFT}. Above we have evaluated the bound for $d_\diamond(\E_\omega \mathcal{E}^N_{(t,0)}(\omega),\E_\omega \mathcal{U}_{(t,0)}(\omega))$ in \cref{equation:diamond_norm_rv_upperbound_expectation}, but what about the \textit{pointwise} bound 
\begin{equation}\begin{aligned}
    d_\diamond(\mathcal{E}^N_{(t,0)}(\omega),\mathcal{U}_{(t,0)}(\omega)) \leq \frac{2t^2}{N} \left(\sup_{\tau \in [0,t]} \sum_i |\delta_i(\tau,\omega)| \right)^2
\end{aligned}\end{equation}
 in \cref{equation:diamond_norm_rv_upperbound_pointwise}?

Since the $\delta_i(t)$'s are Gaussian random variables (with variance increasing with time), they are supported on the entire real line $\mathbb{R}$, so we can no longer provide a finite upper bound with certainty. But as is characteristic of Gaussian variables, the probability of each $\delta_i(t)$ taking values far away from its mean drops exponentially, i.e. they exhibit strong concentration. We parlay this into concentration for $\sup_{\tau \in [0,t]} \sum_i |\delta_i(\tau)|$ and subsequently for $d_\diamond(\mathcal{U}_{(t,0)},\mathcal{E}^N_{(t,0)})$. The execution of this idea however relies on an astute choice of concentration inequality. For us, it is the Borell-Tsirelson-Ibragimov-Sudakov inequality (see \cref{theorem:Borell-TIS}). We have the following

\begin{result}[qDRIFT pointwise error bound for SEHs]\label{result:pointwise_SEH_qDRIFT}
Let $\mathcal{U}_{(t,0)}$ and $\mathcal{E}^N_{(t,0)}$ be the target and simulating channels as prescribed by the time-dependent qDRIFT protocol \cref{algorithm:SEH_qDRIFT}. Then for any $B>0$ we have 
    \begin{equation}\begin{aligned}
    \Pr\left[ d_\diamond(\mathcal{U}_{(t,0)},\mathcal{E}^N_{(t,0)}) \leq \frac{2t^2m^2}{N}(\delta_{\text{max}} + G_{\text{max}}\chi \sqrt{t}B)^2 \right] \geq 1 - \exp\left( \frac{-(B-2)^2\chi}{2} \right).
\end{aligned}\end{equation}
So simply choosing $B=O(1)$, say $B=10$, yields
\begin{equation}\begin{aligned}
    d_\diamond(\mathcal{U}_{(t,0)},\mathcal{E}^N_{(t,0)}) = O_{\delta_{\text{max}},G_{\text{max}}}\left(\frac{t^3m^2\chi^2}{N}\right)
\end{aligned}\end{equation}
with extremely high probability. Thus not only does the bound $O_{\delta_{\text{max}},G_{\text{max}}}\left(\frac{t^3m^2\chi^2}{N}\right)$ hold for the \textit{averaged} channels $\E_\omega \mathcal{E}^N_{(t,0)}(\omega),\E_\omega \mathcal{U}_{(t,0)}(\omega)$, it also holds for each realization $\mathcal{E}^N_{(t,0)}(\omega),\mathcal{U}_{(t,0)}(\omega)$, albeit with small failure probability.
\end{result}

\begin{proof}
The proof here is closely related and uses the same notation (naturally) in the proof of \cref{result:SEH_qDRIFT} above. Recall the notation $Z_t = \sup_{\tau \in [0,t]} \|\bm{\delta}(\tau)\|_1$, satisfying $Z_t \leq \|\bm{\delta}(0)\|_1 + \|\bm{G}\|_{2 \to 1}\sup_{\tau \in [0,t]} \|\bm{W}(\tau)\|_2$. As mentioned above, since $\bm{W}(\tau,\omega)$ is unbounded over $\omega \in \Omega$, no finite upper bound for $Z_t(\omega)$ exists which holds for all $\omega \in \Omega$. Our task now is to demonstrate that a bound exists but with small failure probability, i.e. a statement along the lines of $\Pr[Z_t \leq \dots] \geq 1-\exp(\dots)$.

From $Z_t \leq \|\bm{\delta}(0)\|_1 + \|\bm{G}\|_{2 \to 1}\sup_{\tau \in [0,t]} \|\bm{W}(\tau)\|_2$, our problem further reduces to obtaining concentration for the Euclidean norm of a Brownian vector $\sup_{\tau \in [0,t]} \|\bm{W}(\tau)\|_2$, because for any bound $B > 0$
\begin{equation}\begin{aligned}
    \Pr\left[ Z_t \geq \|\bm{\delta}(0)\|_1 + \|\bm{G}\|_{2 \to 1}B \right] \leq \Pr\left[ \sup_{\tau \in [0,t]} \|\bm{W}(\tau)\|_2 \geq B \right].
\end{aligned}\end{equation}
This we do in \cref{proposition:conc_bound_magnitude_Brownian_vector} below. Using the one-sided bound there, we obtain
\begin{equation}\begin{aligned}
    \Pr\left[ Z_t \geq \|\bm{\delta}(0)\|_1 + \|\bm{G}\|_{2 \to 1}\sqrt{\chi t}B \right] \leq \exp\left( \frac{-(B-2)^2\chi}{2} \right).
\end{aligned}\end{equation}
At this point the bound for $Z_t$ still contains terms we do not know, namely $\bm{\delta}(0)$ and $\bm{G}$. We make a further simplification, evaluating $\|\bm{\delta}(0)\|_1 \leq m\delta_{\text{max}}$ and $\|\bm{G}\|_{2 \to 1} \leq G_{\text{max}}m\sqrt{\chi}$ as we did above in the proof of \cref{result:SEH_qDRIFT}. The simplified and usable bound is then
\begin{equation}\begin{aligned}
    \Pr\left[ Z_t \geq m\delta_{\text{max}} + mG_{\text{max}}\chi \sqrt{t} B \right] \leq \exp\left( \frac{-(B-2)^2\chi}{2} \right).
\end{aligned}\end{equation}
Finally, let us convert this to concentration for the diamond norm, \cref{equation:diamond_norm_rv_upperbound_pointwise}. We have
\begin{equation}\begin{aligned}
    \Pr\left[ d_\diamond(\mathcal{U}_{(t,0)},\mathcal{E}^N_{(t,0)}) \geq \frac{2t^2}{N}(\textcolor{DeepPink3}{m\delta_{\text{max}} + mG_{\text{max}}\chi \sqrt{t}B})^2 \right] &\leq \Pr\left[ \frac{2t^2}{N}Z_t^2 \geq \frac{2t^2}{N}(\textcolor{DeepPink3}{m\delta_{\text{max}} + mG_{\text{max}}\chi \sqrt{t}B})^2 \right]\\
    &= \Pr\left[ Z_t \geq \textcolor{Maroon3}{m\delta_{\text{max}} + mG_{\text{max}}\chi \sqrt{t}B} \right]\\
    &\leq \exp\left( \frac{-(B-2)^2\chi}{2} \right).
\end{aligned}\end{equation}
Note that the RHS is independent of time and the system parameter $m$.
\end{proof}

\begin{proposition}\label{proposition:conc_bound_magnitude_Brownian_vector}
    Let $\bm{W}(\tau)\in \mathbb{R}^\chi $ be a Brownian vector. Then for any $\varepsilon > 0$,
    \begin{equation}\begin{aligned}
        \Pr\left[ \left|\sup_{\tau \in [0,t]} \|\bm{W}(\tau)\|_2 - \E\sup_{\tau \in [0,t]} \|\bm{W}(\tau)\|_2\right| \geq \varepsilon \right] \leq 2\exp\left( \frac{-\varepsilon^2}{2t} \right).
    \end{aligned}\end{equation}
    This gives the one-sided bound for any $B > 0$:
    \begin{equation}\begin{aligned}
        \Pr\left[ \sup_{\tau \in [0,t]} \|\bm{W}(\tau)\|_2 \geq B\sqrt{\chi t} \right] \leq \exp\left( \frac{-(B-2)^2\chi}{2} \right).
    \end{aligned}\end{equation}
\end{proposition}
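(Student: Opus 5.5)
We will derive both statements from the Borell--TIS inequality (\cref{theorem:Borell-TIS}), treating $\sup_{\tau\in[0,t]}\|\bm{W}(\tau)\|_2$ as the supremum of a centred Gaussian process. The key observation is the dual representation
\begin{equation*}
\begin{aligned}
    \sup_{\tau \in [0,t]} \|\bm{W}(\tau)\|_2 = \sup_{(\tau,\bm{u}) \in T} X_{\tau,\bm{u}}, \qquad X_{\tau,\bm{u}} \coloneq \langle \bm{u}, \bm{W}(\tau)\rangle, \quad T \coloneq [0,t]\times S^{\chi-1}.
\end{aligned}
\end{equation*}
Each $X_{\tau,\bm{u}}$ is a centred Gaussian with variance $\|\bm{u}\|_2^2\,\tau = \tau$. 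The maximal variance over the index set is therefore $\sigma_T^2 = \sup_{T}\E X_{\tau,\bm{u}}^2 = t$. The process has continuous sample paths on the compact set $T$, so the supremum is almost surely finite and can be taken over a countable dense subset. This handles the separability technicality that Borell--TIS requires.

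Borell--TIS then gives, for every $\varepsilon>0$,
\begin{equation*}
\begin{aligned}
    \Pr\left[ \left|\sup_T X - \E\sup_T X\right| \geq \varepsilon\right] \leq 2\exp\left(\frac{-\varepsilon^2}{2\sigma_T^2}\right) = 2\exp\left(\frac{-\varepsilon^2}{2t}\right).
\end{aligned}
\end{equation*}
This is exactly the two-sided bound. If \cref{theorem:Borell-TIS} is stated only one-sidedly, we get the lower tail by applying it to $-X$, or we invoke the standard Gaussian concentration for the $\sqrt{t}$-Lipschitz functional of the Brownian path.

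For the one-sided bound, we will control the mean with the estimate already used in the proof of \cref{result:SEH_qDRIFT}. By Jensen and Doob's inequality (\cref{theorem:Doob_martingale_ineq}),
\begin{equation*}
\E\sup_{\tau\in[0,t]}\|\bm{W}(\tau)\|_2 \leq \left(\E\sup_{\tau\in[0,t]}\|\bm{W}(\tau)\|_2^2\right)^{1/2} \leq 2\sqrt{\chi t}.
\end{equation*}
Now take $\varepsilon = (B-2)\sqrt{\chi t}$ for $B>2$, and use only the upper tail, whose Borell--TIS bound carries no factor of $2$:
\begin{equation*}
\begin{aligned}
    \Pr\left[\sup_{\tau\in[0,t]}\|\bm{W}(\tau)\|_2 \geq B\sqrt{\chi t}\right] \leq \Pr\left[\sup_T X - \E\sup_T X \geq (B-2)\sqrt{\chi t}\right] \leq \exp\left(\frac{-(B-2)^2\chi t}{2t}\right),
\end{aligned}
\end{equation*}
which is $\exp(-(B-2)^2\chi/2)$ as required. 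For $0<B\leq 2$ the claim is vacuous: the right-hand side is then $\exp(-(B-2)^2\chi/2)$, which should be read with the convention that the bound is only asserted meaningfully for $B>2$. We will state this restriction explicitly.

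The main obstacle is setting up the Borell--TIS application correctly. We need the supremum of the norm as a supremum of a \emph{linear} Gaussian process over a separable index set, and we need the variance parameter to be $t$ rather than $\chi t$. This dimension-free $\sigma_T^2$ is what produces the factor $\chi$ in the exponent after the substitution $\varepsilon \propto \sqrt{\chi t}$. Everything else is the routine mean bound combined with plugging in.
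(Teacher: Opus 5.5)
Your proposal is essentially the paper's proof: you rewrite $\sup_{\tau}\|\bm{W}(\tau)\|_2$ as the supremum of $\langle \bm{u},\bm{W}(\tau)\rangle$ over $[0,t]\times S^{\chi-1}$, apply Borell--TIS with $\sigma_T^2=t$, bound the mean by $2\sqrt{\chi t}$ via Jensen and Doob, and substitute $\varepsilon=(B-2)\sqrt{\chi t}$; checking the boundedness precondition via continuity on a compact index set, rather than via a finite expected supremum as the paper does, is equally valid. Two asides need correcting: for $0<B<2$ the bound is not vacuous but false for small $B$ (the left side tends to $1$ as $B\to 0$ while the right side tends to $e^{-2\chi}$), so restricting to $B\geq 2$ is necessary rather than a convention, and applying Borell--TIS to $-X$ does not give the lower tail of the supremum, since here $\sup_T(-X)=\sup_T X$ by the symmetry $\bm{u}\mapsto-\bm{u}$, so for the two-sided bound you should rely on the two-sided form of the theorem or on Lipschitz Gaussian concentration, as you also suggest.
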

\begin{proof}
    The presence of $\bm{W}(\tau)$ motivates us to consider the Borell-TIS inequality \cref{theorem:Borell-TIS}. However the quantity $\|\bm{W}(\tau)\|_2$ is not Gaussian, so we cannot yet apply Borell-TIS as is, with $T=[0,t]$. This is remedied however, by utilising the \href{https://en.wikipedia.org/wiki/Dual_norm}{dual norm} identity
    \begin{equation}\begin{aligned}
        \|w\|_2 = \sup_{\|v\|_2=1} \langle v,w \rangle.
    \end{aligned}\end{equation}
    With this we have
    \begin{equation}\begin{aligned}
        \sup_{\tau \in [0,t]} \|\bm{W}(\tau)\|_2 = \sup_{\tau \in [0,t],\|v\|_2=1} \langle v,\bm{W}(\tau) \rangle.
    \end{aligned}\end{equation}
    Importantly, for fixed $(\tau,v)$, $\langle v,\bm{W}(\tau) \rangle$ being a linear combination of centered Gaussians is also a centered Gaussian. We are almost ready to apply \cref{theorem:Borell-TIS} with $T= [0,t] \times S^{\chi-1}$ and $X_{\tau,v} = \langle v,\bm{W}(\tau) \rangle$ (here $S^{\chi-1} \coloneq \{v \in \mathbb{R}^\chi: \|v\|_2=1\}$), but first let us verify that the precondition is satisfied, and also obtain $\sigma_T^2$.
    \begin{enumerate}[i.]
        \item (Precondition) We use the precondition $\E\sup_{t \in T} X_t < \infty$, with $T=[0,t] \times S^{\chi-1}$ and $X_{\tau,v} = \langle v,\bm{W}(\tau) \rangle$, see \cref{fact:equivalence_BTIS_preconditions}. We have
        \begin{equation}\begin{aligned}
            \E \sup_{(\tau,v) \in [0,t] \times S^{\chi-1}} \langle v,\bm{W}(\tau) \rangle &= \E \sup_{\tau \in [0,t]} \|\bm{W}(\tau)\|_2\\
            &\leq \E \left( \sum_{j=1}^\chi \sup_{\tau \in [0,t]} |W_j(\tau)|^2 \right)^{1/2}\\
            &\leq \left( \sum_{j=1}^\chi \E \sup_{\tau \in [0,t]} |W_j(\tau)|^2 \right)^{1/2}\\
            &= \left( \chi \cdot \E \sup_{\tau \in [0,t]} |W(\tau)|^2 \right)^{1/2}\\
            &\leq (4\chi t)^{1/2} < \infty.
        \end{aligned}\end{equation}
        Here in the second inequality we have used Jensen with the concavity of $(\cdot)^{1/2}$, and in the final step invoked Doob's martingale inequality (\cref{theorem:Doob_martingale_ineq}) with $p=2$. Precondition satisfied.

        \item (Variance) Now we derive $\sigma_{[0,t] \times S^{\chi-1}}^2$. This is straightforward:
        \begin{equation}\begin{aligned}
            \sigma_T^2 &= \sup_{(\tau,v) \in [0,t] \times S^{\chi-1}} \vari\, \langle v,\bm{W}(\tau) \rangle\\
            &= \sup_{(\tau,v) \in [0,t] \times S^{\chi-1}} \sum_{j=1}^\chi v_j^2 \vari W_j(\tau)\\ 
            &= \sup_{\tau \in [0,t]} \tau = t.
        \end{aligned}\end{equation}
    \end{enumerate}
    Finally, Borell-TIS (\cref{theorem:Borell-TIS}) gives
    \begin{equation}\begin{aligned}
        \Pr\left[ \left|\sup_{\tau \in [0,t]} \|\bm{W}(\tau)\|_2 - \E\sup_{\tau \in [0,t]} \|\bm{W}(\tau)\|_2\right| \geq \varepsilon \right] \leq 2\exp\left( \frac{-\varepsilon^2}{2t} \right).
    \end{aligned}\end{equation}
    Furthermore using $\E\sup_{\tau \in [0,t]} \|\bm{W}(\tau)\|_2 \leq 2\sqrt{\chi t}$ and letting $\varepsilon \gets (B-2)\sqrt{\chi t}$ in the one-sided variant gives us
    \begin{equation}\begin{aligned}
        \Pr\left[ \sup_{\tau \in [0,t]} \|\bm{W}(\tau)\|_2 \geq B\sqrt{\chi t} \right] \leq \exp\left( \frac{-(B-2)^2\chi}{2} \right).
    \end{aligned}\end{equation}
    This completes the proof.
\end{proof}

\begin{remark}
    We note that the $\chi^2$-dependence of the upper bounds in \cref{result:SEH_qDRIFT,result:pointwise_SEH_qDRIFT} can most likely be improved, if further knowledge on the \textit{structure} of $\bm{G}$ is given. For instance if $\bm{G}$ is sparse it should be possible to exploit this for better dependence on $\chi$, although we do not further pursue this aspect here.
\end{remark}

\section{Hamiltonian Learning under Stochastic Evolution}\label{section:Hamiltonian_learning}

In this section, we investigate the task of Hamiltonian learning for SEHs. The setting we shall consider is that in \cite{huangtong2023learning}, where the authors focused on the so-called low-intersection (local) Hamiltonians, and constructed the first Heisenberg-limited parameter-learning protocol for this class of Hamiltonians. We investigate the effects of stochastic evolution on this protocol.

Recall the notion of low-intersection Hamiltonians \cite{haah2022optimal,huangtong2023learning}:
\begin{definition}
    A \textbf{low-intersection} local Hamiltonian is a local Hamiltonian (see \cref{equation:local_Hamiltonian}) $H = \sum_{i\in E} \delta_iH_i$ which has the additional property
    \begin{equation}\begin{aligned}
        \#\{j \neq i: \supp(H_j) \cap \supp(H_i) \neq \emptyset\} = \Theta(1) \quad\text{ for every } i .
    \end{aligned}\end{equation}
    That is, besides $k$-locality which states that $|\supp(H_i)| \leq k = \Theta(1)$ for every $i$, we also have that each interaction term overlaps with at most a constant number of other interaction terms. Equivalently, this also means each qubit participates in at most $O(1)$ interaction terms $H_i$ (thus, geometrically local Hamiltonians are instances of low-intersection Hamiltonians). For low-intersection Hamiltonians, $|E| \coloneq m=\Theta(n)$.  
\end{definition}

With this, we formally state our Hamiltonian learning problem:
\begin{problem}\label{problem:SEH_Hamiltonian_learning}
    Given the low-intersection SEH $H(t)= \sum_{i=1}^m \delta_i(t)H_i$, where wlog we assume the $H_i$ are ($k$-local) Pauli strings, and the local parameters $\delta_i(t)$ are governed by the SDE
    \begin{equation}\begin{aligned}
        d\bm{\delta}(t) &= \bm{G}\,d\bm{W}(t)\\
        \bm{\delta}(0) &= \bm{\delta}_0.
    \end{aligned}\end{equation}
    Equivalently, in component form this is
    \begin{equation}\begin{aligned}
        d\delta_{i}(t) &= \sum_{j=1}^\chi G_{ij}\,dW_{j}(t) \quad \text{for } i=1,\dots,m\\
        \delta_{i}(0) &= \delta_{i0}.
    \end{aligned}\end{equation}
    Here $\bm{\delta}_0$ and $\bm{G}$ are unknown, but it is given that $|\delta_{i0}| \leq \delta_{\text{max}}$ for all $i$, and $0 \leq G_{ij} \leq G_{\text{max}}$.
    
   \textbf{Goal.} Learn $\bm{\delta}_0$.
\end{problem}

The SDE we consider in \cref{problem:SEH_Hamiltonian_learning} is a particularly simple one, since the drift matrix $\bm{G}$ is constant (but whose values are unknown) and does not depend on $\bm{\delta}(t)$ nor $t$. Also note that the parameters $\delta_i(t)$ are in general correlated (via $\bm{G}$), since they could be driven by overlapping Wiener processes $W_j(t)$'s.

The first protocol enabling many-body Hamiltonian (parameter-)learning with Heisenberg-limited scaling was achieved in \cite{huangtong2023learning}, at least for the class of low-intersection Hamiltonians. The more complicated structure-learning variant was subsequently achieved in \cite{bakshi2024structure,hu2025ansatz}. In this article we shall attempt to replicate the overall strategy in \cite{huangtong2023learning}. Their protocol is modular and puts together various subroutines, most notably from Hamiltonian simulation and single-qubit quantum sensing (thus motivating the choice of tasks in \cref{section:quantum_sensing,section:Hamiltonian_simulation} above). In our SEH model the time-dependent stochasticity forms a further source of complication. This affects each subroutine of the noiseless protocol in various ways, and in each step innovations have to be made to account for it, which we have mostly done in \cref{section:quantum_sensing,section:Hamiltonian_simulation} above. In this final section we put things together, yielding a Hamiltonian learning protocol with the same high-level strategy as \cite{huangtong2023learning}, but with key subroutines replaced by the ones we developed above.

First, we review the learning protocol of Huang-Tong-Fang-Su (HTFS). We give a reasonably detailed, but not complete overview, for which the reader can refer to \cite{huangtong2023learning}.

\subsection{Huang-Tong-Fang-Su protocol recap}\label{subsection:HTFS_recap}
When $\bm{G}=\bm{0}$ we are in precisely the setting considered by HTFS. The central idea in their protocol is to `reshape' the given Hamiltonian into a new Hamiltonian having a desirable form. This `reshaping' is implemented by Hamiltonian simulation techniques. Hamiltonian reshaping is used twice. First, to annihilate a few local interaction terms so the remaining terms can be grouped into $k$-sized disjoint patches (where $k=O(1)$ is the locality of the Hamiltonian), so the learning of each patch Hamiltonian can be done in parallel. Then within each patch, Hamiltonian reshaping is utilised once again to remove more terms, that reduces the patch-Hamiltonian learning task to essentially that of single-qubit quantum sensing. We explain each step in detail below.

\paragraph{Hamiltonian reshaping.} Consider the following problem. Given a local Hamiltonian $H = \sum_i \delta_iH_i$, suppose we are interested in the new Hamiltonian $\widetilde{H}$ taking the form
\begin{equation}
\begin{aligned}
    \widetilde{H} = \sum_{k=1}^K w_kU_kHU_k^\dag
\end{aligned}
\end{equation}
where $\sum_k w_k =1$ (thus the $w_k$ can be viewed as probabilities) and the $U_k$'s are unitaries. The value/form of $w_k$ and $U_k$ will be determined by the problem at hand. `Hamiltonian reshaping' simply refers to the Hamiltonian simulation task of implementing $U(\widetilde{H})$ given only access to $U(H)$ and the $U_k$'s. Here $U(\widetilde{H})=e^{-i\widetilde{H}t}$ and $U(H)=e^{-iHt}$ are the unitaries generated respectively by the desired Hamiltonian $\widetilde{H}$ and the original Hamiltonian $H$.

In \cite{huangtong2023learning} and also in this manuscript, to implement Hamiltonian reshaping we utilise qDRIFT \cite{campbell2018random}, a randomized Hamiltonian simulation protocol whose idea we already described in \cref{section:Hamiltonian_simulation} above, see \cref{algorithm:qDRIFT}. Making use of the fact that
\begin{equation}
\begin{aligned}
    e^{-iU_kHU_k^\dag t} = U_ke^{-iHt}U_k^\dag,
\end{aligned}
\end{equation}
according to qDRIFT to implement $e^{-i\widetilde{H}t}$ approximately (viewed as a quantum channel) we simply have to repeat the following $N$ times for large $N$: sample $k$ from the probability distribution $\{w_k\}_k$ and implement $U_ke^{-iH(t/N)}U_k^\dag$. For the error analysis see \cref{algorithm:qDRIFT} again.

\paragraph{Step 1: Decoupling Hamiltonian into disjoint patches.} This is best illustrated by a simple example. Following \cite{huangtong2023learning}, let us consider the inhomogeneous Heisenberg model (but without the external field for simplicity):
\begin{equation}\label{equation:Heisenberg_model}
\begin{aligned}
    H = \sum_{i=1}^{n-1} (\delta_x^{i,i+1}X_iX_{i+1} + \delta_y^{i,i+1}Y_iY_{i+1} + \delta_z^{i,i+1}Z_iZ_{i+1}).
\end{aligned}
\end{equation}
At this stage it is obvious that we cannot group the interaction terms into disjoint patches, such that the interaction terms belonging to different patches do not have overlapping support. To be able to make such a grouping, it is necessary to kill off a few interaction terms. This is where Hamiltonian reshaping comes into play.

Consider a Pauli string $P \in \{I,X,Y,Z\}^{\otimes n}$. From the Pauli commutation relations it is easy to see that if $P$ acts nontrivially on the $i$th qubit, then
\begin{equation}
\begin{aligned}
    \frac{1}{4}(P+X_iPX_i+Y_iPY_i+Z_iPZ_i) = 0,
\end{aligned}
\end{equation}
while if
$P$ acts as the identity on the $i$th qubit, then
\begin{equation}
\begin{aligned}
    \frac{1}{4}(P+X_iPX_i+Y_iPY_i+Z_iPZ_i) = P.
\end{aligned}
\end{equation}
Consider then the Hamiltonian
\begin{equation}
\begin{aligned}
    \widetilde{H} = \frac{1}{4}(H+X_3HX_3+Y_3HY_3+Z_3HZ_3),
\end{aligned}
\end{equation}
i.e. $w_k=1/4$ for $k=1,2,3,4$ and $U_1=I,$ $U_2=X_3$, $U_3=Y_3$ and $U_4=Z_3$, where for concreteness we consider $i=3$. We have
\begin{equation}
\begin{aligned}
    \widetilde{H} = &\delta_x^{1,2}X_1X_{2} + \delta_y^{1,2}Y_1Y_{2} + \delta_z^{1,2}Z_1Z_{2}\\
    &+ \sum_{i=4}^{n-1} (\delta_x^{i,i+1}X_iX_{i+1} + \delta_y^{i,i+1}Y_iY_{i+1} + \delta_z^{i,i+1}Z_iZ_{i+1}).
\end{aligned}
\end{equation}
In the reshaped Hamiltonian $\widetilde{H}$, we now have two disjoint patches, namely the small patch supported on qubits $\{1,2\}$ and the rest of the chain supported on qubits $\{4,\dots,n\}$.

The same idea can then be utilised to achieve $k$-sized disjoint patches in the reshaped Hamiltonian. In this case it suffices to let $w_k=1/4$ for $k=1,2,3,4$ and $U_1=I,$ $U_2=X_3X_6X_9\dots$, $U_3=Y_3Y_6Y_9\dots$ and $U_4=Z_3Z_6Z_9\dots$. The reshaped Hamiltonian is then
\begin{equation}
\begin{aligned}
    \widetilde{H} = \widetilde{H}_{12} + \widetilde{H}_{45} + \widetilde{H}_{78} + \dots
\end{aligned}
\end{equation}
where $\widetilde{H}_{i,i+1} = \delta_x^{i,i+1}X_iX_{i+1} + \delta_y^{i,i+1}Y_iY_{i+1} + \delta_z^{i,i+1}Z_iZ_{i+1}$ is supported on the qubit patch $\{i,i+1\}$.
To conclude, using Hamiltonian reshaping we have obtained a new Hamiltonian $\widetilde{H}$ whose interacting terms can be grouped into $k$-sized disjoint patches. These patches evolve independently from one another, and thus enables the parallel estimation of the patches' parameters.

Note that the terms $\widetilde{H}_{23}$, $\widetilde{H}_{34}$ etc. have been annihilated. By choosing different $\{U_k\}_k$'s, one can similarly construct the reshaped Hamiltonian $\widetilde{H} = \widetilde{H}_{23} + \widetilde{H}_{56} + \widetilde{H}_{89} + \dots$ and $\widetilde{H} = \widetilde{H}_{34} + \widetilde{H}_{67} + \widetilde{H}_{9,10} + \dots$, thus covering all the original terms in $H$.

The Heisenberg Hamiltonian is geometrically local, and as such is low-intersection. For this Hamiltonian it was rather easy to decide how to choose the $w_k$'s and $U_k$'s required for Hamiltonian reshaping. For an arbitrary low-intersection Hamiltonian, with a given interacting edge set $E$ indexing the set of local interaction terms (see \cref{equation:local_Hamiltonian}), \cite{huangtong2023learning} gave a prescription for the construction of the $w_k$'s and $U_k$'s (see Appendices A.2 and B.1 in their paper).

\paragraph{Step 2: Isolating the diagonal Hamiltonian within each patch.} Having obtained $k$-sized disjoint patches, let us now focus on learning the parameters in each patch (this can be done in parallel over the patches). Let $H_C$ denote the Hamiltonian on the patch/cluster $C$, i.e. $H_C$ is the sum of all and only the interacting terms contained in the patch $C$. To build up intuition, let us first focus on the single-qubit case. Let
\begin{equation}\label{equation:general_single_qubit_Hamiltonian}
\begin{aligned}
    H = \delta_x X + \delta_y Y + \delta_z Z.
\end{aligned}
\end{equation}
We want to estimate the parameters $\delta_{x,y,z}$ up to precision $\varepsilon>0$. If there were only an interaction term, say wlog $\delta_x = \delta_y = 0$, then we find ourselves in the setting of \cref{section:quantum_sensing} above, with various protocols (Ramsey, \cref{algorithm:ramsey_protocol} or RFE, \cref{algorithm:rfe}) available. Crucially, both these protocols relies on prior knowledge on the eigenstates of $H$, which were simply $\ket{z,\pm1}$ for $H=\delta_z Z$. In the general case of \cref{equation:general_single_qubit_Hamiltonian}, we do not know the values of $\delta_{x,y,z}$ beforehand, and as such do not know the eigenstates of $H$. To overcome this, we again use Hamiltonian reshaping to remove the $X$ and $Y$ terms. In this case let us define $w_1=w_2=1/2$ and $U_1=I, U_2=Z$, so that
\begin{equation}
\begin{aligned}
    \widetilde{H} = \frac{1}{2}I \cdot H \cdot I + \frac{1}{2}Z \cdot H \cdot Z = \delta_z Z 
\end{aligned}
\end{equation}
takes on a desirable form, amenable to parameter estimation using the tools in \cref{section:quantum_sensing}. Likewise, choosing $U_2= X/Y$ yields $\widetilde{H}=\delta_x X/\delta_y Y$ respectively, and we progressively estimate all the parameters.

Let us next consider two-qubit Hamiltonians, i.e. most generally,
\begin{equation}
\begin{aligned}
    H = \sum_{P,Q \in \{I,X,Y,Z\}} \delta_{PQ} P_1Q_2.
\end{aligned}
\end{equation}
For the same reasons mentioned above, absence of knowledge on the $\delta_{PQ}$'s means we do not know the eigenstates of $H$. So we reshape it into a form whose eigenstates we know. Choose, for example, $w_1=w_2=w_3=w_4=1/4$, and $U_1=I,U_2=X_1,U_3=Z_2, U_4=X_1Z_2$. Then
\begin{equation}
\begin{aligned}
    \widetilde{H} &= \sum_{i=1}^4 w_kU_kHU_k^\dag\\ 
    &= \delta_{II}I_1I_2 + \delta_{XI}X_1I_2 + \delta_{IZ}I_1Z_2 + \delta_{XZ}X_1Z_2. 
\end{aligned}
\end{equation}
While not as simple as the single-qubit case, we do have knowledge on the eigenstates of $H$, because each qubit is only acted on by one (non-identity) Pauli operator. In this case, the eigenstates are
\begin{equation}
\begin{aligned}
    \left\{ \ket{x,\pm1} \otimes \ket{z,\pm1} \right\} = \left\{ \ket{+,0}, \ket{+,1}, \ket{-,0}, \ket{-,1} \right\} 
\end{aligned}
\end{equation}
Armed with the knowledge of these eigenstates, one can now proceed to estimate the associated parameters $\delta_{II},\delta_{XI},\delta_{IZ},\delta_{XI}$ using the method discussed in Step 3 below. Subsequently, for a different choice of reshaping unitaries $\{U_k\}_k$, a different set of parameters could be estimated analogously as above. In this way, we progressively estimate all $\delta_{PQ}$, $P,Q \in \{I,X,Y,Z\}$, thus fully learning the parameters of $H$. Notably, since the $C$'s are disjoint we can run this protocol on \textit{all} the patches $C$ in parallel, and thereby learn all $H_C$'s in parallel.

Following the terminology in \cite{huangtong2023learning}, if $\{\ket{v_i}\}_i$ is a basis of the Hilbert space, we say that an operator $O$ is \textbf{diagonal} relative to $\{\ket{v_i}\}_i$ if $\{\ket{v_i}\}_i$ is an eigenbasis of $O$. With this terminology, we see that $\widetilde{H}$ is a diagonal Hamiltonian with respect to $\left\{ \ket{+,0}, \ket{+,1}, \ket{-,0}, \ket{-,1} \right\}$. Note that the parameters of the diagonal Hamiltonian $\widetilde{H}$ is always a subset of the parameters of the full Hamiltonian $H$, and depends on the choice of the reshaping unitaries $\{U_k\}_k$.

The method discussed above is generalizable to many-qubit Hamiltonians. For a general $k$-sized patch $C$, \cite{huangtong2023learning} gave a prescription on how to obtain the diagonal Hamiltonian $\widetilde{H}_C$ for a fixed choice of $\{U_k\}_k$. Different choices of reshaping unitaries $\{U_k\}_k$ yield different diagonal Hamiltonians, and thus different subsets of parameters to be estimated. Thus one simply runs through all suitable choices of $\{U_k\}_k$ to ensure the full set of parameters in the original $H_C$ are estimated. The details are given in Appendix B.2 of \cite{huangtong2023learning}.

\paragraph{Step 3: Estimating the parameters of the diagonal Hamiltonian.}
Now that we have extracted the diagonal Hamiltonian $\widetilde{H}$ w.r.t. an eigenbasis, let us proceed to estimating the parameters of $\widetilde{H}$. Below we show how the estimation protocol essentially reduces to the case of single-qubit quantum sensing.

Consider the action of $\widetilde{H}$ on the eigenbasis above. They are given by
\begin{equation}
\label{equation:example_two_qubit_diagonal_Hamiltonian}
\begin{aligned}
    \widetilde{H}\ket{+,0} &= \left(\delta_{II} + \delta_{XI} + \delta_{IZ} + \delta_{XZ}\right)\ket{+,0} \eqcolon \kappa_{+0}\ket{+,0}\\
    \widetilde{H}\ket{+,1} &= \left(\delta_{II} + \delta_{XI} - \delta_{IZ} - \delta_{XZ}\right)\ket{+,1} \eqcolon \kappa_{+1}\ket{+,1}\\
    \widetilde{H}\ket{-,0} &= \left(\delta_{II} - \delta_{XI} + \delta_{IZ} - \delta_{XZ}\right)\ket{-,0} \eqcolon \kappa_{-0}\ket{-,0}\\
    \widetilde{H}\ket{-,1} &= \left(\delta_{II} - \delta_{XI} - \delta_{IZ} + \delta_{XZ}\right)\ket{-,1} \eqcolon \kappa_{-1}\ket{-,1}.
\end{aligned}
\end{equation}
If we can estimate the $\kappa$'s, which are the eigenvalues of $\widetilde{H}$, then we can invert
\begin{equation}
\begin{aligned}
    \begin{bmatrix}
        \kappa_{+0}\\
        \kappa_{+1}\\
        \kappa_{-0}\\
        \kappa_{-1}
    \end{bmatrix}
    =
    \begin{bmatrix}
        1 & 1 & 1 & 1\\
        1 & 1 & -1 & -1\\
        1 & -1 & 1 & -1\\
        1 & -1 & -1 & 1
    \end{bmatrix}
    \begin{bmatrix}
        \delta_{II}\\
        \delta_{XI}\\
        \delta_{IZ}\\
        \delta_{XZ}
    \end{bmatrix}
\end{aligned}
\end{equation}
to obtain the $\delta$'s. To do this, we apply the same techniques in \cref{algorithm:ramsey_protocol} or \cref{algorithm:rfe}. We first initiate a superposition of eigenstates, which in this situation must differ only by Hamming distance one, say $\ket{+,0}$ and $\ket{+,1}$, or $\ket{+,0}$ and $\ket{-,0}$, but not $\ket{+,0}$ and $\ket{-,1}$. Consider $\ket{+,0}$ and $\ket{+,1}$ for instance. We have
\begin{equation}
\begin{aligned}
    e^{-i\widetilde{H}t}\left( \frac{\ket{+,0}+\ket{+,1}}{\sqrt{2}} \right) &= \frac{1}{\sqrt{2}}\left( e^{-i\kappa_{+0}t}\ket{+,0} + e^{-i\kappa_{+1}t}\ket{+,1}  \right)\\
    &\sim \frac{1}{\sqrt{2}}\left( \ket{+,0} + e^{-i(\kappa_{+1}-\kappa_{+0})t})\ket{+,1}  \right)
\end{aligned}
\end{equation}
This is now essentially a single-qubit sensing problem. Measuring the observable $O$ whose eigenstates are $\frac{1}{\sqrt{2}}(\ket{+,0} \pm \ket{+,1})$, we get
\begin{equation}
\begin{aligned}
    \tr(Oe^{-i\widetilde{H}t}\rho(0)) = \cos (\kappa_{+1}-\kappa_{+0})t,
\end{aligned}
\end{equation}
where the initialized state $\rho(0)$ is the density operator of $\frac{\ket{+,0}+\ket{+,1}}{\sqrt{2}}$. Similarly, if we initiate the phase-Hadamard superposition (instead of just Hadamard above), we get 
\begin{equation}
\begin{aligned}
    \tr(Oe^{-i\widetilde{H}t}\rho(0)) = \sin (\kappa_{+1}-\kappa_{+0})t,
\end{aligned}
\end{equation}
where now the initialized state $\rho(0)$ is the density operator of $\frac{\ket{+,0}+i\ket{+,1}}{\sqrt{2}}$. Then by the RFE protocol, \cref{algorithm:rfe}, we can estimate $\kappa_{+1}-\kappa_{+0}$ (with Heisenberg-limited scaling). The roles of the observables whose eigenstates are $\frac{1}{\sqrt{2}}(\ket{+,0} \pm \ket{+,1})$ and $\frac{1}{\sqrt{2}}(\ket{+,0} \pm i\ket{+,1})$ are analogous to the roles of the Pauli observables $X,Y$ in \cref{algorithm:rfe}.

In a nutshell, we estimate the \textit{differences} of the $\kappa$'s which differ by Hamming distance one. We start from assuming $\kappa_{+0}=0$, which is wlog because that amounts to fixing a global (undetectable) phase, then obtain $\kappa_{+1}$ from the estimation of $\kappa_{+1}-\kappa_{+0}$. Having obtained $\kappa_{+1}$, we proceed and obtain for instance $\kappa_{-1}$ from the estimation of $\kappa_{-1}-\kappa_{+1}$. And continue to do so until all the $\kappa$'s are obtained. Then perform matrix inversion to get the $\delta$'s.

For the general case of $k$-sized patches, the philosophy is identical, but designing the order of $\kappa$-differences to be estimated is more involved. \cite{huangtong2023learning} demonstrated a procedure for determining this order, making use of a data structure called a `shortest path tree'. We refer the reader to Appendix C.2 of their paper.

\paragraph{Putting things together.} To recap, given a low-intersection Hamiltonian $H$ whose local parameters we want to estimate all within precision $\varepsilon>0$, the HTFS protocol is as follows. First reshape the Hamiltonian, via qDRIFT \cref{algorithm:qDRIFT} with the appropriate $w_k$'s and $U_k$'s, such that it can be grouped into $k$-sized disjoint patches. Then, in parallel, reshape each patch-Hamiltonian to extract a diagonal Hamiltonian on that patch. For these diagonal patch-Hamiltonians, the RFE protocol \cref{algorithm:rfe} and some postprocessing yields estimates of their local parameters. These estimated parameters form only a subset of all the parameters of the original Hamiltonian to be learned. By choosing different $U_k$'s, we run through different diagonal Hamiltonians on each patch (within the second Hamiltonian reshaping loop), then different patch groupings (within the first Hamiltonian reshaping loop). Collectively, these cover all the parameters $\delta_i$ of $H$.

The full algorithm pseudocode and its analysis can be found in Algorithm 2 and Appendix C.3 of \cite{huangtong2023learning}. The HTFS protocol takes total evolution time (omitting the standard failure probability factor $\log(1/\delta)$)
\begin{equation}
\begin{aligned}
    T_{\text{HTFS}} = O\left( \frac{1}{\varepsilon} \right)
\end{aligned}
\end{equation}
where $\varepsilon$ is the desired precision of the estimated parameters, i.e. $\sup_{i} |\hat{\delta}_i - \delta_i| < \varepsilon$.

\subsection{Modifications for SEHs}
When we are in the setting of \cref{problem:SEH_Hamiltonian_learning}, the time-dependent stochasticity of $\bm{\delta}(t)$ (captured by $\bm{G}$) introduces additional complications in the subroutines of HTFS protocol stated above. Let us see how to account for these complications. This subsection will actually be quite short, because we have done most of the heavy lifting in \cref{section:quantum_sensing,section:Hamiltonian_simulation} above!

\paragraph{Hamiltonian reshaping, Step 1 and Step 2.} 
The high-level strategy of Hamiltonian reshaping, i.e. picking the suitable $w_k$'s and $U_k$'s to annihilate certain interaction terms, remains unchanged. However, recall the implementation of Hamiltonian reshaping is via Hamiltonian simulation, in this case qDRIFT, and there is a necessary trade-off between the desired simulation precision $\varepsilon_{\text{qDRIFT}}$ and the number of sampling steps in qDRIFT (in this case, also the number of layers of interleaving unitaries $U_k$), $N_{\text{qDRIFT}}$. Given a fixed evolution time $t$ and desired simulation precision $\varepsilon_{\text{qDRIFT}}$, we see from \cref{algorithm:SEH_qDRIFT} that for $\bm{G}=\bm{0}$, we need
\begin{equation}
\begin{aligned}
    N_{\text{qDRIFT}} \gtrsim \frac{t^2m^2\delta_{\text{max}}^2}{\varepsilon_{\text{qDRIFT}}} = \Omega_{\delta_{\text{max}}} \left( \frac{t^2m^2}{\varepsilon_{\text{qDRIFT}}} \right),
\end{aligned}
\end{equation}
while if $\bm{G} \neq \bm{0}$ we need
\begin{equation}\begin{aligned}
    N_{\text{qDRIFT}} \gtrsim \frac{t^2m^2}{\varepsilon_{\text{qDRIFT}}} \left( \delta_{\text{max}} + 2G_{\text{max}}\chi\sqrt{t} \right)^2 = \Omega_{\delta_{\text{max}},G_{\text{max}}}\left(\frac{t^3m^2\chi^2}{\varepsilon_{\text{qDRIFT}}}\right).
\end{aligned}\end{equation}
That is, for SEHs there is an overhead factor of $\Theta(t\chi^2)$ for $N_{\text{qDRIFT}}$.

\paragraph{Step 3.}
For simplicity and concreteness let us return to the example in \cref{equation:example_two_qubit_diagonal_Hamiltonian}. Consider $\kappa_{+0}$ and $\kappa_{+1}$ again. Under the influence of SDEs, these are now time-dependent processes. We have
\begin{equation}
\begin{aligned}
    \kappa_{+0}(t) &= \delta_{II}(t) + \delta_{XI}(t) + \delta_{IZ}(t) + \delta_{XZ}(t)\\
    &= \underbrace{\delta_{II}(0) + \delta_{XI}(0) + \delta_{IZ}(0) + \delta_{XZ}(0)}_{\kappa_{+0}(0)} + \sum_{i = II,XI,IZ,XZ} \sum_{j=1}^\chi G_{ij}\, dW_j(t)\\
    &\sim \mathcal{N}\left( \kappa_{+0}(0), t \sum_{i = II,XI,IZ,XZ} \sum_{j=1}^\chi G_{ij}^2 \right)
\end{aligned}
\end{equation}
Similarly,
\begin{equation}
\begin{aligned}
    \kappa_{+1}(t) &= \delta_{II}(t) + \delta_{XI}(t) - \delta_{IZ}(t) - \delta_{XZ}(t)\\
    &= \underbrace{\delta_{II}(0) + \delta_{XI}(0) - \delta_{IZ}(0) - \delta_{XZ}(0)}_{\kappa_{+1}(0)} + \sum_{i = II,XI,IZ,XZ} \sum_{j=1}^\chi G_{ij}\, dW_j(t)\\
    &\sim \mathcal{N}\left( \kappa_{+1}(0), t \sum_{i = II,XI,IZ,XZ} \sum_{j=1}^\chi G_{ij}^2 \right),
\end{aligned}
\end{equation}
so
\begin{equation}
\begin{aligned}
    \kappa_{+1}(t)-\kappa_{+0}(t) \sim \mathcal{N}\left( \kappa_{+1}(0)-\kappa_{+0}(0), 2t \sum_{i = II,XI,IZ,XZ} \sum_{j=1}^\chi G_{ij}^2 \right).
\end{aligned}
\end{equation}
As discussed in Step 3 above in \cref{subsection:HTFS_recap}, $\kappa_{+1}(t)-\kappa_{+0}(t)$ is now essentially a single-qubit parameter to be estimated. Furthermore, since $2\sum_{i = II,XI,IZ,XZ} \sum_{j=1}^\chi G_{ij}^2 \leq 8\chi G_{\text{max}}^2$, the `effective' $G_{\text{max}}$ is
\begin{equation}
\begin{aligned}
    G_{\text{max}}^{\text{eff}} = \sqrt{8\chi} G_{\text{max}}.
\end{aligned}
\end{equation}
The extended protocol \cref{algorithm:extended_rfe} (or \cref{algorithm:strong_extended_rfe}) then yields an estimate to $\kappa_{+1}(0)-\kappa_{+0}(0)$, with complexity given in \cref{table:quantum_sensing_algorithms} (with the effective $G_{\text{max}}^{\text{eff}}$ in place of $G_{\text{max}}$). Furthermore, the Heisenberg limit is achieved when
\begin{equation}
\begin{aligned}
    G_{\text{max}}^{\text{eff}} \lesssim \varepsilon^{3/2}
\end{aligned}
\end{equation}
or equivalently,
\begin{equation}
\begin{aligned}
    G_{\text{max}} \lesssim \varepsilon^{3/2}/\sqrt{8\chi}.
\end{aligned}
\end{equation}
This generalizes completely to general $k$-sized patches $C$, with
\begin{equation}
\begin{aligned}
    G_{\text{max}} \lesssim \varepsilon^{3/2}/\sqrt{2^{k+1}\chi}.
\end{aligned}
\end{equation}
Recall that because each patch is $k$-sized, the diagonal Hamiltonian corresponding to this patch has $2^{|C|}=2^k$ interacting terms at most, because each qubit is to be acted on only by either the identity and at most a non-identity Pauli operator. In the two-qubit case of \cref{equation:example_two_qubit_diagonal_Hamiltonian} we saw that qubit 1 was acted on by $\{I,X\}$ and qubit 2 was acted on by $\{I,Z\}$, so the diagonal Hamiltonian had $2^2=4$ interacting terms at most.

We formally state this as
\begin{result}\label{result:SEH_Hamiltonian_learning}
    Given the SEH in \cref{problem:SEH_Hamiltonian_learning}, the modified HTFS protocol, with Hamiltonian reshaping implemented by \cref{algorithm:SEH_qDRIFT} and single-qubit quantum sensing implemented by \cref{algorithm:extended_rfe}, has a total evolution time complexity of
    \begin{equation}
    \begin{aligned}
        T_{\text{HTFS}}^{\text{SEH}} = O\left( \frac{1}{\varepsilon} \cdot \exp\left(\frac{4\cdot 2^{k+1}\chi G_{\text{max}}^2}{3\varepsilon^3} \right) \right)
    \end{aligned}
    \end{equation}
\end{result}
where $\varepsilon$ is the desired precision of the estimated parameters, i.e. $\sup_{i} |\hat{\delta}_{i0} - \delta_{i0}| < \varepsilon$. In particular, the Heisenberg limit is attained if
\begin{equation}
\begin{aligned}
    G_{\text{max}} \lesssim \left(\frac{\varepsilon^3}{2^{k}\chi}\right)^{1/2}.
\end{aligned}
\end{equation}


\newpage
\section*{Acknowledgments}
This work is supported by the National Research Foundation, Singapore through the National Quantum Office, hosted in A$^*$STAR, under its Centre for Quantum Technologies Funding Initiative. The authors warmly thank Itai Arad, Shao Hen Chiew, My Duy Hoang Long, Serge Massar and Zhan Yu for invaluable discussions. We credit ChatGPT for basic tasks such as evaluating Gaussian integrals and producing all the code generating the figures in this manuscript, and more nontrivially,
\begin{enumerate}[i.]
    \item Providing \cref{lemma:convexity_lemma};
    \item Analysing the resource complexity in \cref{algorithm:strong_extended_rfe};
    \item Suggesting Borell-TIS as an appropriate concentration inequality to use in the proof of \cref{result:pointwise_SEH_qDRIFT},
\end{enumerate}
all in an interactive manner.

\addcontentsline{toc}{section}{References}
\bibliographystyle{alpha}
\bibliography{main}


\newpage
\appendix

\section{Miscellanea}\label[appendix]{appendix:miscellanea}

\paragraph{Notation.}
We use the notation $f \sim g$ to mean $f = \Theta(g)$, and similarly $f \lesssim g$ for $f = O(g)$ and $f \gtrsim g$ for $f = \Omega(g)$. As usual, $\widetilde{O}$ hides polylog factors.

For a sequence of operators $O_1, \dots, O_n$ we write
\begin{equation}\begin{aligned}
    \prod_{1 \leq i \leq n}^\leftarrow O_i = O_n\dots O_1, \qquad \prod_{1 \leq i \leq n}^\rightarrow O_i = O_1\dots O_n. 
\end{aligned}\end{equation}

\paragraph{Operator norms.}
Given a linear operator $A: V \to W$ between the normed vector spaces $V,W$, its ($V \to W$)-operator norm is defined as
\begin{equation}\begin{aligned}
    \|A\|_{V \to W} \coloneq \sup_{x \neq 0} \frac{\|Ax\|_W}{\|x\|_V}
\end{aligned}\end{equation}
where $\|\cdot\|_V, \|\cdot\|_W$ are the respective norms on $V,W$. In particular, if $V,W$ are the standard Euclidean spaces with respective $l_p$-, $l_q$-norms, we write
\begin{equation}\begin{aligned}
    \|A\|_{p \to q} \coloneq \sup_{x \neq 0} \frac{\|Ax\|_q}{\|x\|_p}.
\end{aligned}\end{equation}

\begin{lemma}[Monotonicity of $\|\cdot\|_{2 \to 1}$ under entrywise domination]\label{lemma:monotonicity_2-1_norm}
    If $0 \leq A \leq B$, then $\|A\|_{2 \to 1} \leq \|B\|_{2 \to 1}$.
\end{lemma}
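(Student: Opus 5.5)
The plan is to prove the inequality pointwise in the input vector and then pass to the supremum. Fix any $x \in \mathbb{R}^n$ with $x \neq 0$, where $A,B \in \mathbb{R}^{m\times n}$ satisfy $0 \leq A_{ij} \leq B_{ij}$ entrywise. For each row $i$ the triangle inequality gives
\begin{equation}\begin{aligned}
    |(Ax)_i| = \Big|\sum_j A_{ij}x_j\Big| \leq \sum_j A_{ij}|x_j| \leq \sum_j B_{ij}|x_j| = (B|x|)_i,
\end{aligned}\end{equation}
where $|x|$ denotes the entrywise absolute value. Both steps use the nonnegativity of the entries: the first needs $A_{ij}\geq 0$ to write $|A_{ij}x_j| = A_{ij}|x_j|$, and the second uses $A_{ij}\leq B_{ij}$ together with $|x_j|\geq 0$. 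Summing over $i$ then gives $\|Ax\|_1 \leq \|B|x|\|_1$.

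Since $\||x|\|_2 = \|x\|_2$ and $|x| \neq 0$, it follows that
\begin{equation}\begin{aligned}
    \frac{\|Ax\|_1}{\|x\|_2} \leq \frac{\|B|x|\|_1}{\||x|\|_2} \leq \|B\|_{2\to 1}.
\end{aligned}\end{equation}
Taking the supremum over $x \neq 0$ yields $\|A\|_{2\to 1} \leq \|B\|_{2\to 1}$.

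The argument is elementary, and the only point needing care is the replacement of $x$ by $|x|$. This works because the $\ell_2$-norm is invariant under flipping signs of coordinates, and because $B$, having nonnegative entries, can only increase $\|Bx\|_1$ when $x$ is replaced by $|x|$. The hypothesis $0 \leq A$ is essential here: without it the first triangle-inequality step would need $|A_{ij}|$ in place of $A_{ij}$.
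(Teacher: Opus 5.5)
Your proposal is correct and follows essentially the same argument as the paper. Both bound $|(Ax)_i| \leq \sum_j A_{ij}|x_j| \leq (B|x|)_i$ row by row, sum over rows, use $\||x|\|_2 = \|x\|_2$ to compare with $\|B\|_{2\to 1}$, and take the supremum. Your version works with the ratio over all $x \neq 0$ rather than restricting to unit vectors, which makes no difference.
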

\begin{proof}
    For the purposes of just this proof let us use the notation $|x|\coloneq(|x_1|,\dots,|x_n|)$ for the vector $x=(x_1,\dots,x_n)$. Note that $\||x|\|_2 = \|x\|_2$.
    
    By definition $\|A\|_{2 \to 1} = \sup_{\|x\|_2 = 1} \|Ax\|_1$. For any arbitrary $x$ with $\|x\|_2=1$ we have $|(Ax)_i| \leq \sum_j A_{ij}|x_j| \leq \sum_j B_{ij}|x_j| = (B|x|)_i$, so summing over $i$ yields $\|Ax\|_1 \leq \|B|x|\|_1$. Since $\||x|\|_2=1$ too, we get $\|B|x|\|_1 \leq \sup_{\|y\|_2 = 1} \|Bx\|_1 = \|B\|_{2 \to 1}$. We have deduced that for every unit $x$, $\|Ax\|_1 \leq \|B\|_{2\to 1}$, so taking the supremum over $x$ yields the desired result.
\end{proof}

\begin{example}\label{example:examples_of_normA}
    Here we give two simple examples of $\|A\|_{2\to 1}$.
    \begin{enumerate}
        \item $A = I_m$: $\|I_m\|_{2\to 1} = \sqrt{m}$.
        \item $A = J_{m\times \chi}$: $\|J_{m\times \chi}\|_{2\to 1} = m\sqrt{\chi}$.
    \end{enumerate}
\end{example}

\begin{proof}\hfill
    \begin{enumerate}
        \item Here $\chi = m$ and $I_m$ is the identity matrix. We have $\|I_m\|_{2\to 1} = \sup_{\|x\|_2 = 1} \|x\|_1 \leq \sqrt{m}$, using $\|x\|_1 \leq \sqrt{m}\|x\|_2$. Equality is attained for $x = \frac{1}{\sqrt{m}}(1,\dots,1)$.

        \item Here $J_{m\times \chi}$ is the all-ones matrix. For any $x$ we have $\|J_{m \times \chi}x\|_1 = m|\sum_{j=1}^\chi x_j| \leq m\sqrt{\chi}\|x\|_2$, so $\|J_{m \times \chi}\|_{2\to 1} \leq m\sqrt{\chi}$, with equality again attained for $x = \frac{1}{\sqrt{\chi}}(1,\dots,1)$. Here in the last inequality we have used Cauchy-Schwarz to get $|\sum_{j=1}^\chi x_j| \leq \sqrt{\chi}\|x\|_2$.        
    \end{enumerate}
\end{proof}

\section{Quantum Information}\label[appendix]{appendix:quantum_info}

\paragraph{Paulis.} 
Pauli matrices are denoted $\sigma_{x,y,z}$ or $X/Y/Z$. Their corresponding eigenstates are denoted by
\begin{equation}
\begin{aligned}
    &\ket{z,1} = \ket{0}, \quad &&\ket{z,-1} = \ket{1}\\
    &\ket{x,1} = \ket{+}, \quad &&\ket{x,-1} = \ket{-}\\
    &\ket{y,1} = \frac{1}{\sqrt{2}}(\ket{0} + i\ket{1}), \quad &&\ket{y,-1} = \frac{1}{\sqrt{2}}(\ket{0} - i\ket{1}).
\end{aligned}
\end{equation}

\paragraph{Quantum channels.}
For a state $\ket{\psi}$ let $\|\psi\|_2$ denote its standard Euclidean norm. For a linear operator $A$ its trace norm is $\|A\|_1 \coloneq \tr \sqrt{A^\dag A}$ and its spectral/operator norm is $\|A\|_\infty \coloneq \sup_{\|\psi\|_2=1} \|A\ket{\psi}\|_2$. The trace norm induces the trace distance $d_1(A,B)\coloneq \frac{1}{2}\|A-B\|_1$. For density operators $\rho,\sigma$ we have $d_1(\rho,\sigma) \in [0,1]$.

For a linear superoperator $\Phi$ acting on a Hilbert space $\mathcal{H}$, its diamond norm is
\begin{equation}\begin{aligned}
    \|\Phi\|_\diamond \coloneq \sup_{X \neq 0} \frac{\|(\Phi \otimes I_{\dim \mathcal{H}})(X)\|_1}{\|X\|_1} =  \sup_{\|X\|_1=1} \|(\Phi \otimes I_{\dim \mathcal{H}})(X)\|_1.
\end{aligned}\end{equation}
The diamond norm induces the diamond distance
\begin{equation}\begin{aligned}
    d_\diamond(\Phi,\Psi) \coloneq \frac{1}{2} \|\Phi-\Psi\|_\diamond.
\end{aligned}\end{equation}
For quantum channels $\mathcal{E}, \mathcal{F}$ we have $\|\mathcal{E}\|_\diamond=1$ and $d_\diamond(\mathcal{E}, \mathcal{F}) \in [0,1]$.

We need the following properties of the diamond norm for quantum channels:
\begin{enumerate}[i.]
    \item Triangle inequality: $\|\Phi+\Psi\|_\diamond \leq \|\Phi\|_\diamond + \|\Psi\|_\diamond$.
    
    \item Submultiplicativity under composition: $\|\Phi\Psi\|_\diamond \leq \|\Phi\|_\diamond\|\Psi\|_\diamond$.

    \item Subadditivity under composition for quantum channels: $\|\mathcal{E}_n\dots\mathcal{E}_1 - \mathcal{F}_n\dots\mathcal{F}_1\|_\diamond \leq \sum_{i=1}^n \|\mathcal{E}_i - \mathcal{F}_i\|_\diamond$.
\end{enumerate}

\paragraph{Time-dependent evolution.} Say a quantum system is governed by a time-dependent Hamiltonian $H(t)$. Then the Schr\"{o}dinger equation reads
\begin{equation}\begin{aligned}
    \frac{d\ket{\psi(t)}}{dt} = -iH(t)\ket{\psi(t)}
\end{aligned}\end{equation}
or equivalently
\begin{equation}
\begin{aligned}
    \frac{dU(t)}{dt} = -iH(t)U(t)
\end{aligned}
\end{equation}
where $\ket{\psi(t)} = U(t)\ket{\psi(0)}$. The corresponding integral form is
\begin{equation}\label{equation:integral_form_Sch_eq_U}
\begin{aligned}
    U(t) = U(0) - i\int_0^t d\tau\, H(\tau)U(\tau).
\end{aligned}
\end{equation}
Substituting in the RHS of the integral equation \cref{equation:integral_form_Sch_eq_U} for $U(\tau)$ in the integrand and iterating gives the Dyson series
\begin{equation}\begin{aligned}
    U(t) &= \sum_{n=0}^\infty (-i)^n \int_{0 \leq \tau_n \leq \dots \leq \tau_1 \leq t} d\tau_1 \dots d\tau_n\,H(\tau_1)\dots H(\tau_n).\\
    &=: \exp_\mathcal{T}\left( -i\int_0^t H(\tau)\,d\tau \right).
\end{aligned}\end{equation}
Here the integral $\int_{0 \leq \tau_n \leq \dots \leq \tau_1 \leq t} d\tau_1 \dots d\tau_n$ is shorthand for $\int_0^t d\tau_1 \int_0^{\tau_1} d\tau_2 \dots \int_0^{\tau_{n-1}} d\tau_n$. Furthermore note that
\begin{equation}\begin{aligned}
    \int_0^t d\tau_1 \int_0^{\tau_1} d\tau_2 \dots \int_0^{\tau_{n-1}} d\tau_n\,H(\tau_1)\dots H(\tau_n) = \frac{1}{n!} \int_0^t d\tau_1 \int_0^t d\tau_2 \dots \int_0^t d\tau_n\,\mathcal{T}\{H(\tau_1)\dots H(\tau_n)\}.
\end{aligned}\end{equation}
While the $n$-simplex variant (LHS) is cleaner conceptually, the $n$-hypercube variant (RHS) is occasionally useful for certain computations.

For a density operator the the Schr\"{o}dinger equation is
\begin{equation}\begin{aligned}
    \frac{d\rho(t)}{dt} = -i[H(t),\rho(t)].
\end{aligned}\end{equation}
Defining the Liouvillian (super)operator
\begin{equation}\begin{aligned}
    \mathcal{L}(t)(\rho) \coloneq -i[H(t),\rho],
\end{aligned}\end{equation}
we can rewrite Schr\"{o}dinger as
\begin{equation}\begin{aligned}
    \frac{d\rho(t)}{dt} = \mathcal{L}(t)\rho(t).
\end{aligned}\end{equation}
Substituting in the corresponding integral equation and iterating as in the pure state case gives the density operator Dyson series
\begin{equation}\begin{aligned}
    \rho(t) &= \exp_\mathcal{T}\left( \int_0^t \mathcal{L}(\tau)\,d\tau \right)\rho(0)\\
    &= \sum_{n=0}^\infty \int_{0 \leq \tau_n \leq \dots \leq \tau_1 \leq t} d\tau_1 \dots d\tau_n\,\mathcal{L}(\tau_1)\dots \mathcal{L}(\tau_n)\,\rho(0)\\
    &= \sum_{n=0}^\infty (-i)^n \int_{0 \leq \tau_n \leq \dots \leq \tau_1 \leq t} d\tau_1 \dots d\tau_n\,
    [H(\tau_1),[H(\tau_2),\dots[H(\tau_n),\rho(0)]\dots]].
\end{aligned}\end{equation}

If we write $\rho(t)$ in terms of the unitary channel $\mathcal{U}(t)$, i.e. $\rho(t)=\mathcal{U}(t)\rho(0)$, then the Dyson series for $\mathcal{U}(t)$ is
\begin{equation}\label{equation:dyson_series_unitary_channel}
\begin{aligned}
    \mathcal{U}(t) &= \exp_\mathcal{T}\left( \int_0^t \mathcal{L}(\tau)\,d\tau \right)\\
    &= \sum_{n=0}^\infty \int_{0 \leq \tau_n \leq \dots \leq \tau_1 \leq t} d\tau_1 \dots d\tau_n\,\mathcal{L}(\tau_1)\dots \mathcal{L}(\tau_n)\\
    &= \sum_{n=0}^\infty (-i)^n \int_{0 \leq \tau_n \leq \dots \leq \tau_1 \leq t} d\tau_1 \dots d\tau_n\,
    [H(\tau_1),[H(\tau_2),\dots[H(\tau_n),\,\cdot\,]\dots]].
\end{aligned}
\end{equation}

\section{Concentration Inequalities}\label[appendix]{appendix:conc_ineq}

\begin{theorem}[Hoeffding's Inequality]\label{theorem:Hoeffding}
    Let $(X_1,\dots,X_n)$ be independent bounded random variables, where $X_i \in [a_i,b_i]$. Then for every $\varepsilon>0$,
    \[
        \Pr\left[\left|\frac{1}{N}\sum_i X_i - \E\left( \frac{1}{N}\sum_i X_i \right)\right| \geq \varepsilon \right] \leq 2\exp\left( \frac{-2N^2\varepsilon^2}{\sum_{i=1}^N (b_i-a_i)^2} \right).
    \]
\end{theorem}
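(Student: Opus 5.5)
We read the index $n$ in the statement as the same $N$ as in the normalisation, so that $X_1,\dots,X_N$ are independent with $X_i\in[a_i,b_i]$. Write $S\coloneq\sum_{i=1}^N (X_i-\E X_i)$ and $D\coloneq\sum_{i=1}^N (b_i-a_i)^2$. The event in the statement is $\{|S|\geq N\varepsilon\}$. The plan is the standard Chernoff (exponential moment) argument. We bound the upper tail $\Pr[S\geq N\varepsilon]$, obtain the lower tail by applying the same bound to $-X_i\in[-b_i,-a_i]$ (same interval lengths), and combine the two tails with a union bound. This union bound is where the factor $2$ comes from.

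For the upper tail, fix $\lambda>0$ and apply Markov's inequality to $e^{\lambda S}$:
\begin{equation*}
\Pr[S\geq N\varepsilon]\leq e^{-\lambda N\varepsilon}\,\E e^{\lambda S}=e^{-\lambda N\varepsilon}\prod_{i=1}^N \E e^{\lambda (X_i-\E X_i)}.
\end{equation*}
The equality uses independence of the $X_i$ to factorise the moment generating function. Each factor is controlled by Hoeffding's lemma: if $Y\in[a,b]$ with $\E Y=0$, then $\E e^{\lambda Y}\leq e^{\lambda^2(b-a)^2/8}$. This gives
\begin{equation*}
\Pr[S\geq N\varepsilon]\leq \exp\left(-\lambda N\varepsilon+\frac{\lambda^2 D}{8}\right).
\end{equation*}
Minimising over $\lambda$ gives $\lambda=4N\varepsilon/D$, and the exponent becomes $-2N^2\varepsilon^2/D$, which is exactly the claimed bound.

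The main step, and the only non-routine one, is Hoeffding's lemma. We would prove it as follows. By convexity of $y\mapsto e^{\lambda y}$ on $[a,b]$ we have $e^{\lambda y}\leq \frac{b-y}{b-a}e^{\lambda a}+\frac{y-a}{b-a}e^{\lambda b}$. Taking expectations and using $\E Y=0$ gives $\E e^{\lambda Y}\leq e^{L(h)}$, where $h=\lambda(b-a)$, $p=-a/(b-a)\in[0,1]$, and
\begin{equation*}
L(h)=-hp+\log\left(1-p+pe^{h}\right).
\end{equation*}
One checks that $L(0)=L'(0)=0$. The second derivative $L''(h)$ has the form $q(1-q)$ for some $q\in[0,1]$, namely $q=pe^h/(1-p+pe^h)$, so $L''(h)\leq 1/4$. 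Taylor's theorem with Lagrange remainder, the same device used earlier for the exponential tail bound in the proof of \cref{result:time-dependent_qDRIFT}, then gives $L(h)\leq h^2/8=\lambda^2(b-a)^2/8$. This completes the lemma and hence the theorem.
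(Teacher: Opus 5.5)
Your proof is correct and complete. It is the standard argument: a Chernoff bound (Markov's inequality applied to $e^{\lambda S}$), factorisation by independence, Hoeffding's lemma with the convexity and $L''(h)=q(1-q)\le 1/4$ Taylor step, optimisation at $\lambda=4N\varepsilon/D$, and a union bound over the two tails. The paper itself gives no proof of \cref{theorem:Hoeffding}; it only quotes it as a standard concentration inequality in the appendix.

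The only loose ends are degenerate cases, each worth a clause:
\begin{itemize}
\item $p\in[0,1]$ follows from $a\le \E Y=0\le b$.
\item If $b=a$ in the lemma, then $Y=0$ almost surely and the bound is trivial.
\item If $D=0$, every $X_i$ is almost surely constant and the left-hand side is $0$.
\end{itemize}
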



The following concentration inequality is also commonly called Borell-TIS. We refer the reader to Theorem 2.1.1 in \cite{adler2007random} or Theorem 5.8 in \cite{boucheron2013concentration} for more context and details on Borell-TIS and other related concentration inequalities.
\begin{theorem}[Borell-Tsirelson-Ibragimov-Sudakov Inequality]\label{theorem:Borell-TIS}
    Let $T$ be an indexing set and $(X_t)_{t \in T}$ be a centered Gaussian process almost surely bounded on $T$, i.e.
    \begin{equation}\begin{aligned}
        \Pr\left[ \sup_{t \in T} |X_t| < \infty \right] = 1.
    \end{aligned}\end{equation}
    Define $\sigma_T^2 \coloneq \sup_{t \in T} \vari(X_t)$. Then for every $\varepsilon > 0$, we have
    \begin{equation}\begin{aligned}
        \Pr\left[ \sup_{t \in T} X_t - \E\sup_{t \in T} X_t \geq \varepsilon \right] \leq \exp\left( \frac{-\varepsilon^2}{2\sigma_T^2} \right).
    \end{aligned}\end{equation}
    By symmetry the two-sided version is
    \begin{equation}\begin{aligned}
        \Pr\left[ \left|\sup_{t \in T} X_t - \E\sup_{t \in T} X_t\right| \geq \varepsilon \right] \leq 2\exp\left( \frac{-\varepsilon^2}{2\sigma_T^2} \right).
    \end{aligned}\end{equation}
\end{theorem}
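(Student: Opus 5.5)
The plan is to derive the Borell--TIS inequality (\cref{theorem:Borell-TIS}) from Gaussian concentration for Lipschitz functions, first for finite index sets and then by a limiting argument. \textbf{Finite case.} Let $T=\{t_1,\dots,t_n\}$. The centered Gaussian vector $(X_{t_1},\dots,X_{t_n})$ has a covariance matrix $\Sigma$. Choose $A$ with $AA^T=\Sigma$, so that $X \overset{d}{=} Ag$ with $g\sim\mathcal{N}(0,I_n)$. Define
\begin{equation}
\begin{aligned}
    f(x) \coloneq \max_i (Ax)_i .
\end{aligned}
\end{equation}
Each coordinate map $x\mapsto (Ax)_i=\langle a_i,x\rangle$, where $a_i$ is the $i$th row of $A$, is Lipschitz with constant $\|a_i\|_2=\sqrt{\vari X_{t_i}}$. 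A maximum of Lipschitz functions is Lipschitz with the largest of the constants, so $f$ is $\sigma_T$-Lipschitz. I would then invoke the standard Gaussian concentration inequality for an $L$-Lipschitz $f$:
\begin{equation}
\begin{aligned}
    \Pr[f(g)-\E f(g)\geq \varepsilon]\leq e^{-\varepsilon^2/2L^2}.
\end{aligned}
\end{equation}
This holds with the sharp constant $1/2$, proved either via the Gaussian log-Sobolev inequality and Herbst's argument (bounding $\log \E e^{\lambda(f-\E f)}\leq \lambda^2L^2/2$) or via Gaussian isoperimetry. Taking $L=\sigma_T$ gives the one-sided bound. Applying the same inequality to $-f$, which is also $\sigma_T$-Lipschitz, controls the lower tail, and a union bound gives the two-sided version with the factor $2$. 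Note that the two-sided statement comes from Lipschitz concentration of $f$ and $-f$, not from any symmetry of the process itself.

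\textbf{Infinite case.} Let $T$ be general. Under the usual separability convention, $\sup_T X_t$ is a monotone limit $\sup_{T_k}X_t$ over increasing finite sets $T_k\uparrow T_0$, with $T_0$ countable and dense. Each $\sigma_{T_k}\leq\sigma_T$, so the finite-case bound holds along the sequence with $\sigma_T$. To pass to the limit I need $\E\sup_{T_k}X_t\uparrow\E\sup_T X_t<\infty$, which follows from monotone convergence once the limit is finite. The tail events then converge, and I would use Fatou's lemma or continuity of probability at the (at most countably many) non-continuity points of $\varepsilon$, followed by right-continuity in $\varepsilon$.

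\textbf{Main obstacle.} The delicate step is showing that a.s.\ boundedness implies $\E\sup_T X_t<\infty$ and $\sigma_T<\infty$. The first step I would try is the finite-case inequality itself. Pick $M$ with $\Pr[\sup_T X_t\leq M]\geq 3/4$; then $\Pr[\sup_{T_k}X_t\leq M]\geq 3/4$ for every $k$. Since $\sigma_{T_k}$ is bounded (each single $X_t$ is a.s.\ finite, and a short argument bounds the variances uniformly), the lower-tail bound forces $\E\sup_{T_k}X_t\leq M+\sigma_T\sqrt{2\log 4}$ uniformly in $k$. This is the Landau--Shepp/Fernique-type step. If it proves awkward, I would instead simply assume the precondition $\E\sup_T X_t<\infty$, which is how the theorem is actually applied in \cref{proposition:conc_bound_magnitude_Brownian_vector}.
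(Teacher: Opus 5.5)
The paper gives no proof of \cref{theorem:Borell-TIS}; it imports the result from Adler--Taylor (Thm.~2.1.1) and Boucheron--Lugosi--Massart (Thm.~5.8). Your proposal is therefore a self-contained reconstruction rather than a variant of anything in the text, and it is correct.

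Your finite step is the standard textbook argument: sharp Gaussian Lipschitz concentration (log-Sobolev plus Herbst) applied to the $\sigma_T$-Lipschitz map $x\mapsto\max_i(Ax)_i$. Your bound $\E\sup_{T_k}X_t\le M+\sigma_T\sqrt{2\log 4}$, followed by monotone convergence, is the usual way to show that a.s.\ boundedness already forces $\E\sup_T X_t<\infty$. That is the converse of the implication recorded in \cref{fact:equivalence_BTIS_preconditions}.

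Your route also gives an honest justification of the two-sided bound, which the citation does not. The paper's ``by symmetry'' is at best shorthand: the lower tail of $\sup_T X_t$ is not the upper tail of $\sup_T(-X_t)=-\inf_T X_t$. It comes from concentration of $-f$, exactly as you note.

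Three small repairs are needed.

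First, the uniform variance bound does not follow from a.s.\ finiteness of each individual $X_t$; you need a common level. If $\Pr[\sup_T|X_t|\le M]\ge 3/4$, then every $X_t\sim\mathcal{N}(0,\sigma_t^2)$ has $\Pr[|X_t|\le M]\ge 3/4$. This forces $\sigma_t\le M/\Phi^{-1}(7/8)$, with $\Phi$ the standard normal distribution function.

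Second, passing to the non-strict event $\{Y\ge\varepsilon\}=\bigcap_{\varepsilon'<\varepsilon}\{Y>\varepsilon'\}$ uses approximation from below, i.e.\ left-continuity of $\varepsilon\mapsto\Pr[Y\ge\varepsilon]$, together with continuity of the bound. It does not use right-continuity.

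Third, for a bare index set the separability convention is an assumption that the paper's statement leaves implicit. It is harmless in the paper's only application, \cref{proposition:conc_bound_magnitude_Brownian_vector}. There $(\tau,v)\mapsto\langle v,\bm{W}(\tau)\rangle$ is continuous on $[0,t]\times S^{\chi-1}$ and $\E\sup_T X_t<\infty$ is checked directly, so even your fallback hypothesis suffices.
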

The above result holds for \textit{arbitrary} indexing sets $T$, as long as the precondition is satisfied. Oftentimes $T$ is `nice' (e.g. compact), and one can show stronger conditions that implies the original a.s. boundedness precondition. For our purposes in this paper, we shall use one such stronger condition, namely $\E\sup_{t \in T} X_t < \infty$. This is justified by the following fact, which can be found in Theorem 2.1.2 of \cite{adler2007random}:
\begin{fact}\label{fact:equivalence_BTIS_preconditions}
    The following chain of implications involving the Borell-TIS precondition holds:
    \begin{equation}\begin{aligned}
        \E\sup_{t \in T} X_t < \infty &\iff \E\sup_{t \in T} |X_t| < \infty\\
        &\implies \Pr\left[ \sup_{t \in T} |X_t| < \infty \right] = 1.
    \end{aligned}\end{equation}
\end{fact}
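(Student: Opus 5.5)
The plan is to prove the three implications separately. Two of them are elementary. The only substantive one is $\E\sup_{t\in T} X_t<\infty \Rightarrow \E\sup_{t\in T}|X_t|<\infty$, and for it I will use the symmetry of centered Gaussian processes.

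First I fix conventions so that the expectations make sense. For uncountable $T$, suprema are understood in the standard way for a separable version of the process, so that $\sup_{t\in T}X_t=\sup_{t\in T_0}X_t$ for some countable $T_0$ and is therefore measurable. This is the convention of \cite{adler2007random}, and it holds in our application $X_{\tau,v}=\langle v,\bm{W}(\tau)\rangle$ by path continuity. Fix any $t_0\in T$. Then $\sup_t X_t\ge X_{t_0}$ and $\sup_t(-X_t)\ge -X_{t_0}$. Since $X_{t_0}$ is Gaussian, it is integrable, so both suprema have negative parts dominated by $|X_{t_0}|$. Hence their expectations are well defined in $(-\infty,\infty]$.

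For ($\Leftarrow$) I use the pointwise bound $\sup_t X_t\le\sup_t|X_t|$ together with the integrable lower bound above, which gives $\E\sup_t X_t\le\E\sup_t|X_t|<\infty$. For ($\Rightarrow$) I write $\sup_t|X_t|=\max(A,B)$ with $A\coloneq\sup_t X_t$ and $B\coloneq\sup_t(-X_t)$. Because the process is centered Gaussian, $(-X_t)_{t\in T}$ has the same finite-dimensional distributions as $(X_t)_{t\in T}$. Since $A$ and $B$ are suprema over the same countable set $T_0$, they have the same law, so $\E B=\E A<\infty$. Next I bound $\max(A,B)\le A^++B^+$. From $A\ge -|X_{t_0}|$ we get $A^-\le|X_{t_0}|$, so $A^+=A+A^-\le A+|X_{t_0}|$, and likewise $B^+\le B+|X_{t_0}|$. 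Therefore
\begin{equation}\begin{aligned}
\E\sup_{t\in T}|X_t|\le \E A+\E B+2\E|X_{t_0}| = 2\E\sup_{t\in T}X_t+2\sqrt{2/\pi}\,\sigma_{t_0}<\infty ,
\end{aligned}\end{equation}
where $\sigma_{t_0}^2\coloneq\vari(X_{t_0})$.

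The final implication, that $\E\sup_t|X_t|<\infty$ gives $\Pr[\sup_t|X_t|<\infty]=1$, is immediate: a nonnegative random variable with finite expectation is finite almost surely, for instance by Markov's inequality, $\Pr[\sup_t|X_t|\ge M]\le \E\sup_t|X_t|/M\to0$ as $M\to\infty$.

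The main obstacle is not the inequalities but the measure-theoretic bookkeeping. One must make sure the suprema over an uncountable index set are measurable, and that the symmetry $X\overset{d}{=}-X$ transfers from finite-dimensional distributions to the suprema; both points are handled by the separability reduction to $T_0$. One should also notice that $\E\sup_t X_t$ cannot be $-\infty$, which is what the integrable lower bound $X_{t_0}$ guarantees.
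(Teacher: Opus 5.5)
Your proof is correct. It is a genuinely different kind of argument from the paper's, because the paper gives no proof of this Fact at all. It only cites Theorem 2.1.2 of \cite{adler2007random}, so your proposal replaces a citation with a self-contained argument.

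Your key step is the symmetry $(X_t)_{t\in T}\overset{d}{=}(-X_t)_{t\in T}$. It gives $\E\sup_t(-X_t)=\E\sup_t X_t$, and together with $\sup_t|X_t|=\max(A,B)\le A^++B^+$ and the integrable anchor $X_{t_0}$ controlling the negative parts, you obtain $\E\sup_t|X_t|\le 2\E\sup_t X_t+2\E|X_{t_0}|$.

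The usual textbook form of the same idea writes $\sup_t|X_t|\le |X_{t_0}|+\sup_{s,t}(X_t-X_s)$. This gives the slightly sharper $\E\sup_t|X_t|\le \E|X_{t_0}|+2\E\sup_t X_t$. You can also recover it from your own setup via $\max(A,B)=A+B-\min(A,B)\le A+B+|X_{t_0}|$; the constant is immaterial here.

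What your route buys is that it makes explicit two points the paper passes over silently. These are the measurability of the suprema (via separability) and the fact that $\E\sup_t X_t$ is well defined in $(-\infty,\infty]$. It also uses nothing beyond symmetry and Markov's inequality.

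What the citation route buys is access to the deeper converse: almost-sure boundedness implies $\E\sup_t X_t<\infty$, which comes from the Borell--TIS/Fernique theory. The Fact as stated does not claim this, and you rightly do not attempt it.

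Finally, in the paper's only use of the Fact, $T=[0,t]\times S^{\chi-1}$ is invariant under $v\mapsto -v$. There $\sup X_{\tau,v}=\sup|X_{\tau,v}|$ holds pointwise, and almost-sure boundedness follows directly from path continuity of $\bm{W}$ on $[0,t]$. So even the easy direction is more than the application needs.
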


\section{Stochastic Processes}\label[appendix]{appendix:stochastic_processes}
The goal of this section is such that by the end of it, the reader would hopefully be able to make sense of the equation
\begin{equation}\begin{aligned}
    d\bm{X}(t) = \bm{F}(\bm{X}(t),t)\,dt + \bm{G}(\bm{X}(t),t)\,d\bm{W}(t).
\end{aligned}\end{equation}
Intuitively this is not so hard: at any point in time $t$ the change in $\bm{X}(t)$ is simply the sum of a deterministic component $F(\bm{X}(t),t)\,dt$ plus a stochastic component $G(\bm{X}(t),t)\,d\bm{W}(t)$. A deeper understanding of SDEs requires more machinery to set up however, so here we provide a concise review for completeness and convenience for the reader.

There are many good resources on stochastic processes and SDEs, such as \cite{baldi2017stochastic,van2007stochastic,cohen2015stochastic,schilling2014brownian,bass2011stochastic,evans2012introduction,oksendal2013stochastic}. The definitions and theorems presented here are kept to a minimum and are all relevant -- either they are invoked directly in the results above, or provide intuition and enhance understanding on the concepts used in this paper. For pedagogical purposes we also gloss over certain technicalities at the expense of full precision. For instance, many statements here hold only \textit{almost surely}, i.e. up to sets of measure zero, which we neglect to mention. Also we shall take the liberty to move the time index $t$ into the subscript, as in $\bm{X}_t$, or into the argument, as in $\bm{X}(t)$. Both notations have their place, depending on the surrounding context.

\subsection{Basic notions}
Recall that a \textbf{probability space} is a triple $(\Omega,\mathcal{F},\P)$ where $\Omega$ is the \textbf{sample space}, $\mathcal{F}$ is a \textbf{$\bm{\sigma}$-algebra} of subsets of $\Omega$ and $\P$ is a \textbf{probability measure} on $\mathcal{F}$. The elements $\omega \in \Omega$ are called \textbf{outcomes}, and the elements $F \in \mathcal{F}$ are called \textbf{events}. A \textbf{random variable} $X: \Omega \to S$ is an $\mathcal{F}$-\textbf{measurable}\footnote{It is quite sufficient to simply think of measurable functions as continuous, or piecewise continuous functions in this paper. Measurability is preserved by the standard operations such as sums, products, compositions, supremums, infimums and limits.} map from the original probability space $(\Omega,\mathcal{F},\P)$ to the \textbf{state space} $(S,\mathcal{F}_S)$, i.e. $X^{-1}(A) \in \mathcal{F}$ for all $A \in \mathcal{F}_S$. The \textbf{$\bm{\sigma}$-algebra generated by $\bm{X}$}, $\sigma(X) \coloneq \{X^{-1}(A): A \in \mathcal{F}_S\}$, is the smallest sub-$\sigma$-algebra of $\mathcal{F}$ with respect to which $X$ is measurable; we think of it as containing all and only the information on $X$ -- there is no information extraneous to $X$. Without further specification we often assume $\mathcal{F}=\sigma(X)$. For concrete applications $S$ is almost always a metric space and $\mathcal{F}_S$ its Borel $\sigma$-algebra, the archetypal example being $(\mathbb{R}^d,\mathcal{B}(\mathbb{R}^d))$.

While $X$ is formally defined as a map, it is common to write things like $X = x$, or more generally $X \in A$, where $x \in S$ and $A \in \mathcal{F}_S$. What this is intended to convey is that an outcome $\omega \in \Omega$ for which $X(\omega)=x$, or more generally $X(\omega) \in A$, has occurred in the experiment. The random variable $X$ induces a probability measure $\P_X$ on the state space, defined by 
\begin{equation}
\begin{aligned}
    \P_X(A)\coloneq\P(X^{-1}(A)) \overset{\text{abbr.}}{=} \P(X \in A)
\end{aligned}
\end{equation}
for the events $A \in \mathcal{F}_S$. We call $\P_X$ the \textbf{distribution} of $X$, and write $X \sim \P_X$. The \textbf{expectation} of $X$ is 
\begin{equation}
\begin{aligned}
    \E \!X \coloneq \int_\Omega X(\omega) \P(d\omega) = \int_S x \P_X(dx).
\end{aligned}
\end{equation}
More generally, if $f$ is yet another measurable map from $(S,\mathcal{F}_S)$ to some other state space, we have
\begin{equation}\label{equation:expectations_over_state_space}
\begin{aligned}
    \E \!f(X) \coloneq \int_\Omega f(X(\omega)) \P(d\omega) = \int_S f(x) \P_X(dx).
\end{aligned}
\end{equation}
The second equality simply maintains the equivalence of integrating over two different domains -- the base space and the state space. If a \textbf{probability density function} (pdf) further exists for $\P_X$ (this does not always hold, most notably for the Wiener measure on path space we shall encounter below), we denote it by $p_X$ and also write $\underset{X \sim p_X}{\E}[f(X)] = \int_S f(x) \P_X(dx) = \int_S f(x)p_X(x)\, dx$.

Now we formally define the notion of a stochastic process, which is simply a collection of random variables indexed by time. This enables us to mathematically model processes whose evolution in time are random.
\begin{definition}\label{definition:stochastic_processes}
    A \textbf{stochastic process} is a tuple
    \begin{equation}
    \begin{aligned}
        (\Omega,\mathcal{F},(\mathcal{F}_t)_{t \in T},(X_t)_{t \in T},\P)
    \end{aligned}
    \end{equation}
    where
    \begin{enumerate}[i.]
        \item $(\Omega,\mathcal{F},\P)$ is a probability space;
        \item $T \subseteq \mathbb{R}_{\geq 0}$ indexes the times and is most often an interval;
        \item $(\mathcal{F}_t)_{t \in T}$ is a filtration, i.e. an increasing family of sub-$\sigma$-algebras of $\mathcal{F}$: $\mathcal{F}_s \subseteq \mathcal{F}_t$ whenever $s \leq t$;
        \item $X=(X_t)_{t \in T}$ is a family of random variables taking values in the state space $(S,\mathcal{F}_S)$, such that $(X_t)_t$ is adapted to the filtration $(\mathcal{F}_t)_t$, i.e. $X_t$ is $\mathcal{F}_t$-measurable for every $t$.
    \end{enumerate}
\end{definition}

Intuitively, since each $\mathcal{F}_t$ captures the information available up to time $t$, i.e. the events to which we know how to assign probabilities up to time $t$, $X=(X_t)_t$ being adapted to $(\mathcal{F}_t)_t$ means we are progressively gaining more knowledge on the process $X$. Without further specification we assume $\mathcal{F}_t = \sigma(X_\tau: \tau \leq t)$ is the minimal $\sigma$-algebra that contains the information on the stochastic process $X$ up to time $t$. This filtration $(\sigma(X_\tau: \tau \leq t))_{t \in T}$ is called the \textbf{history} of $X$.

It is customary to simply refer to $X=(X_t)_t$ as the stochastic process itself. For a fixed $\omega \in \Omega$, the set $\{X_t(\omega): t \in T\}$ is the \textbf{sample path} of $X$ associated with $\omega$. In this spirit, it is helpful to think of $X$ as a path-valued map:
\begin{equation}
\label{equation:path-valued_map}
\begin{aligned}
    X: \Omega &\to S^T\\
    \omega &\mapsto \{t \mapsto X_t(\omega): t \in T\}.
\end{aligned}
\end{equation}
Yet another useful and \href{https://en.wikipedia.org/wiki/Currying}{equivalent} way to think of $X$ is as the map $X: T \times \Omega \to S$ given by $X(t,\omega)\coloneq X_t(\omega)$. A process $X$ is \textbf{continuous} if every sample path $X(\omega)$ is continuous.

Let us now consider $\mathbb{R}^d$-valued continuous stochastic processes. Viewing $X$ as a path-valued map as in \cref{equation:path-valued_map}, we have that $X(\omega)$ takes values in $C(T,\mathbb{R}^d)$, the space of continuous functions $T \to \mathbb{R}^d$. If $T = [0,t]$ is a finite interval then we equip $C([0,t],\mathbb{R}^d)$ with the uniform metric $d(\gamma,\gamma')\coloneq \sup_{\tau \in [0,t]} \|\gamma(\tau)-\gamma'(\tau)\|$.
This metric then generates the corresponding Borel $\sigma$-algebra on $C(T,\mathbb{R}^d)$, which we denote by $\mathcal{F}_{C(T,\mathbb{R}^d)}$. With this, we can think of a stochastic process $X$ as a \textit{path-valued random variable}:
\begin{equation}\label{equation:path-valued_random_variable}
\begin{aligned}
    X: (\Omega,\mathcal{F},\P) \to (C(T,\mathbb{R}^d),\mathcal{F}_{C(T,\mathbb{R}^d)},\P_X)
\end{aligned}
\end{equation}
where $\P_X$ is the distribution of $X$ induced by $\P$.

Let us next discuss two important classes of stochastic processes, namely Markov processes and martingales. To do so it is necessary to first discuss the notion of conditional expectations. In introductory probability classes, the \textit{conditional} probability of event $A$, given event $B$ has occurred is $\P(A|B) \coloneq \frac{\P(A \cap B)}{\P(B)}$ provided $\P(B)>0$. Likewise, the conditional expectation of a random variable $X$, given $B$ has occurred, is $\E(X|B) \coloneq \frac{1}{\P(B)}\int_B X(\omega) \P(d\omega)$. We can `lift' this to asking: what is the expected value of a random variable $X$, given another random variable $Y$? That is, if by chance $\omega \in \Omega$ has occured, and all we know about $\omega$ is the value $Y(\omega)$, what is our best guess for the value $X(\omega)$?

We present the canonical motivating example, which highlights the essence of the above question, without getting bogged down by mathematical difficulties. Consider a simple random variable $Y$ on $(\Omega,\mathcal{F},\P)$ defined by $Y \coloneq \sum_{i=1}^n a_i\1_{A_i}$, where $A_i \in \mathcal{F}$\footnote{We are running out of symbols here -- sometimes $A$ will mean an event in $\Omega$, sometimes it will mean an event in the \textit{state space} $S$. Which one it is will be clear from the surrounding context.} and $\{A_i\}_{i=1,\dots,n}$ is a partition of $\Omega$. That is, $Y$ is piecewise constant on the subsets $A_i$. In this case, if $Y(\omega)$ is known (but not $\omega$), then we can tell which event $A_1,\dots,A_n$ has occurred. Only this being known, our best estimate for $X$ should be its expected value over the event that has occurred, i.e. the $A_i$ containing $Y(\omega)$. That is, we define
\begin{equation}
\begin{aligned}
    \E(X\,|\,Y) \coloneq \sum_{i=1}^n \E(X\,|\,A_i)\1_{A_i}.
\end{aligned}
\end{equation}
Note that the values of $Y$ didn't really play a role here, we would have obtained the same definition $\E(X\,|\,Y') = \sum_{i=1}^n \E(X\,|\,A_i)\1_{A_i}$ for $Y' \coloneq \sum_{i=1}^n a_i'\1_{A_i}$. The crucial ingredient was the $\sigma$-algebra $\sigma(A_1,\dots,A_n)$. Thus, we define 
\begin{equation}
\begin{aligned}
    \E(X\,|\,\sigma(A_1,\dots,A_n)) \coloneq \sum_{i=1}^n \E(X\,|\,A_i)\1_{A_i}.
\end{aligned}
\end{equation}
Note that 
\begin{enumerate}
    \item $\E(X\,|\,\sigma(A_1,\dots,A_n))$ is a random variable, and not simply a number, like $\E(X\,|\,A_i)$;
    \item $\E(X\,|\,\sigma(A_1,\dots,A_n))$ is $\sigma(A_1,\dots,A_n)$-measurable;
    \item For any $A \in \sigma(A_1,\dots,A_n)$, which is generally given by a union (need not be the full union) $A=\cup_i A_i$,
    \begin{equation}
    \begin{aligned}
        \int_A \E(X(\omega)\,|\,\sigma(A_1,\dots,A_n)) \P(d\omega) = \int_A X(\omega)\P(d\omega).
    \end{aligned}
    \end{equation}
\end{enumerate}
This motivates the most general definition of conditional expectation:
\begin{definition}\label{definition:conditional_expectation}
    Let $(\Omega,\mathcal{F},\P)$ be a probability space and let $\mathcal{A} \subseteq \mathcal{F}$ b a sub-$\sigma$-algebra. If $X$ is a random variable on $\Omega$, then its \textbf{conditional expectation} w.r.t. $\mathcal{A}$ is the random variable
    \begin{equation}
    \begin{aligned}
        \E(X\,|\,\mathcal{A})
    \end{aligned}
    \end{equation}
    which satisfies
    \begin{enumerate}[i.]
        \item $\E(X\,|\,\mathcal{A})$ is $\mathcal{A}$-measurable;
        \item $\int_A \E(X(\omega)\,|\,\mathcal{A}) \P(d\omega) = \int_A X(\omega)\P(d\omega)$ for all $A \in \mathcal{A}$.
    \end{enumerate}
    Consequently, given a random variable $Y$, 
    \begin{equation}
    \begin{aligned}
        \E(X\,|\,Y) \coloneq \E(X\,|\,\sigma(Y)).
    \end{aligned}
    \end{equation}
\end{definition}
It can be shown that $\E(X\,|\,\mathcal{A})$ exists and is unique. The \textbf{conditional probability} of a general event $B$ (which need not be $\mathcal{A}$-measurable) w.r.t. $\mathcal{A}$ is then a special case of conditional expectations:
\begin{equation}
\begin{aligned}
    \P(B\,|\,\mathcal{A}) \coloneq \E(\1_B\,|\,\mathcal{A}).
\end{aligned}
\end{equation}

Back to stochastic processes. A stochastic process $(X_t)_{t \in T}$ (taking values in the state space $(S,\mathcal{F}_S)$) is a \textbf{Markov process} if for each $t \in T$ and $s \leq t$, it holds that
\begin{equation}
\begin{aligned}
    \P(X_t \in A\,|\,\mathcal{F}_s) = \P(X_t \in A\,|\,X_s) \qquad \text{for every } A \in \mathcal{F}_S.
\end{aligned}
\end{equation}
This is equivalent to having
\begin{equation}
\begin{aligned}
    \E( f(X_t) \,|\,\mathcal{F}_s) = \E( f(X_t) \,|\,X_s) 
\end{aligned}
\end{equation}
for every measurable function $f$ on $S$. Recall that $\mathcal{F}_s = \sigma(X_\tau: \tau \leq s)$ is the available information on the process $X$ up to the present time $s$. Informally, this means the state of \textit{any aspect} of $X$ (captured by the function $f$) at the future time $t$ is only dependent on information available at the present time $s$, i.e. $\sigma(X_s)$, and all the previous information $\mathcal{F}_s$ is irrelevant.

A stochastic process $(X_t)_{t \in T}$ is a \textbf{martingale} if for each $t \in T$ and $s \leq t$, it holds that
\begin{equation}
\begin{aligned}
    \E( X_t \,|\,\mathcal{F}_s) = X_s.
\end{aligned}
\end{equation}
Note that while the defining characteristics of Markov processes and martingales look similar superficially, they are not the same. There exist Markov processes which are not martingales, and vice versa. As a simple example of the former, simply consider Markov processes with deterministic drifts. The presence of a drift kills off the martingale property. For the latter, while a martingale's conditional mean $\E( X_t \,|\,\mathcal{F}_s)$ depends only on the present value $X_s$, it could very well be that some other aspect of $X$ at a future time $t$, say the (conditional) variance $\E( \vari X_t \,|\,\mathcal{F}_s)$, depends on the entire history $\mathcal{F}_s$ and not just $\sigma(X_s)$.

We shall need the following inequality for martingales in one of the results above:
\begin{theorem}[Doob's Martingale Inequality]\label{theorem:Doob_martingale_ineq}
    Let $(X_t)_{t \geq 0}$ be a martingale with continuous sample paths. Then for every $t>0$ and every $1<p<\infty$ it holds that
    \begin{equation}\begin{aligned}
        \E\sup_{\tau \in [0,t]} |X_\tau|^p \leq \left(\frac{p}{p-1}\right)^p \E|X_t|^p.
    \end{aligned}\end{equation}
\end{theorem}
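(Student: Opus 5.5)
We plan to prove the inequality first on finite time grids, where it reduces to a statement about discrete-time martingales, and then pass to the continuum using continuity of the sample paths. If $\E|X_t|^p=\infty$ there is nothing to prove, so we assume $X_t\in L^p$. By conditional Jensen with the convex function $|\cdot|$, the process $|X_\tau|$ is a nonnegative submartingale, and $\E|X_\tau|^p\le \E|X_t|^p<\infty$ for all $\tau\le t$.

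\textbf{Step 1 (discrete maximal inequality).} Fix a finite set $0=\tau_0<\tau_1<\dots<\tau_n=t$ and set $M_n\coloneq\max_{k\le n}|X_{\tau_k}|$. For $\lambda>0$, let $\sigma$ be the first index $k$ with $|X_{\tau_k}|\ge\lambda$, and set $\sigma=n$ if there is none. We will show
\begin{equation}\begin{aligned}
    \lambda\,\P(M_n\ge\lambda)\le \E\left[|X_t|\,\1_{\{M_n\ge\lambda\}}\right].
\end{aligned}\end{equation}
Decompose $\{M_n\ge\lambda\}$ into the disjoint events $A_k=\{\sigma=k,\ |X_{\tau_k}|\ge\lambda\}$, each of which is $\mathcal{F}_{\tau_k}$-measurable. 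On $A_k$ we have $\lambda\le|X_{\tau_k}|$. The submartingale property gives $|X_{\tau_k}|\le\E(|X_t|\,|\,\mathcal{F}_{\tau_k})$. Integrating over $A_k$ using property ii of \cref{definition:conditional_expectation}, and then summing over $k$, yields the displayed bound.

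\textbf{Step 2 ($L^p$ bound via layer cake and H\"older).} To avoid dividing by a possibly infinite quantity, we work with the truncation $M_n\wedge L$ for a constant $L$. The inequality of Step 1 still holds for $M_n\wedge L$ whenever $\lambda\le L$, and for $\lambda>L$ the left-hand probability is zero. By the layer-cake formula and Fubini,
\begin{equation}\begin{aligned}
    \E(M_n\wedge L)^p = \int_0^\infty p\lambda^{p-1}\P(M_n\wedge L\ge\lambda)\,d\lambda \le \E\left[|X_t|\int_0^{M_n\wedge L}p\lambda^{p-2}\,d\lambda\right] = \frac{p}{p-1}\E\left[|X_t|(M_n\wedge L)^{p-1}\right].
\end{aligned}\end{equation}
Applying H\"older with exponents $p$ and $q=p/(p-1)$ bounds the right-hand side by $\frac{p}{p-1}\|X_t\|_{L^p}\,\|M_n\wedge L\|_{L^p}^{p-1}$. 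Since $\|M_n\wedge L\|_{L^p}$ is finite, we can divide by it and obtain $\|M_n\wedge L\|_{L^p}\le\frac{p}{p-1}\|X_t\|_{L^p}$. Letting $L\to\infty$ by monotone convergence removes the truncation.

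\textbf{Step 3 (passage to continuous time).} Take the nested dyadic grids $D_j=\{kt2^{-j}\}_{k=0}^{2^j}$, each of which contains the endpoint $t$. The maxima $\max_{\tau\in D_j}|X_\tau|$ increase in $j$. Because the sample paths are continuous and $\bigcup_j D_j$ is dense in $[0,t]$, these maxima converge pointwise to $\sup_{\tau\in[0,t]}|X_\tau|$; in particular this supremum is measurable. Monotone convergence, combined with the uniform bound from Step 2, then gives the theorem.

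The main obstacle is Step 2. The division by $\|M\|_{L^p}$ is only legitimate once finiteness is guaranteed, which is why the truncation is needed, and one must also check that Step 1 survives the truncation. Step 3 is routine once path continuity is used to justify the dense countable grid.
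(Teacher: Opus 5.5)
Your proof is correct. The paper gives no argument of its own for this theorem and only cites Theorem 5.12 of Baldi and Theorem 3.6 of Bass; your route (discrete maximal inequality via the first-passage decomposition, then layer-cake plus H\"older, then monotone passage along nested dyadic grids using path continuity) is the standard proof found in such references, and the truncation in Step 2 is harmless but unnecessary, since $M_n\le\sum_{k}|X_{\tau_k}|$ already lies in $L^p$ once $X_t\in L^p$.
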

For most context on this inequality we direct the reader to Theorem 5.12 in \cite{baldi2017stochastic} or Theorem 3.6 in \cite{bass2011stochastic}

\subsection{The Wiener process}\label{subsection:Wiener_process}
The Wiener process, or Brownian motion, is perhaps the most recognizable continuous-time random process amongst mathematicians, physicists and engineers. It is also the archetypal example of continuous-time martingales, Markov processes, and Gaussian processes. Before stating the formal definition of Brownian motion, we first try to motivate it from physical considerations. The short treatment here is largely based on the excellent introduction in \cite{van2007stochastic}.

Brownian motion takes its name from the Scottish botanist Robert Brown, who in 1827 observed under a microscope the apparently erratic motion of pollen grains suspended in water. Brown first considered whether this motion might be biological in origin, but rejected that hypothesis after observing the same behaviour in glass powder and other inorganic materials. A physical explanation of Brown's observation was provided by Albert Einstein in 1905.\footnote{It is worth noting that a very similar analysis to Einstein's had appeared just a few years earlier, but in an entirely different context, due to French mathematician Louis Bachelier. Bachelier had proposed a stochastic model for French government bond prices, whose mathematical structure was essentially that of Brownian motion.} Einstein's argument is grounded in the atomic hypothesis, which in this context puts forth that the water comprises an enormous number of rapidly moving molecules, whose individual velocities fluctuate randomly because of thermal motion. Kinetic theory suggests that these velocities are independent and randomly distributed (with zero mean, since the total fluid has zero net velocity). A suspended microscopic particle (which is still large relative to the water molecules) is continually struck by these molecules, thus giving it net random displacements, which cumulatively give rise to the irregular trajectory observed by Brown.

How do we model this phenomenon? In what follows we reproduce Einstein's perhaps crude, yet physically quite revealing analysis. We suppose that the pollen particle is bombarded by $N$ water molecules per unit time, where each molecule imparts an i.i.d., zero-mean displacement $\xi_n$ to the particle. Zeroing the initial position of the pollen particle wlog, its position $x_t(N)$ at time $t$ is given by
\begin{equation}
\begin{aligned}
    x_t(N) = \sum_{i=1}^{Nt} \xi_i. 
\end{aligned}
\end{equation}
Since the pollen grain is large relative to a single water molecule, we are interested in the regime where the number of collisions $N$ is large, but the individual displacements $\xi_i$ are tiny. We capture the latter by assuming $\vari(\xi_i)$ takes the form $\vari(\xi_i) = G/N$. Note that
\begin{equation}
\begin{aligned}
    \E x_1(N)^2 = \vari\left( \sum_{i=1}^N \xi_i \right) = N \vari(\xi_i) = G,
\end{aligned}
\end{equation}
i.e. the mean-square displacement per unit time of the pollen is fixed to $G$ as $N$ increases. With an eye toward the central limit theorem (CLT), we rewrite the pollen position as
\begin{equation}
\begin{aligned}
    x_t(N) = \sqrt{Gt}\frac{\sum_{i=1}^{Nt} \Xi_i}{\sqrt{Nt}},
\end{aligned}
\end{equation}
where $\Xi_i = \xi_i\sqrt{N/G}$ are now standardized to unit variance. By the CLT, which states that for an i.i.d. sequence $(X_i)_i$ with $\E \!X_i=\mu$, $\vari X_i=\sigma^2$,
\begin{equation}
\begin{aligned}
    \sqrt{N}\left( \frac{X_1+ \dots +X_N}{N}-\mu \right) \overset{d}{\to} \mathcal{N}(0,\sigma^2),
\end{aligned}
\end{equation}
we see that in the limit $N \to \infty$, $\frac{\sum_{i=1}^{Nt} \Xi_i}{\sqrt{Nt}} \overset{d}{\to} \mathcal{N}(0,1)$. Thus $x_t(N)$ converges (in distribution) to a Gaussian of variance $Gt$. This was the crux of Einstein's analysis. Also observe that for $0 \leq s_1 < s_2 \leq t_1 < t_2$,
\begin{equation}
\begin{aligned}
    x_{t_2}(N) - x_{t_1}(N) = \sum_{i=Nt_1+1}^{Nt_2} \xi_i = \sum_{i=1}^{N(t_2-t_1)} \xi_i \overset{d}{\to} \mathcal{N}(0,G(t_2-t_1)),
\end{aligned}
\end{equation}
likewise for $x_{s_2}(N) - x_{s_1}(N)$. Furthermore, $x_{t_2}(N) - x_{t_1}(N)$ and $x_{s_2}(N) - x_{s_1}(N)$ are clearly independent since their associated increments $\{\xi_i\}_i$ are mutually disjoint. These features are reflected in \cref{definition:wiener_process} below.

While we have shown convergence in distribution, a fully rigorous mathematical analysis is still fraught with hurdles. Does the limit of the \textit{process} $t \mapsto x_t(N)$ as $N \to \infty$ even exist, in any suitable sense? This was first resolved by Norbert Wiener, whose formalization of Brownian motion we now present:
\begin{definition}\label{definition:wiener_process}
    A real-valued stochastic process $(\Omega,\mathcal{F},(\mathcal{F}_t)_{t \geq 0},(W_t)_{t \geq 0},\P)$ is a standard \textbf{Wiener process} if
    \begin{enumerate}[i.]
        \item $W_0 = 0$;
        \item $W_t-W_s \sim \mathcal{N}(0,t-s)$ for $0 \leq s \leq t$;
        \item Independent increments: $W_t-W_s$ is independent of $\mathcal{F}_s$ for $0 \leq s \leq t$.
    \end{enumerate}
    The generalization to higher dimensions is straightforward: an $\mathbb{R}^d$-valued process is a $d$-dimensional Wiener process if it again starts from $0$, has independent increments, and $W_t-W_s \sim \mathcal{N}(0,(t-s)I)$. The distribution of $W$ on path space $C([0,\infty),\mathbb{R}^d)$, $\P_W$, is called the \textbf{Wiener measure}.
\end{definition}
Properties ii,iii. imply that a Wiener process is a Gaussian process, i.e. the joint distribution of $(W_{t_1},\dots,W_{t_n})$ is Gaussian for any $n$ chosen times. Property iii. intuitively means the increments of a Wiener process after time $s$ are independent of its path up to time $s$. Furthermore, for times $s,t \geq 0$,
\begin{equation}
\begin{aligned}
    \cov(W_s,W_t) = \E W_sW_t = s \wedge t.
\end{aligned}
\end{equation}

A priori it is not obvious that a Wiener process exists, i.e. \cref{definition:wiener_process} can be satisfied. Of course, judging by its name it is clear that Wiener had also provided a construction. Many other constructions were subsequently developed; detailed treatments can be found in \cite{schilling2014brownian}. To help the reader get a feel of the Wiener process, we state a few distinguishing properties of its sample paths:
\begin{fact}\label{fact:path_properties_Wiener_process}
    The sample paths of the Wiener process $W(\omega)$
    \begin{enumerate}
        \item Are continuous. This is a consequence of Kolmogorov's continuity theorem.
        \item Are nowhere differentiable.
        \item Have infinite total variation on any finite interval $[a,b]$: $\text{TV}_{[a,b]}(W(\omega)) = \infty$.
        \item Have nonzero \textbf{quadratic variation}: for any sequence of partitions of $[a,b]$, $(\pi_n=\{a=t^n_0 < t^n_1 < \dots < t^n_{m_n} = b\})_n$ satisfying $|\pi_n| \to 0$,
        \begin{equation}
        \begin{aligned}
            \sum_{i=0}^{m_n-1} (W_{t^n_{i+1}}(\omega)-W_{t^n_i}(\omega))^2 \to b-a.\footnotemark
        \end{aligned}
        \end{equation}
        \footnotetext{Strictly speaking this convergence is in the $L^2$ sense, it is only after passing to an appropriately chosen subsequence that we get pointwise convergence.}
        Heuristically, we write this as
        \begin{equation}
        \begin{aligned}
            dW(\omega) = (dt)^{1/2}.
        \end{aligned}
        \end{equation}
    \end{enumerate}
\end{fact}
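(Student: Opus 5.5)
The plan is to take the four properties in the order continuity, quadratic variation, total variation, non-differentiability. Each later property leans on the earlier ones, and all of them use only \cref{definition:wiener_process}: Gaussian, stationary, independent increments.

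\textbf{Continuity.} I would invoke Kolmogorov's continuity theorem. Since $W_t-W_s\sim\mathcal{N}(0,t-s)$, we have $\E|W_t-W_s|^{2p}=c_p|t-s|^{p}$ for every integer $p\geq 1$. Kolmogorov then yields a modification that is locally $\gamma$-H\"older for every $\gamma<\frac{p-1}{2p}$, hence for every $\gamma<1/2$. We work with this continuous version throughout, in keeping with the paper's convention of ignoring null sets.

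\textbf{Quadratic variation.} Fix a partition $\pi_n$ and write $Q_n=\sum_i(\Delta_i W)^2$ with $\Delta t_i=t^n_{i+1}-t^n_i$. Each term $(\Delta_iW)^2$ has mean $\Delta t_i$ and variance $2\Delta t_i^2$, and the terms are independent by property iii. Hence
\begin{equation}
\E\big(Q_n-(b-a)\big)^2=2\sum_i \Delta t_i^2\leq 2|\pi_n|(b-a)\to 0,
\end{equation}
which gives the $L^2$ convergence. For the pointwise statement in the footnote, I would pass to a subsequence with $\sum_n|\pi_n|<\infty$. Chebyshev's inequality combined with Borel--Cantelli then gives almost sure convergence along that subsequence.

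\textbf{Infinite total variation.} This follows from the previous step. For any partition,
\begin{equation}
Q_n\leq \max_i|\Delta_iW|\cdot \mathrm{TV}_{[a,b]}(W).
\end{equation}
By uniform continuity of the path on $[a,b]$, $\max_i|\Delta_iW|\to 0$ as $|\pi_n|\to 0$. So if $\mathrm{TV}_{[a,b]}(W)$ were finite, $Q_n$ would tend to $0$ along the almost-sure subsequence, contradicting the limit $b-a>0$. To make the statement hold for every interval simultaneously, I would first establish it on the countably many intervals with rational endpoints. Any interval $[a,b]$ contains such a rational interval, and total variation is monotone in the interval, so the general case follows.

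\textbf{Nowhere differentiability.} I expect this to be the main obstacle, and I would use the Paley--Wiener--Zygmund / Dvoretzky--Erd\H{o}s argument on $[0,1]$, then extend to $[0,\infty)$ by a countable union. Suppose the path is differentiable at some $s$. Then there exist $M,\ell\in\mathbb{N}$ with $|W_t-W_s|\leq M|t-s|$ whenever $|t-s|\leq 3/\ell$. Now take $n\geq \ell$ and let $k$ be such that $s\in[(k-1)/n,k/n]$. The three consecutive increments $W_{(k+j+1)/n}-W_{(k+j)/n}$, $j=0,1,2$, are each bounded by $7M/n$ via the triangle inequality through $W_s$. The event $A_{M,n}$ is that some $k\leq n$ has all three increments bounded by $7M/n$. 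The three increments are independent $\mathcal{N}(0,1/n)$, so for a fixed $k$ this has probability at most $(14M/\sqrt{2\pi n})^3$. A union bound over $k$ then gives
\begin{equation}
\Pr(A_{M,n})\leq C_M\, n\cdot n^{-3/2}=C_M n^{-1/2}\to 0.
\end{equation}
The set of paths differentiable somewhere in $[0,1]$ is contained in $\bigcup_{M,\ell}\bigcap_{n\geq \ell}A_{M,n}$. Each $\bigcap_{n\geq\ell}A_{M,n}$ has probability at most $\inf_{n\geq \ell}\Pr(A_{M,n})=0$, so the countable union is null. The delicate points are choosing exactly three increments so that the union bound beats the factor $n$, and checking the triangle-inequality constant $7M$.
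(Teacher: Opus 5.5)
Your proposal is correct, and it takes a genuinely different, rigorous route. The paper does not actually prove this Fact. It cites Kolmogorov's theorem for continuity, then recovers items 4, 3, 2 heuristically from $dW(\omega)=(dt)^{1/2}$:
\begin{itemize}
\item $n$ squared increments of size $dt$ sum to $b-a$;
\item $n$ increments of size $\sqrt{dt}$ sum to $\sqrt{n(b-a)}\to\infty$;
\item $dW/dt=\pm1/\sqrt{dt}$.
\end{itemize}
It defers rigour to \cite{schilling2014brownian}.

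Your steps make the first two heuristics precise. The second-moment computation $\E(Q_n-(b-a))^2=2\sum_i\Delta t_i^2$, with Borel--Cantelli along a summable-mesh subsequence, is the exact version of the first and matches the paper's footnote. The inequality $Q_n\le\max_i|\Delta_iW|\cdot\mathrm{TV}_{[a,b]}(W)$, plus the reduction to rational intervals, does the same for the second.

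The substantive difference is item 2. The paper's alternative justification is that differentiability would force finite variation, contradicting item 3. That only excludes paths that are Lipschitz on some interval. It says nothing about differentiability at a single point, nor even about everywhere-differentiability with unbounded derivative ($x^2\sin(1/x^2)$ has infinite variation on $[0,1]$).

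Your Dvoretzky--Erd\H{o}s--Kakutani argument does not rely on item 3, and it is what actually yields \emph{nowhere} differentiability. It is the rigorous form of the paper's $dW/dt=\pm1/\sqrt{dt}$ heuristic. An increment over a step $1/n$ has size of order $n^{-1/2}$, so the chance it falls below $7M/n$ is $O(n^{-1/2})$. Three independent such events then beat the union bound over $n$ starting points.

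One small slip: with $s\in[(k-1)/n,k/n]$, the farthest grid point $(k+3)/n$ can lie up to $4/n$ from $s$. Your constant $7M=M(4+3)$ already uses this distance. So the Lipschitz neighbourhood should be $|t-s|\le 4/\ell$ rather than $3/\ell$, which you fix by choosing $\ell$ larger.
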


\begin{figure}[t]
    \centering
    \includegraphics[width=0.7\linewidth]{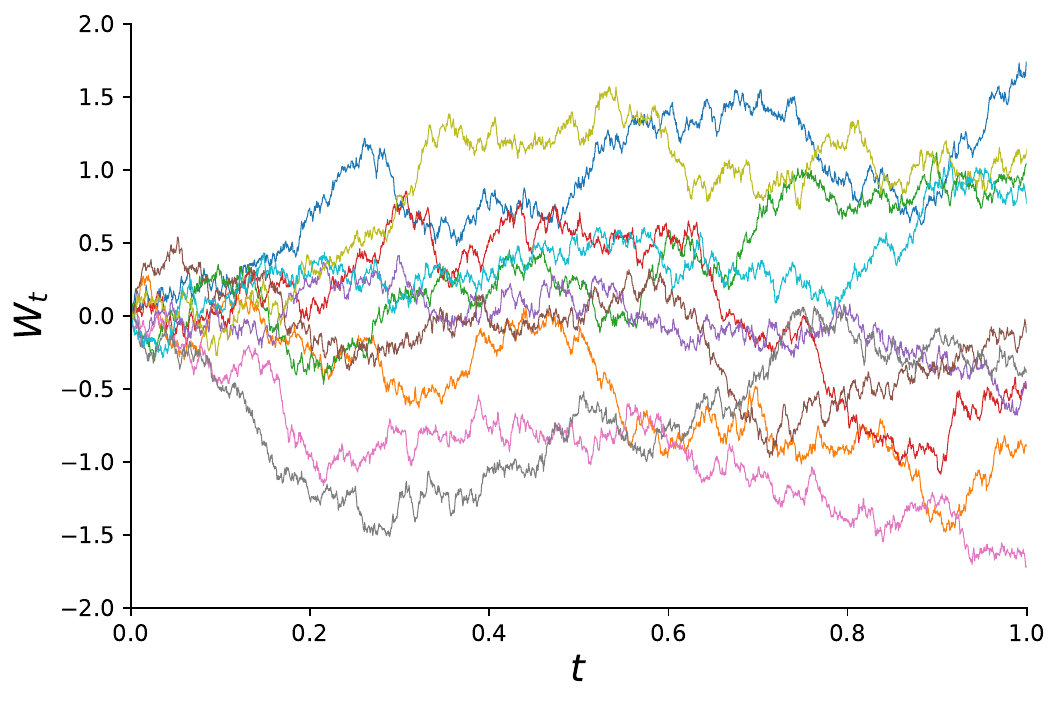}
    \caption{Ten sample paths of the 1D Wiener process, for $t \in [0,1]$. The paths are simulated using the discrete approximation $W_{t_{k+1}}=W_{t_k}+\sqrt{\Delta t}\,Z_k$ for $\Delta t = 1/2000$, where $Z_k \sim \mathcal{N}(0,1)$.
    }
    \label{figure:sample_paths_Brownian_motion}
\end{figure}

In short, while the paths of a Wiener process are continuous, they are \textit{extremely} irregular, see \cref{figure:sample_paths_Brownian_motion}. This is expressed in varying degrees of precision through properties 2-4. The source of this irregularity really comes from its quadratic variation. Intuitively, small changes in time $dt$ produces changes $dW(\omega)$ on the order of $\sqrt{dt} \gg dt$. Identifying as physicists for the moment, let us see how we can recover 2-4. heuristically from $dW(\omega) = (dt)^{1/2}$:
\begin{enumerate}
    \item[4.] Writing $dt \approx \frac{b-a}{n}$ for large $m_n \gets n$, $\sum_{i=0}^{n-1} (W_{t^n_{i+1}}(\omega)-W_{t^n_i}(\omega))^2 \approx n\cdot  (dW(\omega))^2 = n\,dt \approx b-a$.
    \item[3.] $\sum_{i=0}^{n-1} |dW(\omega)| \approx n \sqrt{dt} \approx \sqrt{n(b-a)} \to \infty$ as $n \to \infty$.
    \item[2.] $dW(\omega)/dt = \pm 1/\sqrt{dt} = \pm\infty$. Alternatively, if $W(\omega)$ were differentiable then it would have finite total variation on a finite interval, contradicting 3.
\end{enumerate}
For more precise statements on the regularity of Brownian paths and further interesting properties, see \cite{schilling2014brownian}. Furthermore, the Wiener process is both Markov and a martingale. The martingale property admits a very simple proof: for fixed $s\leq t$, we have
\begin{equation}
\begin{aligned}
    \E(W_t \,|\, \mathcal{F}_s) &= \E(W_s + (W_t-W_s) \,|\, \mathcal{F}_s)\\
    &= \E(W_s \,|\, \mathcal{F}_s) + \E(W_t-W_s \,|\, \mathcal{F}_s)\\
    &= W_s + 0
\end{aligned}
\end{equation}
where in the last equality we have used the independent increments property of Wiener processes. The formal proof of the Markov property is more involved, and we direct the reader to any of the excellent texts mentioned above. But intuitively, the independent increments property is again the underlying reason: writing again
\begin{equation}
\begin{aligned}
    W_t = \underbrace{W_s}_{\text{current position}} + \underbrace{W_t-W_s}_{\text{increment}},
\end{aligned}
\end{equation}
we see that $W_t$ depends only on the current position, and an increment that is independent of the entire history up to time $s$, $\mathcal{F}_s$.

\subsection{Stochastic integrals and calculus}
To make sense of SDEs, we need to first make sense of objects like
\begin{equation}
\begin{aligned}
    \int_0^t X_\tau \,dW_\tau,
\end{aligned}
\end{equation}
or, with $\tau$ explicitly as an argument,
\begin{equation}
\begin{aligned}
    \int_0^t X(\tau)\,dW(\tau).
\end{aligned}
\end{equation}

In elementary calculus, the Riemann-Stieltjes integral $\int_0^t f(\tau)\,dg(\tau)$ is defined as follows. We first construct a partition, or mesh, of the interval $[0,t]$, i.e. write $[0,t]=\bigcup_{i=1}^n [\tau_{i-1},\tau_i]$ with $\tau_0=0$ and $\tau_n=t$. For this mesh the approximating Riemann-Stieltjes sum is
\begin{equation}
\begin{aligned}
    S_n(f,g) \coloneq \sum_{i=1}^n f(c_i)(g(\tau_i)-g(\tau_{i-1}))
\end{aligned}
\end{equation}
where $c_i \in [\tau_{i-1},\tau_i]$ can be \textit{any} point in $[\tau_{i-1},\tau_i]$. Taking the limit as the mesh size goes to zero, we have
\begin{equation}
\begin{aligned}
    \int_0^t f(\tau)\,dg(\tau) \coloneq \lim_{n \to \infty} S_n(f,g).
\end{aligned}
\end{equation}
This limit, if it exists, is independent of the sequence of meshes used and the choice of $c_i$'s. In elementary calculus, most often $f$ is continuous and $g$ is of bounded variation, and these conditions are sufficient for the existence of $\int_0^t f(\tau)\,dg(\tau)$.

We would like to replicate this approach in our attempt to define stochastic integrals $\int_0^t X(\tau)\,dW(\tau)$. First note that $X(\tau)$ and $W(\tau)$ are random variables, so the stochastic integral is also a random variable. Naively, one attempts to define the stochastic integral \textit{pointwise}, as
\begin{equation}
\begin{aligned}
    \left(\int_0^t X(\tau)\,dW(\tau)\right)(\omega) &\coloneq \int_0^t X(\tau,\omega)\,dW(\tau,\omega)\\
    &\coloneq \lim_{n \to \infty} \sum_{i=1}^n X(c_i,\omega)(W(\tau_i,\omega)-W(\tau_{i-1},\omega))
\end{aligned}
\end{equation}
for each fixed outcome $\omega \in \Omega$. But doing so one quickly runs into difficulties. The culprit is the infinite variation property of $W(\omega)$, see \cref{fact:path_properties_Wiener_process}. As $n \to \infty$, the mesh size approaches zero and the RHS of the equation above blows up. To overcome this our forebears proceeded as follows.

\paragraph{Step 1.} Consider first \textit{simple} processes, which are processes taking the form
\begin{equation}
\begin{aligned}
    X(t) = \sum_{i=1}^n X_i\1_{[\tau_i,\tau_{i+1})}(t).
\end{aligned}
\end{equation}
That is, $X(t)$ looks like a step function (for each fixed $\omega$ -- note that the $X_i$'s are random variables). For such $X(t)$, define
\begin{equation}
\begin{aligned}
    \int_0^t X(\tau)\,dW(\tau) \coloneq \sum_{i=1}^n X_i ( W_{t_{i+1}}-W_{t_i}).
\end{aligned}
\end{equation}

\paragraph{Step 2.} Next we attempt to define the stochastic integral for general processes $X(t)$. Let the time interval $[0,t]$ be given. We first define a sequence of meshes $[0,t]=\bigcup_{i=1}^n [\tau_{i-1},\tau_i]$ (where the mesh size $\sup_{i=1,\dots,n} |\tau_{i+1}-\tau_i| \to 0$ as $n \to \infty$). It is a fact (which we shall not prove here) that we have the $L^2$-convergence
\begin{equation}\label{equation:L2_approximation_X(t)}
\begin{aligned}
    \E \left( \int_0^t |X(\tau)-X_n(\tau)|^2\,d\tau \right) \xrightarrow{n \to \infty} 0
\end{aligned}
\end{equation}
where the
\begin{equation}
\begin{aligned}
    X_n(s) = \sum_{i=1}^n X(c_i)\1_{[\tau_i,\tau_{i+1})}(s)
\end{aligned}
\end{equation}
are simple processes corresponding to the $n$-th mesh, which holds for \textit{any} choice of $c_i \in [\tau_{i-1},\tau_i)$. So far this has nothing to do with any stochastic integral of $X$ whatsoever. It is at this point where we make the \textit{convention} to choose $c_i = \tau_i$ for the approximating simple processes $X_n$. This convention is the \textit{It\^{o}} convention, and it is not without good reason as we shall see below. The other main convention (out of infinitely many) is the \textit{Stratonovich} convention, in which we choose $c_i = \frac{\tau_i+\tau_{i+1}}{2}$.

With the It\^{o} convention in mind, let us consider simple processes $X_{\text{simp}}(t) = \sum_{i=1}^n X_i\1_{[\tau_i,\tau_{i+1})}(t)$ where the $X_i$ are $\mathcal{F}_{t_i}$-measurable. For such simple processes we have $\E \left(\int_0^t X_{\text{simp}}(\tau)\,dW(\tau) \right) = 0$, because
\begin{equation}
\begin{aligned}
    \E \left(\int_0^t X_{\text{simp}}(\tau)\,dW(\tau) \right) &= \sum_{i=1}^n \E\left( X_i( W_{t_{i+1}}-W_{t_i})\right)\\
    &= \sum_{i=1}^n \E(X_i) \underbrace{\E (W_{t_{i+1}}-W_{t_i})}_{=0}.
\end{aligned}
\end{equation}
We have made use of the independence of $X_i$ and $W_{t_{i+1}}-W_{t_i}$, which is a consequence of the $\mathcal{F}_{t_i}$-measurability of $X_i$. With this, we can further proceed to show (calculations are tedious but straightforward) the \textbf{It\^{o} isometry}
\begin{equation}\label{equation:ito_isometry}
\begin{aligned}
    \E \left( \int_0^t X_{\text{simp}}(\tau)\,dW(\tau) \right)^2 = \E \left( \int_0^t X_{\text{simp}}(\tau)^2\,d\tau \right).
\end{aligned}
\end{equation}

\paragraph{Step 3.} The plan now is to leverage this It\^{o} isometry and the $L^2$-approximation fact \cref{equation:L2_approximation_X(t)} to define the stochastic integral (which we also call the It\^{o} integral due to the convention made) for a general stochastic process $X(t)$. This is straightforward: for the approximating simple processes $X_n, X_m$ of $X$, clearly the difference $X_n-X_m$ is also a simple process (a `finer' one, so to speak). Then by the It\^{o} isometry \cref{equation:ito_isometry} we have
\begin{equation}
\begin{aligned}
    \E \left( \int_0^t X_n(\tau)-X_m(\tau) \,dW(\tau) \right)^2 = \E \left( \int_0^t (X_n(\tau)-X_m(\tau))^2\,d\tau \right) \xrightarrow{n,m \to \infty} 0,
\end{aligned}
\end{equation}
i.e. the sequence 
\begin{equation}
\begin{aligned}
    \left(\int_0^t X_n(\tau)\,dW(\tau)\right)_n
\end{aligned}
\end{equation}
is Cauchy in the metric space $L^2(\Omega)$ (equipped with the $L^2$-norm). Since $L^2(\Omega)$ is complete, we have
\begin{equation}
\begin{aligned}
    \int_0^t X(\tau)\,dW(\tau) \coloneq L^2\text{-}\lim_{n \to \infty} \int_0^t X_n(\tau)\,dW(\tau).
\end{aligned}
\end{equation}
This is how we define the It\^{o} stochastic integral. We emphasize that the limit is \textit{in the $L^2$-sense}. One can also show that this limit does not depend on the choice of sequence of meshes used. The stochastic integral satisfies the following properties:
\begin{enumerate}
    \item Linearity, $\int_0^t aX(\tau)+bY(\tau) \,dW(\tau) = a\int_0^t X(\tau)\,dW(\tau) + b\int_0^t Y(\tau) \,dW(\tau)$.
    \item $\E\left( \int_0^t X(\tau)\,dW(\tau) \right)=0$.
    \item It\^{o} isometry, $\E \left( \int_0^t X(\tau)\,dW(\tau) \right)^2 = \E \left( \int_0^t X(\tau)^2\,d\tau \right)$.
\end{enumerate}
In this paper, by `stochastic integral' we mean the It\^{o} stochastic integral. 

We now remark briefly on It\^{o} vs Stratonovich. As mentioned above, the key difference between these two is that in the approximating sequence of simple processes for $X$, It\^{o} uses the sampling point $c_i=\tau_i$, while Stratonovich uses $c_i=\frac{\tau_i+\tau_{i+1}}{2}$ (the choice of $c_i$ is irrelevant in Riemann-Stieltjes). What does this mean, physically? The It\^{o} convention is based on the idea that since $t$ represents time, and at time $\tau_i$ we do not yet know what $W(s)$ will do on $[\tau_i,\tau_{i+1}]$, it is least presumptuous to use the known value of $X(\tau_i)$ in the approximation. As a consequence, the It\^{o} convention leads to many mathematically nice properties of the stochastic integral, perhaps most importantly that the process $\left( \int_0^t X(\tau)\,dW(\tau) \right)_t$ is a martingale. The Stratonovich convention is not without its strengths too, but we shall not pursue it here, instead directing the reader to the textbooks, and in particular the note \cite{smith2022itovsstrat}.

The It\^{o} convention is nice, true, but it also leads to an infamous side effect, namely
\begin{theorem}[Ito's lemma]\label{theorem:Ito_lemma}
    The following result is known as \textbf{It\^{o}'s chain rule}, the chain rule for stochastic calculus. Given
    \begin{align*}
        dX(t) = F(X(t),t)\,dt + G(X(t),t)\,dW(t),
    \end{align*}
    which is just shorthand (coming soon in the next subsection) for 
    \begin{align*}
        X(t) = \int_0^t F(X(\tau),\tau)\,d\tau + \int_0^t G(X(\tau),\tau)\,dW(\tau).
    \end{align*}
    Let $f = f(x,t)$ be a `nice' scalar function, i.e. assume smoothness for simplicity.
    Then
    \begin{align*}
        df(X,t) = \left(f_t + f_xF + \frac{1}{2}f_{xx}G^2\right)\,dt + f_xG\,dW(t).
    \end{align*}
\end{theorem}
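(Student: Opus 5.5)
We prove \cref{theorem:Ito_lemma} by the standard route: a second-order Taylor expansion along a partition of $[0,t]$, followed by identifying the limit of each sum, with the quadratic variation of the Wiener process (\cref{fact:path_properties_Wiener_process}) producing the correction term $\tfrac12 f_{xx}G^2\,dt$. To keep the technicalities down, we first assume that $f$ and its derivatives $f_t,f_x,f_{xx}$ are bounded and uniformly continuous, and that $F,G$ are bounded, adapted and continuous along paths. The general case follows by a localization (stopping-time) argument: stop the process when $|X|$, $|F|$ or $|G|$ first exceeds some level $R$, then let $R\to\infty$. Since the paper only ever uses the lemma formally, this level of rigour matches the surrounding exposition.

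Fix a partition $0=\tau_0<\dots<\tau_n=t$ with mesh tending to zero. Write $\Delta X_i = X(\tau_{i+1})-X(\tau_i)$ and $\Delta\tau_i=\tau_{i+1}-\tau_i$, and telescope
\begin{equation*}
f(X(t),t)-f(X(0),0)=\sum_i \bigl[f(X(\tau_{i+1}),\tau_{i+1})-f(X(\tau_i),\tau_i)\bigr].
\end{equation*}
Taylor-expand each summand around $(X(\tau_i),\tau_i)$ to obtain
\begin{equation*}
f_t\,\Delta\tau_i + f_x\,\Delta X_i + \tfrac12 f_{xx}(\Delta X_i)^2 + R_i,
\end{equation*}
where the remainder $R_i$ is controlled by the modulus of continuity of the second derivatives times $(\Delta\tau_i)^2+(\Delta X_i)^2$. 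The limits of the individual sums are then as follows. The sum $\sum_i f_t\,\Delta\tau_i$ is a Riemann sum converging to $\int_0^t f_t\,d\tau$. Substituting $\Delta X_i\approx F(X(\tau_i),\tau_i)\Delta\tau_i+G(X(\tau_i),\tau_i)\Delta W_i$, the sum $\sum_i f_x\,\Delta X_i$ converges to $\int f_xF\,d\tau+\int f_xG\,dW$. The stochastic part of this limit holds in $L^2$, precisely because $f_x$ is evaluated at the left endpoint $\tau_i$, which is the Itô convention; we make this rigorous through the simple-process approximation and the Itô isometry \cref{equation:ito_isometry}.

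The main obstacle is the quadratic term $\tfrac12\sum_i f_{xx}(X(\tau_i),\tau_i)(\Delta X_i)^2$. Expanding $(\Delta X_i)^2$ gives three pieces, of orders $(\Delta\tau_i)^2$, $\Delta\tau_i\,\Delta W_i$ and $(\Delta W_i)^2$. The first two vanish in the limit, since $\sum_i|\Delta\tau_i\Delta W_i|\le \max_i|\Delta W_i|\cdot t\to0$ by path continuity. For the third, we must show that
\begin{equation*}
\sum_i g_i\bigl[(\Delta W_i)^2-\Delta\tau_i\bigr]\to 0 \quad\text{in } L^2,
\end{equation*}
where $g_i=f_{xx}G^2$ evaluated at $(X(\tau_i),\tau_i)$. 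The plan is to expand the square of this sum. Cross terms vanish in expectation because $g_i$ is $\mathcal F_{\tau_i}$-measurable and the increments are independent with $\E[(\Delta W_i)^2-\Delta\tau_i]=0$. The diagonal terms are bounded by $\|g\|_\infty^2\sum_i\E[((\Delta W_i)^2-\Delta\tau_i)^2]=2\|g\|_\infty^2\sum_i(\Delta\tau_i)^2\to0$. This is the quantitative form of $dW^2=dt$, and it yields the correction $\tfrac12\int f_{xx}G^2\,d\tau$.

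Finally, we bound the remainder $\sum_i R_i$. We have $\sum_i(\Delta X_i)^2$ bounded in probability, by the quadratic-variation step just completed, while the modulus of continuity of the second derivatives along the path tends to zero uniformly as the mesh shrinks. Hence $\sum_i R_i\to0$ in probability. Collecting the limits, passing to an almost surely convergent subsequence, and removing the localization gives
\begin{equation*}
f(X(t),t)=f(X(0),0)+\int_0^t\Bigl(f_t+f_xF+\tfrac12 f_{xx}G^2\Bigr)d\tau+\int_0^t f_xG\,dW,
\end{equation*}
which is the integral form of the claimed differential identity.
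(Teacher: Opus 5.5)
The paper gives no proof of this statement. It quotes It\^{o}'s lemma as textbook background, notes that it is never used, and attributes the correction term only heuristically to the It\^{o} convention and the slogan $dW=(dt)^{1/2}$ of \cref{fact:path_properties_Wiener_process}. Your proposal is the standard textbook proof (essentially the one in \cite{oksendal2013stochastic}) and is sound at the level of rigour you claim. The central piece is your estimate $\mathbb{E}\bigl(\sum_i g_i[(\Delta W_i)^2-\Delta\tau_i]\bigr)^2\le 2\|g\|_\infty^2\sum_i(\Delta\tau_i)^2$, with the cross terms killed by the $\mathcal{F}_{\tau_i}$-measurability of $g_i$; this is the rigorous content behind the paper's $dW^2=dt$. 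Your route therefore actually derives the $\tfrac12 f_{xx}G^2$ term and shows where left-endpoint evaluation is used. The paper's treatment gains only brevity, which is all it needs since nothing downstream depends on the lemma.

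Two steps should be tightened, though neither is a real obstacle.

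First, $(\Delta X_i)^2$ is not literally $(F_i\Delta\tau_i+G_i\Delta W_i)^2$ with $F_i,G_i$ frozen at $\tau_i$, so your three-piece expansion needs one extra estimate. The discrepancy $e_i=\int_{\tau_i}^{\tau_{i+1}}(F-F_i)\,d\tau+\int_{\tau_i}^{\tau_{i+1}}(G-G_i)\,dW$ satisfies $\mathbb{E}\sum_i e_i^2\to0$ by the It\^{o} isometry, boundedness and path continuity. Cauchy--Schwarz then gives $\sum_i f_{xx}\bigl[(\Delta X_i)^2-(F_i\Delta\tau_i+G_i\Delta W_i)^2\bigr]\to0$ in probability. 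Alternatively, prove the formula first for step-function integrands, where the identity is exact, and then approximate.

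Second, your displayed expansion drops $f_{tx}\Delta\tau_i\Delta X_i+\tfrac12 f_{tt}(\Delta\tau_i)^2$. Their coefficients do not shrink with the mesh, so your stated remainder bound does not cover them. They do vanish directly, because $\sum_i|\Delta\tau_i\Delta X_i|\le t\max_i|\Delta X_i|$ and $\sum_i(\Delta\tau_i)^2\le t\max_i\Delta\tau_i$. Under your actual hypotheses (only $f_t,f_x,f_{xx}$ assumed uniformly continuous), it is cleaner to use the mean value theorem in $t$ and a second-order expansion in $x$ only. This gives $|R_i|\le\omega_{f_t}\,\Delta\tau_i+\tfrac12\omega_{f_{xx}}(\Delta X_i)^2$, with both moduli tending to zero along the path.
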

Contrast this to the classical chain rule, which would have that $df(X,t) = \left(f_t + f_xF\right)\,dt + f_xG\,dW(t)$. The extra $\frac{1}{2}f_{xx}G^2\,dt$ term is called the It\^{o} correction term, and ultimately is an artefact of It\^{o}'s convention and the quadratic variation property of $W(t)$, see \cref{fact:path_properties_Wiener_process}. We shall not discuss this aspect further, since it is not used in this paper. We will comment however that using the Stratonovich convention, there is no correction term and the Stratonovich chain rule is as normal.

The extension of the It\^{o} stochastic integral to higher dimensional processes is straightforward, and there are corresponding high-dimensional extensions of It\^{o}'s lemma. We leave that for future editions of this paper.

\subsection{Stochastic differential equations}\label{subsection:SDEs}
We have finally assembled the necessary tools to unpack the meaning of
\begin{equation}\begin{aligned}
    dX(t) = F(X(t),t)\,dt + G(X(t),t)\,dW(t).
\end{aligned}\end{equation}
As mentioned above, this intuitively means that at time $t$ the change $dX(t)$ comes from a deterministic component $F(X(t),t)\,dt$ and a stochastic component $G(X(t),t)\,dW(t)$. Without the stochastic component we simply have an ordinary ODE $dX(t) = F(X(t),t)\,dt$, so $X(t)$ is actually deterministic and its entire trajectory is determined by the ODE and the initial value $X(0)$, i.e. $X(t) = X(0) + \int_0^t F(X(\tau),\tau)\,d\tau$. With the noise component $dW(t)$, $X(t)$ becomes a bona fide random variable, since for each run its increments are random. Formally,

\begin{definition}\label{definition:SDEs}
    Let $\bm{F}(x,t) = (F_i(x,t))_{i \in [m]}$ and $\bm{G}(x,t) = (G_{ij}(x,t))_{i \in [m], j \in [\chi]}$ be measurable functions defined on $\mathbb{R}^m \times T$. We say the stochastic process $X=(X(t))_{t \in T} = (X_{1}(t),\dots,X_{m}(t))_{t \in T}$ is governed by/a solution of the \textbf{stochastic differential equation (SDE)}
    \begin{equation}
    \label{equation:SDE_def1}
    \begin{aligned}
        d\bm{X}(t) &= \bm{F}(\bm{X}(t),t)\,dt + \bm{G}(\bm{X}(t),t)\,d\bm{W}(t)\\
        \bm{X}(0) &= \bm{X}_0 \in \mathbb{R}^m
    \end{aligned}
    \end{equation}
    if for every $t \in T$ it holds that
    \begin{equation}
    \label{equation:SDE_def2}
    \begin{aligned}
        \bm{X}(t) = \bm{X}_0 + \int_0^t \bm{F}(\bm{X}(\tau),\tau)\,d\tau + \int_0^t \bm{G}(\bm{X}(\tau),\tau)\,d\bm{W}(\tau).
    \end{aligned}
    \end{equation}
    Equivalently, in component form this reads
    \begin{equation}\begin{aligned}
        dX_{i}(t) &= F_i(\bm{X}(t),t)\,dt + \sum_{j=1}^\chi G_{ij}(\bm{X}(t),t)\,dW_{j}(t)\\
        X_{i}(t) &= X_{i,0} + \int_0^t F_i(\bm{X}(\tau),\tau)\,d\tau + \sum_{j=1}^\chi \int_0^t G_{ij}(\bm{X}(\tau),\tau)\,dW_{j}(\tau).
    \end{aligned}\end{equation}
    That is, \cref{equation:SDE_def1} is really just shorthand for \cref{equation:SDE_def2}. Here $\bm{W}$ is a $\chi$-dimensional Wiener process, the time-integral is an ordinary Riemann/Lebesgue integral, and the $\bm{W}(\tau)$-integral is the It\^{o} stochastic integral defined above. $\bm{F},\bm{G}$ are called the drift and diffusion (coefficients) respectively.

    Note that the components $X_i$ are in general not independent, since they could be driven by overlapping Wiener processes $W_j$'s.
\end{definition}
Here we shall simply assume the existence and uniqueness of solutions to a given SDE, which generally hold if the drift $\bm{F}$ and diffusion $\bm{G}$ terms are `sufficiently nice'. We refer the reader to the textbooks mentioned above for the required nice properties and the proofs, based on a mix of probability and PDE theory. Here are a few examples of (single-component) SDEs:
\begin{example}\hfill
\label{example:examples_of_SDEs}
\begin{enumerate}
    \item Brownian motion with drift: $dX(t) = F\,dt + G\,dW(t)$. Here $F,G$ are constants.
    
    \item Ornstein-Uhlenbeck process: $dX(t) = \gamma(X_0-X(t))\,dt + G\,dW(t)$. Here $\gamma>0$ and $G$ are constants.

    \item Geometric Brownian motion: $dX(t) = FX(t)\,dt + GX(t)\,dW(t)$. Here $F,G$ are constants.
\end{enumerate}
\end{example}

Stochastic processes which are governed by SDEs satisfy many nice properties:
\begin{fact}\label{fact:properties_SDE_processes}
Let $X$ be a stochastic processes given in \cref{definition:SDEs}. Then
\begin{enumerate}
    \item The sample paths $X(\omega)$ are continuous.
    
    \item Drift-martingale decomposition:
    \begin{equation}
    \begin{aligned}
        \bm{X}(t) = \bm{X}_0 + \underbrace{\int_0^t \bm{F}(\bm{X}(\tau),\tau)\,d\tau}_{\text{drift}} + \underbrace{\int_0^t \bm{G}(\bm{X}(\tau),\tau)\,d\bm{W}(\tau)}_{\text{martingale}}.
    \end{aligned}
    \end{equation}

    \item $X$ is a Markov process.
\end{enumerate}
That $X$ is Markov can perhaps be intuitively understood by its increments $dX(t)$ depending only on the current state $X(t)$ through $F(X(t),t)$ and $G(X(t),t)$. The fact that the sample paths $X(\omega)$ are continuous means that SDEs driven by Wiener processes are not suitable to model processes that have jumps. These must be modelled using SDEs driven by other processes, say the L\'{e}vy process.
\end{fact}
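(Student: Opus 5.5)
The statement bundles three properties, and I would prove them in the order listed, working under the standing assumption that $\bm{F},\bm{G}$ are globally Lipschitz in $x$ with linear growth. This is the ``sufficiently nice'' condition under which existence and pathwise uniqueness were assumed above.

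\textbf{Continuity and decomposition.} Write $\bm{X}(t)=\bm{X}_0+\bm{A}(t)+\bm{M}(t)$, where $\bm{A}(t)=\int_0^t \bm{F}(\bm{X}(\tau),\tau)\,d\tau$ and $\bm{M}(t)=\int_0^t \bm{G}(\bm{X}(\tau),\tau)\,d\bm{W}(\tau)$. This is exactly the defining identity of \cref{definition:SDEs}, so the decomposition itself is immediate. Two things then remain.

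\begin{enumerate}
\item The drift part $\bm{A}$ has continuous paths. It is a Lebesgue integral of a locally integrable integrand, so it is continuous (indeed absolutely continuous) in $t$.
\item The stochastic part $\bm{M}$ is a martingale with continuous paths. I would first treat the approximating simple processes $X_n$ from Step 2 of the construction of the It\^{o} integral. For these, $t\mapsto\int_0^t X_n\,dW$ is a finite sum of terms $X_i(W_{t\wedge t_{i+1}}-W_{t\wedge t_i})$. Each such term is continuous in $t$, since $W$ is (\cref{fact:path_properties_Wiener_process}). Each is also a martingale, by the same independent-increments computation used above to show that $W$ is a martingale, conditioning on $\mathcal{F}_{t_i}$.
\end{enumerate}

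Next I pass to the limit. Applying Doob's inequality (\cref{theorem:Doob_martingale_ineq}, $p=2$) to the difference of two simple integrals, and then the It\^{o} isometry, gives
\[
\E\sup_{\tau\le t}\Big|\int_0^\tau (X_n-X_m)\,dW\Big|^2\le 4\,\E\int_0^t |X_n-X_m|^2\,d\tau \to 0 .
\]
So the approximations are Cauchy uniformly on $[0,t]$ in $L^2$. Extracting a subsequence that converges fast enough and applying Borel--Cantelli yields almost-sure uniform convergence, hence a continuous version of $\bm{M}$. The martingale identity $\E(M_t\,|\,\mathcal{F}_s)=M_s$ passes to the limit because conditional expectation is an $L^2$-contraction.

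\textbf{Markov property.} This is the main obstacle, and I would handle it with a flow (restart) argument. For fixed $s$ and $x\in\mathbb{R}^m$, let $\bm{Y}^{s,x}(t)$, $t\ge s$, be the unique solution of the SDE started at time $s$ from $x$ and driven by the shifted Brownian motion $\widetilde{\bm{W}}(u)=\bm{W}(s+u)-\bm{W}(s)$. By the independent-increments property in \cref{definition:wiener_process}, $\widetilde{\bm{W}}$ is independent of $\mathcal{F}_s$. Since $\bm{X}$ restricted to $[s,\infty)$ solves the same equation from the $\mathcal{F}_s$-measurable initial value $\bm{X}(s)$, pathwise uniqueness gives $\bm{X}(t)=\Phi_{s,t}(\bm{X}(s),\widetilde{\bm{W}})$ almost surely. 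Here $\Phi_{s,t}$ is a jointly measurable solution map.

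The delicate points are the joint measurability of $\Phi_{s,t}$ in $(x,\text{path})$ and the substitution of a random initial condition. I would try to obtain both first from Picard iteration: each Picard iterate is measurable in $(x,\widetilde{\bm{W}})$, and the iterates converge in $L^2$ uniformly in $x$ on compacts, thanks to the Lipschitz bounds. The conclusion then follows from the standard freezing lemma. For $Z$ that is $\mathcal{F}_s$-measurable and $V$ independent of $\mathcal{F}_s$, one has $\E[f(\Phi(Z,V))\,|\,\mathcal{F}_s]=g(Z)$ with $g(x)=\E f(\Phi(x,V))$. Applying this with $Z=\bm{X}(s)$ and $V=\widetilde{\bm{W}}$ shows that $\E(f(\bm{X}(t))\,|\,\mathcal{F}_s)$ is a function of $\bm{X}(s)$ alone. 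This equals $\E(f(\bm{X}(t))\,|\,\bm{X}(s))$, which is the Markov property.
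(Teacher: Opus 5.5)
The paper does not actually prove this Fact. Item 2 is just the defining identity \cref{equation:SDE_def2} with the two integrals labelled. Items 1 and 3 are deferred to the cited textbooks, with existence and uniqueness assumed under unspecified ``niceness''. The only argument given for item 3 is the heuristic that the increments $d\bm{X}(t)$ see the past only through $\bm{X}(t)$.

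You instead supply the standard textbook proof, and your route is sound. Continuity of the drift part is immediate. Continuity and the martingale property of the It\^{o} part follow from Doob's $L^2$ inequality plus the It\^{o} isometry on simple integrands, together with a Borel--Cantelli subsequence. Your restart/freezing argument, $\bm{X}(t)=\Phi_{s,t}(\bm{X}(s),\widetilde{\bm{W}})$ with $\widetilde{\bm{W}}$ independent of $\mathcal{F}_s$, is exactly the rigorous content of the paper's heuristic.

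What your route buys is an explicit set of hypotheses under which the Fact holds. For merely measurable $\bm{F},\bm{G}$ it can fail. The Markov property rests on uniqueness for the equation restarted from arbitrary $(s,x)$; without it, solutions can be spliced into non-Markov ones. Without a growth condition the stochastic part is in general only a local martingale, even for smooth coefficients with unique solutions: $dX=-X^2\,dW$ is solved by the reciprocal of a three-dimensional Bessel process. So your Lipschitz/linear-growth assumption is doing real work. The paper's treatment buys only brevity, which suffices for how it uses the Fact, e.g.\ to factorise the multi-time pdf in \cref{proposition:channel_description_pdf_SEH/SEU}.

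Two things to make explicit when you write it up. First, the $L^2$ construction, the Doob step and the martingale claim all need $\mathbb{E}\int_0^t|\bm{G}(\bm{X}(\tau),\tau)|^2\,d\tau<\infty$. Under linear growth this follows from the Gronwall bound $\mathbb{E}\sup_{\tau\le t}|\bm{X}(\tau)|^2<\infty$. Second, there is a simpler way to handle the delicate substitution of the random initial value than the Picard route, one that avoids joint measurability altogether:
\begin{itemize}
\item For countably-valued $\bm{X}(s)$, the indicators of its level sets are $\mathcal{F}_s$-measurable, so they pass through the integrals on $[s,t]$.
\item The freezing identity then follows directly from the independence of each $\Phi_{s,t}(x_k,\widetilde{\bm{W}})$ from $\mathcal{F}_s$.
\item For general $\bm{X}(s)$, the Lipschitz estimate $\mathbb{E}|\Phi_{s,t}(x,\widetilde{\bm{W}})-\Phi_{s,t}(y,\widetilde{\bm{W}})|^2\le C|x-y|^2$ makes $g$ continuous for bounded Lipschitz $f$. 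You can then pass to the limit, and a monotone class argument extends the result to general $f$.
\end{itemize}
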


Finally, since stochastic processes governed by SDEs are Markov, this makes them amenable to study using tools from Markov process theory. One particularly important aspect of a process $\bm{X}$ is its associated pdf, $p(x,t)$. It can be shown that the time evolution of $p(x,t)$ is governed by the partial differential equation known as the \textbf{Fokker-Planck equation} (also known as the Kolmogorov forward equation):
\begin{equation}
\label{equation:Fokker-Planck}
\begin{aligned}
    \frac{\partial p(x,t)}{\partial t} = -\sum_{i=1}^m \frac{\partial}{\partial x_i} \left[F_i(x,t)p(x,t)\right] + \frac{1}{2}\sum_{i,j=1}^m \frac{\partial^2}{\partial x_i \partial x_j} [(\bm{G}\bm{G}^T)_{ij}(x,t)p(x,t)].
\end{aligned}
\end{equation}
In the special case $m=\chi=1$ we are reduced to
\begin{equation}
\begin{aligned}
    \frac{\partial p_X(x,t)}{\partial t} = -\frac{\partial}{\partial x} \left[F(x,t)p_X(x,t)\right] + \frac{1}{2}\frac{\partial^2}{\partial x^2} [G(x,t)^2p_X(x,t)].
\end{aligned}
\end{equation}
More generally, the multi-time joint pdf
\begin{equation}
\begin{aligned}
    p(x_n,\tau_n; \dots ;x_1,\tau_1) = \prod_{i=2}^n p(x_i,\tau_i|x_{i-1},\tau_{i-1})\cdot p(x_1,\tau_1).
\end{aligned}
\end{equation}
can be obtained from Fokker-Planck because the transition pdfs $p(x_i,\tau_i|x_{i-1},\tau_{i-1})$ can all be obtained from \cref{equation:Fokker-Planck}, with respective initial conditions $p(x_i,\tau_{i-1}|x_{i-1},\tau_{i-1})=\delta(x_i-x_{i-1})$.

\end{document}